\documentclass[11pt]{article}
\usepackage[margin=1in]{geometry}
\usepackage[utf8]{inputenc}
\usepackage{amsmath,amssymb,amsthm}
\usepackage{enumitem}
\usepackage{graphicx}
\usepackage{booktabs}
\usepackage{array}
\usepackage[svgnames]{xcolor}
\usepackage{colortbl}
\definecolor{abC}{HTML}{2A78D6}
\definecolor{naC}{HTML}{EB6834}
\usepackage{authblk}
\usepackage{hyperref}
\newcommand{\dburl}{\cite{Lee_2026_bicycle_race_database}}%
\usepackage[section]{placeins}%

\usepackage{tikz}
\usetikzlibrary{arrows.meta,positioning,calc}
\usepackage{subcaption}
\usepackage{float}%
\definecolor{leftq}{HTML}{2A78D6}
\definecolor{rightq}{HTML}{EB6834}
\definecolor{checkq}{HTML}{1BAF7A}
\graphicspath{{figures/}}

\newcommand{\Aut}{\operatorname{Aut}}

\newcommand{\ann}{\operatorname{ann}}
\newcommand{\rank}{\operatorname{rank}}
\theoremstyle{plain}
\newtheorem{theorem}{Theorem}[section]
\newtheorem{proposition}[theorem]{Proposition}

\newtheorem{corollary}[theorem]{Corollary}

\theoremstyle{definition}

\theoremstyle{remark}

\title{Bicycle Race: small two-block codes on the Pareto frontier}
\author{Gideon Lee\thanks{\href{mailto:gideonlee@uchicago.edu}{\texttt{gideonlee@uchicago.edu}}}}
\affil{Pritzker School of Molecular Engineering, University of Chicago}
\date{\today}

\begin{document}
\maketitle

\nocite{Lin_PRA_2024_2bga}
\begin{abstract}
Two-block group-algebra codes are among the simplest quantum LDPC codes, and many of them are
translation invariant on a twisted torus, which suits neutral-atom arrays and other platforms with movable qubits.  We ask how far they go at
small sizes when searched systematically across all groups. We use algebraic constraints and screens to design a search pipeline capable of enumerating broad and complete swathes of two-block codes on a laptop. With it we census about $10^8$ two-block codes with up to $320$ qubits at check weights five to nine. Several of these censuses are exhaustive, and support complete statements; for instance, the Gross code is, up to equivalence, the only abelian weight-six two-block code with $n\le200$ to reach $kd^2/n=12$, no weight-six two-block code with $n\le300$ beats $[[288,16,16]]$, and the known $[[64,18,8]]$ is the best two-block code by $kd^2/n$ across weights five to nine with $n\le94$. The censuses are also large enough to give insights into where good codes live -- they concentrate in a handful of groups; abelian groups give the best codes at small $n$ and non-abelian groups at large $n$, with the crossover appearing to move out as the check weight grows; and high quality codes become $100$ to $1000$ times rarer per draw from weight six to weights eight and nine.  Using the group algebra and the census data to steer a statistical search finds four to fifteen times as many good codes as uniform draws in equal time.  Among the codes found are $[[102,20,10]]$ and $[[120,24,10]]$ at check weight nine, which we simulate in noisy syndrome-extraction circuits, and $[[186,20,18]]$, a weight-nine code with every check in a $5\times5$ window.  At check weights seven to nine, our best codes exceed the best published $kd^2/n$ with exact distance at the same check weight in most length bands up to $320$ qubits.  All codes are available in a public database.
\end{abstract}

\section{Introduction}
\label{sec:intro}

Two-block codes are one of the simplest possible classes of LDPC codes~\cite{MacKay_IEEEIT_2004_sparse_graph,Kovalev_PRA_2013_kronecker,Panteleev_Quantum_2021_bposd,Lin_PRA_2024_2bga}. For a decade after their introduction, they were mainly studied by Pryadko and collaborators, who set the foundation for much of the field~\cite{Kovalev_PRA_2013_kronecker,Wang_arXiv_2022_gb_distance_bounds,Lin_PRA_2024_2bga}. At small system sizes, they also constitute some of the best codes, by almost any metric: one need only look as far as the Gross code~\cite{Bravyi_Nature_2024_bb_codes} to see this sentiment backed by titans of industry. The advent of the Gross code ushered in an explosion of interest in two-block codes, with a massive amount of work done in the last two years alone (Table~\ref{tab:literature-sources} lists those that contribute codes to our comparison).

Besides simplicity, the other big boon of two-block codes is that many of them are embeddable in 2D and translationally invariant. This is always possible for abelian groups generated by at most two elements, which includes every cyclic group, via a twisted torus~\cite{Bravyi_Nature_2024_bb_codes,Liang_PRXQ_2025_twisted_tori}; abelian groups that need three generators and non-abelian groups may require larger unit cells. Translation invariance also permits matching decoders: bivariate bicycle codes, and more generally two-dimensional translation-invariant codes, can be decoded by minimum-weight matching~\cite{Sahay_arXiv_2026_matching_decoder,Tan_arXiv_2026_generalized_matching}. While translationally invariant, their checks generically involve a large amount of crossing. However, the recent rise of platforms with movable qubits~\cite{Bluvstein_Nature_2022_coherenttransport,Bluvstein_Nature_2024_logicalprocessor,Pino_Nature_2021_qccd} means that the crossing of a connection is immaterial -- far more important is efficiency of movement. The work of Ref.~\cite{Viszlai_ACMTQC_2026_qsieve} shows that bivariate bicycle codes on an ordinary torus are almost preternaturally co-designed for atom movement -- any syndrome extraction schedule reduces to rigid movements with only two transfers between static and mobile traps; the same construction applies on twisted tori, with at most four rigid moves per stencil vector as on an ordinary torus (App.~\ref{sec:layout}). 
Given that atom transfer is at the moment the noisiest part of an acousto-optic deflector (AOD) movement schedule~\cite{Manetsch_Nature_2025_6100qubits,Bluvstein_Nature_2024_logicalprocessor,Wang_ChinesePhysicsLetters_2026_fast_transfer}, as well as the fact that non-rigidity in motion can be responsible for further heating effects~\cite{Bluvstein_Nature_2022_coherenttransport,Manetsch_Nature_2025_6100qubits}, we argue that the rise of neutral atom platforms demands we at least take this gift horse out of the stable for a ride.

Accordingly, logic in two-block codes is an extremely active line of work, with schemes ranging from the flagship architecture of IBM built on the Gross code~\cite{Yoder_arXiv_2025_tour_de_gross}, to automorphism and fold-transversal gates for bivariate bicycle codes~\cite{Eberhardt_arXiv_2024_bb_logical_operators}, to entangling gates from generalized Dehn twists for some classes of bivariate bicycle codes~\cite{Tiew_IEEEIT_2025_dehn_twists}, to lifted surgery over abelian~\cite{Berent_arXiv_2026_lifted_surgery} and non-abelian two-block codes~\cite{Pandey_arXiv_2026_nonabelian_lifted_surgery}. Two-block codes also number among the most frequent running examples in general surgery schemes for qLDPC codes~\cite{Williamson_arXiv_2024_gauging_logical,He_arXiv_2025_extractors,Cowtan_arXiv_2025_transversal_surgery,Swaroop_arXiv_2024_universal_adapters,Zheng_arXiv_2025_high_rate_surgery}. Translation automorphisms make such measurement gadgets cheaper to address: IBM's architecture shifts logicals into a fixed measurement unit by automorphisms~\cite{Yoder_arXiv_2025_tour_de_gross}. Every two-block code has such translations, which we expect will greatly simplify logic. With this in mind, we will not revisit logic here.

This brings us to this work. Recent attention has turned to other kinds of lifted product and qLDPC codes (see App.~\ref{app:literature} for a thorough but non-exhaustive survey). While two-block codes are great for movement, and great for logic, they seem to lose out to more complex lifted constructions in terms of raw parameters \cite{Kasai_arXiv_2026_orthogonality_barrier,Lee_arXiv_2026_ultra_high_rate,Bhardwaj_arXiv_2026_high_rate_processors,Hong_arXiv_2026_rate_fifth, Zhao_arXiv_2026_ultra_high_rate_atoms}. This fits the usual narrative -- the simpler and easier to operate a code is, the worse its parameters. In this work, we strive to show that this is not the case, at least, in the small code regime (here taken as $n \leq 320$) and at the check weights comparable to other high rate codes. Along the way we also unearth new good codes at check weights $5$ to $9$. Our method is nothing more than algebra-constrained code search for good codes.

This paper contains no new construction, only a more stubborn search. Two-block codes are as old as qLDPC codes themselves, and searching them is not new either \cite{Lin_PRA_2024_2bga,Wang_arXiv_2023_two_block,Bravyi_Nature_2024_bb_codes}. What we do differently is search them harder: we search more exhaustively, more carefully against the literature, and fast enough to do so on a laptop. This work does not exist in a vacuum, and is enabled by many recent advances in both the theory literature~\cite{Lin_PRA_2024_2bga,Eberhardt_arXiv_2024_bb_logical_operators,Symons_arXiv_2025_bb_covering_graphs,Hirasaki_arXiv_2026_lifting_lp,Guemard_IEEEIT_2025_lifts_css,Sabo_arXiv_2026_semisimple_bb} as well as distance estimation techniques~\cite{Dumer_IEEEIT_2017_distance_verification,Pryadko_JOSS_2022_qdistrnd,Cuellar_SwarmEvolComput_2021_genetic_distance,Webster_arXiv_2026_distance_finding,Lee_arXiv_2026_ultra_high_rate}. The initial rounds of brute force uniform sampling were large enough that we could make statistical observations about where good codes lived, with rather striking concentration by group structure. This allowed us to develop targeted sampling methods, further increasing our yield of good codes. It turns out that this alone is enough to put two-block codes back along almost the entire Pareto front at small sizes. Perhaps the most interesting outcome is not any single code but a large enough database revealing a map of where good codes live (Sec.~\ref{sec:where}). Good codes concentrate in a few groups, and abelian and non-abelian groups trade places as $n$ grows. These observations steered the later rounds of search. We offer some explanation for why some groups give logical qubits, and offer the open question of why some groups give large distance to the community.

The remainder of this paper is structured as follows: In the rest of the introduction, we motivate weight nine codes, and then take a brief detour to discuss AI methodology. In Sec.~\ref{sec:setup}, we introduce the codes and metrics used in our search, followed by a description of the search space and search pipeline. In Sec.~\ref{sec:results}, we discuss a selected slice of our results, including several standout codes, the Pareto frontier for check weights seven to nine, exhaustive censuses for weight-five and weight-six codes, and a description of the accompanying code database. In Sec.~\ref{sec:where}, we describe our insights into where good two-block codes live, and describe the targeted sampling routine. Throughout the main text, our discussion is informal. We defer all technical and mathematical material to an extended appendix, of which the last few sections are almost purely tables and figures.  

\paragraph{On weight nine. } Weight nine codes play a special role in this work. Our work unearths a large number of pretty good small weight-nine two-block codes. It is tempting to credit this to a superior search engine, but the truth is that weight-nine two-block codes are severely underexplored; the large number of codes shortlisted and presented in this work is perhaps as much a function of the small number of codes presented in the literature. The Lin--Pryadko census~\cite{Lin_PRA_2024_2bga}, the source of most lower-weight codes, stops at weight eight. Of the CSS codes with $n\le320$ in our literature set, $826$ have check weight eight but only $19$ have weight nine, and they were all from 2026, see Refs.~\cite{Hong_arXiv_2026_rate_fifth,Bhardwaj_arXiv_2026_high_rate_processors,Qian_arXiv_2026_multiagent_qldpc,Lee_arXiv_2026_ultra_high_rate, Zhang_arXiv_2026_routing_codes,Gu_arXiv_2026_planar_grid}. 

Why are weight-nine codes sparse in the literature? The first reason is practicality -- until recently weight nine looked quite impractical; weight nine with long-range connections even more so. Two recent developments have changed this: (i) the rise of movable platforms, whether atoms \cite{Bluvstein_Nature_2024_logicalprocessor,Bluvstein_Nature_2022_coherenttransport,Manetsch_Nature_2025_6100qubits} or ions \cite{Pino_Nature_2021_qccd} has made long-range checks cheap \cite{Viszlai_ACMTQC_2026_qsieve,Xu_NatPhys_2024_constant_overhead_atoms}. (ii) High-rate constructions at weight nine \cite{Bhardwaj_arXiv_2026_high_rate_processors,Hong_arXiv_2026_rate_fifth} and beyond \cite{Lee_arXiv_2026_ultra_high_rate,Yang_arXiv_2026_gala} show that such weights can be worth operating. We push this a little further by investigating weight-nine codes with 2D translational invariance (see App.~\ref{sec:layout}), which combine both the potency of (ii) and co-design with the rectangular AOD structures of (i).

The second reason is size: for any fixed $n$, the number of two-block codes grows drastically with check weight. As an example, per group of order $100$, there are $5.9 \times 10^{11}$ pairs of blocks $a, b$ (see Sec.~\ref{ssec:two-block} for the definition) containing the identity; we exhausted weight nine only to $n\le94$; extending to $n=100$, dominated by the $52$ groups of order $48$, would be the largest exhaustive census of this work, roughly a night or two on the laptop. Distances also grow with check weight, making rejection during search significantly slower as well. Furthermore, good codes are themselves rarer: against our own best code at each check weight, a uniformly drawn weight-nine code comes within $80\%$ of it a hundred to a thousand times less often than a weight-six one (Sec.~\ref{sec:where}). As such, to be able to make a thorough attack on weight nine at all does seem to require careful search design: the space grows, each code costs more to check, and near-best codes are rarer per draw. An unsteered search would pay all three costs at once. 

Finally, what about weight ten? We did not attempt this: everything in this work ran on a laptop.

\begin{figure*}[tbp]
\centering
\includegraphics[width=\linewidth]{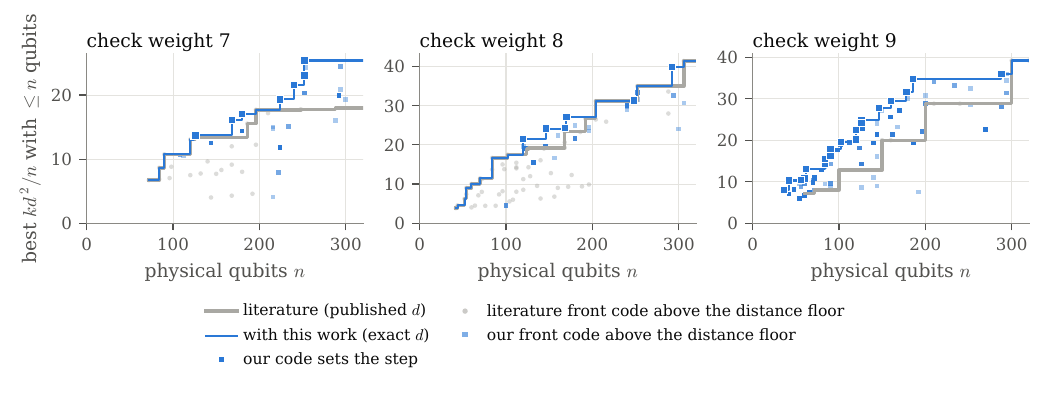}
\caption{The best $kd^2/n$ available with at most $n$ physical qubits, for CSS
codes of check weight seven, eight and nine and $n\le320$.  Gray: the codes from our literature set alone; blue: with the codes of this work.  Squares mark our codes that set a step; faint points are all codes on the $(n,k,d)$ Pareto front whose distance clears the floor of their respective bands. A step can carry over to larger $n$ through direct sums (two copies of a
code keep $kd^2/n$ and double $n$). $kd^2/n$ weights distance heavily and was the figure of merit our searches optimized; the same comparison for $kd/n$, for the overhead $n/k$ against $d$, and as Pareto fronts is in Apps.~\ref{app:fairness} and~\ref{app:pareto}.}
\label{fig:headline}
\end{figure*}

\paragraph{Hardware. } All computations in this work ran on two laptops: an
Apple MacBook Pro with an M5 chip (10 CPU cores, 16\,GB memory), which ran
the searches and the exact distance computations, and a 2020 13-inch MacBook
Pro with an Intel Core i5-1038NG7 (4 cores, 8 threads, 2.0\,GHz, 16\,GB
memory), which ran the independent verification (dist-m4ri, QDistRnd,
QDistEvol) and some of the censuses (weight five, and weight eight for $n\le100$).  Censuses used up to 8 worker processes; no run exceeded about
8 hours, and longer jobs were split into resumable chunks over several
nights.  No GPU or cluster time was used.

\paragraph{AI methodology. } It has become (hopefully!) standard to provide an AI disclosure where AI is used. A quick scan of the arXiv reveals that a growing share of work in this field is now somehow AI-powered.  Rather than the usual one-line AI disclosure, I instead offer a brief description of AI methodology, which I hope will become more standard in AI-assisted research. 

\textit{Literature} -- Claude was used for large-scale literature review. The finer details of this literature review process are in App.~\ref{app:literature}. I had two priorities here. First, given that I was trying to construct Pareto frontiers (see App.~\ref{app:pareto}), I felt I could not afford to miss out on good codes in the literature due to human error. Second, I did not want to fail to attribute theory results already in the literature. LLMs readily re-derive known results without flagging that they are known, so AI-assisted work is at particular risk of under-attribution. Every citation in this paper has been checked by hand.

\textit{Software} -- Prior to prompting, a brief plan was written by hand with two explicit kinds of gates. The first set of gates were human gates, which were focused on catching errors in implementation. This involved requiring outlines and notebooks performing certain tasks that were interpretable and testable. The second set of gates were soundness gates. These were focused on fixing critical interfaces, e.g.\ when a code is identified, it must provide two check matrices satisfying commutativity. This ensured modularity and boundedness in software-related tasks by fixing start and end points -- exactly the regime in which LLMs excel in this very transient moment. Soundness gates also allowed us to loosen up a little when optimizing search -- it is easier to experiment with search when you know that whatever happens it must return a valid two-block code.\footnote{I thank Thiago Bergamaschi for enlightening discussions on this idea.} Nevertheless, we still established strict verifiability bounds. In particular, I do not know C.\footnote{At least, it's been more than a decade since I used C, and I do not remember it.} For verifiability purposes, I asked Claude to write this project completely in Python, and fortunately, numba already proved fast enough for our purposes. The hand-written plan was refined in discussion with Claude, and then Claude was used to write all code. After establishing the bounds of the task, I observed far fewer errors than expected -- most of my corrections to the code were for legibility and reuse. It was only near the end of the plan that I started to notice significant drift, most of which was fairly benign and of the form of Claude over-doing tasks.

In terms of exporting results, I have tried to make sure that almost no substantive new result in this paper has to depend on trusting AI-written code: for upper bounds on code distance, we provide explicit witnesses for reported distances in the public database~\dburl, and the exact distances of our best codes with $d\lesssim15$ are confirmed by an independent open-source engine (dist-m4ri). Beyond that they rest on our engine, albeit cross-checked by independent upper bound searches for lighter logicals.

\textit{Theory} -- Fortunately (or unfortunately!) there is actually not much new theory in this work. All reductions and screens we used were already in the literature in some form -- most of the work was adapting these to our context, which mostly meant small extensions beyond abelian codes\footnote{ All new proofs use only standard techniques.}. Insofar as I did try to develop some new theory for the search, the resulting screens were simply not as effective, or marginal over what we already had (see App.~\ref{sec:screens-unused}). Claude played two important rules here. First, it enabled very quick iteration through testing each literature statement as a search screen. Second, it played an important role in pattern matching from the initial round of uniform sampling data, which allowed us to refine `where the good codes live' and shift to a targeted sampling. It also taught me the basics of Wedderburn theory, which eased the reading and verification against standard textbooks \cite{Knapp_2016_advanced_algebra,Webb_2016_finite_group_representations}.

\textit{Writing --} All prose in this document is written by me\footnote{The exception is App.~\ref{app:literature}, which concerns how the literature code set was constructed, which was AI-drafted and only human-edited for clarity.}, although most technical formatting (of text, tables, and figures) was AI-assisted. I expect most readers of this work to be LLMs. Nevertheless, if you are human and were to spend your hours reading this, the least I can do is to also spend my hours writing it. It is my hope that communication in our field retains a significant degree of human participation. 

\paragraph{Code availability.} The code library will only be made available with a forthcoming work, for which this work primarily serves as a benchmark. The main methods should be reproducible from their descriptions in the appendix. We emphasize that all results remain verifiable from this text and the accompanying database.

\paragraph{Acknowledgements.} I thank in particular Willers Yang for extensive discussion of the content of this work. I also thank Guo Zheng and Tomas Jochym-O'Connor for useful discussions. I thank Aashish Clerk, Liang Jiang, and Zhang Jiang for giving significant time, space, and support in pursuing independent interests.

\section{Codes, Metrics, and Search}
\label{sec:setup}

\subsection{Two-block group-algebra codes}
\label{ssec:two-block}

To describe the primary objects of this work, we now provide a brief informal description of \textit{two-block group-algebra codes} \cite{Lin_PRA_2024_2bga}. For interested but uninitiated readers, mathematical vocabulary is developed in App.~\ref{app:prelim}, followed by a more detailed description and review of two-block codes in App.~\ref{sec:codes}. 

Now let $G$ be a group and let $a, b \in \mathbb{F}_2[G]$ be a pair of group algebra elements. A two-block group-algebra code~\cite{Lin_PRA_2024_2bga}, written $C(a, b)$, has $n = 2 | G|$ qubits in left and right blocks, with each qubit within each block corresponding to a group element. The check matrices are
\begin{equation}
  \begin{aligned}
    &H_X = \bigl(L(a)\ \big|\ R(b)\bigr),\qquad
    &H_Z=\bigl(R(b)^{\mathsf T}\ \big|\ L(a)^{\mathsf T}\bigr),
  \end{aligned}
\end{equation}
with $L(R)$ being the left (right) regular representation. $a, b$ may be arbitrary group algebra elements; the check matrices commute due to the fact that the left and right regular representations commute.

Bivariate bicycle~\cite{Bravyi_Nature_2024_bb_codes} and generalized bicycle~\cite{Kovalev_PRA_2013_kronecker,Panteleev_Quantum_2021_bposd} codes are specializations of two-block codes. Both are instances of two-block codes over abelian groups -- the distinction between abelian and non-abelian two-block codes will be important for the discussion of our results.

\subsection{Metrics}

By convention, many papers use $kd^2/n$, which is inspired by the BPT bound~\cite{Bravyi_PhysRevLett_2010_tradeoffs}, either as a search objective \cite{Lu_arXiv_2026_divisor_driven,Lu_arXiv_2026_weight8_bb,Hirasaki_arXiv_2026_lifting_lp,Qian_arXiv_2026_multiagent_qldpc,CruzBenito_arXiv_2026_llm_bb_search}, or to rank and compare codes \cite{Symons_arXiv_2025_bb_covering_graphs,Hong_arXiv_2026_rate_fifth,Lee_arXiv_2026_ultra_high_rate,Wang_arXiv_2024_coprime_bb,Steffan_arXiv_2025_tile_codes,Aydin_arXiv_2026_coset_codes,Wang_arXiv_2026_bb_surface_codes,Sabo_arXiv_2026_semisimple_bb,Guemard_IEEEIT_2025_lifts_css}. This is not universal, e.g.\ Ref.~\cite{Bravyi_Nature_2024_bb_codes} primarily compares against the surface code, Ref.~\cite{Bhardwaj_arXiv_2026_high_rate_processors} evaluated their codes in terms of codesign for logical operation, and Refs.~\cite{Okada_arXiv_2026_pair_partition,Okada_arXiv_2026_high_girth_cpm_lifts} used rate and girth as metrics. Nevertheless, our primary metric in most of our initial searches was $kd^2/n$ -- when we informally say `best' code in some class, we mean highest $kd^2/n$. One might object that this over-privileges distance (the converse being that it under-weights high-rate codes) and does not make much sense when the study of finite block codes systematically excludes the thermodynamic limit. In App.~\ref{app:fairness}, we also provide an accounting for alternative metrics such as $kd/n$ which were not directly used in the search. 

Beyond $kd^2/n$, our other search metric was the Pareto front, which is inspired by the Unitary Foundation's qLDPC challenge leaderboard~\cite{qldpc_challenge_2026}. We may think of a code as a point $(n, k, d)$, and say that a code $A$ \textit{dominates} a code $B$ if $n_A\le n_B$, $k_A\ge k_B$ and $d_A\ge d_B$, with $(n_A,k_A,d_A)\ne(n_B,k_B,d_B)$.  The \emph{Pareto front} of a set of codes is the subset that no code of the set dominates.
 Codes with equal parameters do not dominate each other, so both stay on the front. Our second round of searching focused on pushing out the Pareto front. 

One caveat that stands throughout this work is that it seems like non-CSS codes far outperform the best CSS codes on these metrics~\cite{Lee_arXiv_2026_ultra_high_rate,Khesin_arXiv_2026_mirror_codes,CruzBenito_arXiv_2026_llm_bb_search}. We only stage our search against the best CSS codes, but will highlight where non-CSS codes sit on these scales. 

\subsection{The search space}

There are many, many, many groups over which we can build two-block codes, and each group yields a very large number of codes, which makes the thorough or exhaustive search hard. To get some sense of the size of the problem, for $n\le200$, that is for groups of order at most $100$, we have $183$ abelian groups, and $863$ non-abelian groups \cite{Besche_IJAC_2002_millennium}. To get some sense of the problem size per group, we consider the codes $C(a, b)$ with check weight $w$ over some group $G$. Each block $a, b$ can be thought of as a subset of members of $G$, and the check weight splits into the two blocks as $w = w_a + w_b$, the check weights coming from $a, b$ respectively, so after fixing a group, we still have ${|G| \choose w_a} \times {|G| \choose w_b}$ codes, which grows combinatorially with check weight. 

Fortunately, large is not the same as infinite. The remarkable project of enumerating every finite group up to order $2000$ has been performed by Besche, Eick and O'Brien in Ref.~\cite{Besche_IJAC_2002_millennium}, and distributed
with GAP. Furthermore, over and between these groups, large swathes of these codes are equivalent. Additionally, most of the space has $k = 0$, and the parts where $k > 0$ are fairly structured, in a way that can be charted out by theory. This theory is hinted at in the next couple sections and developed in the appendix. Finally, if one were only interested in finding good codes, then most codes can be screened out without a full distance calculation. Under such reductions and aggressive screening, the scale of the problem becomes such that on the laptop we could perform exhaustive search over every group with $w = 5$ up to $n = 300$, $w = 6$ up to $n = 300$, and $w = 8$ up to $n = 100$. Surveys up to $w = 7$ remain pretty thorough, such that an exhaustive sample of quite a large parameter range should be accessible via a cluster. Thorough sampling for $w = 9$ is a fairly hopeless task -- up to $n \leq 200$ there are some $10^{17}$ possible $a, b$ pairs, which is exacerbated by the fact that the weight-nine codes have larger distances, and take much longer to screen per code. This highlights the importance (and effectiveness!) of the targeted sampling step in the search pipeline below. 

\subsection{The search pipeline}
\label{ssec:search_pipeline}

Fast distance estimators from the literature mean that calculating distance no longer seriously bottlenecks the code search process, at least for small codes~\cite{Dumer_IEEEIT_2017_distance_verification,Pryadko_JOSS_2022_qdistrnd,Cuellar_SwarmEvolComput_2021_genetic_distance,Webster_arXiv_2026_distance_finding,Lee_arXiv_2026_ultra_high_rate}. In terms of two-block codes, the bottleneck becomes enumeration. As such, this work has focused on getting enumeration to a manageable level via theory-based screens and reductions. 

Our search pipeline is illustrated in Fig.~\ref{fig:pipeline-flow}. The upper level illustrates the outer loop which picks out candidates, whereas the lower level illustrates the per-candidate pipeline every code goes through. We first describe the pipeline at a high level, and then we describe the theory that goes into each stage. A full treatment is deferred to App.~\ref{sec:search}, but here we provide a high-level description, which is less particular to two-block codes. Table~\ref{tab:steps} lists the steps of the pipeline together with their literature sources where available, and how much impact it has on the search, with evaluation details in caption. We emphasize that every step is sound: a reduction can only identify codes that are equivalent, not decide that codes are inequivalent, and a screen can only reject a code that provably misses its anchor, not accept one that seems to be good. Note that the impact column of the table can vary greatly with the group and the check weight, and should be read as orders of magnitude.

\paragraph{Uniform Sampling Pipeline.} We begin by dividing our work into bands via $w, n$, with $w = 5, 6, 7, 8, 9$, and within each check weight splitting into $n \leq 130, 130 < n \leq 170, 170 < n \leq 220, 220 < n \leq 320$. For each check weight and band, we chose an anchor code, which is the best CSS code (measured by $kd^2/n$) we could find in the literature for that check weight in that size band. We note that some check weights are more competitive than others -- in particular check weight six was probably the most heavily mined set of two-block codes, and contained the undisputed middleweight champion Gross code; check weight eight contained the pair-partition codes of Ref.~\cite{Okada_arXiv_2026_pair_partition}, the coset codes of Ref.~\cite{Aydin_arXiv_2026_coset_codes} and the covers of Ref.~\cite{Symons_arXiv_2025_bb_covering_graphs}, and check weight nine contained mitten codes \cite{Bhardwaj_arXiv_2026_high_rate_processors,Hong_arXiv_2026_rate_fifth}. In comparison check weight seven appeared the most under-sampled in the literature. The cataloguing process is detailed in App.~\ref{app:literature}, the table of anchor codes is provided in Table~\ref{tab:anchors}. The following search was then run either in `BPT mode', which optimized for $kd^2/n$ or `frontier mode', which optimized for codes that would push the Pareto frontier while also having a distance above the floor $d - 2$ of their anchor codes.

In the first pass of the code search, we perform uniform sampling over all groups and block pairs. In particular, we set a time budget for sampling over each group given in the GAP SmallGroups library \cite{GAP4,Besche_IJAC_2002_millennium}. For each group we fix the identity (see Cor.~\ref{cor:normal-form}, or Ref.~\cite{Lin_PRA_2024_2bga}) to appear in $a, b$ and then sample over the remaining group elements to complete $a, b$ for some check weight. The candidate codes are reduced up to equivalence -- the basic moves are listed in Table~\ref{tab:steps} and detailed in App.~\ref{sec:equivalences}. They are then screened using cheap bounds on $k$ and $d$. These are also listed in Table~\ref{tab:steps} and detailed in App.~\ref{sec:screens}; in brief most codes wind up being trivial, or having $k$ or $d$ sufficiently bounded that given the value of $n$ we already know that the code cannot match its anchor. Table~\ref{tab:w6-funnel} (App.~\ref{sec:search}) gives a numerical account of these steps for our weight-six census. 

At this point we finally perform calculations. Distance estimation is performed as a sandwich: a randomized estimator works its way downwards, while an exact calculator works its way upwards. If at any point, either reveals that the code will not meet its anchor we discard it. Once it comes within some distance ($90 - 100\%$ of the anchor parameters for different bands), we shortlist it and move on. The distance estimation stack is detailed in App.~\ref{sec:distance} and only implements techniques already known to the literature~\cite{Dumer_IEEEIT_2017_distance_verification,Lee_arXiv_2026_ultra_high_rate,Webster_arXiv_2026_distance_finding,Cuellar_SwarmEvolComput_2021_genetic_distance}. A further constant factor improvement comes from known automorphisms of the underlying groups and Tanner graphs.

At this point, we will have obtained some shortlist of codes. Our symmetry reductions capture most equivalences but there remain equivalences such as across groups which are not captured. To deal with that, these codes are deduplicated via reduction to a canonical form determined on the Tanner graph by pynauty~\cite{pynauty,McKay_JSC_2014_nauty}. Finally, the exact distances are calculated for the survivors. Logical error witnesses are kept for each code, and particularly good codes are verified independently by open-source distance estimation packages.

\paragraph{Targeted Sampling.} After an initial uniform sampling effort, the results are inspected to see what the winners have in common. Overwhelmingly it appeared that winners tended to cluster in particular groups, which suggested group theoretic structure. This additional structure is formalized in Sec.~\ref{sec:where} and App.~\ref{app:targeting}, and divides along the lines of abelian and non-abelian groups. In particular, for abelian groups, the effort was concentrated on finding groups with non-zero $k$. On the other hand, non-abelian groups tended to yield non-zero $k$ already, so the effort was concentrated on obtaining explicit lower bounds for $k$. Subsequent rounds of code search then switched to targeted sampling, which gave more time to groups obeying these conditions. This increased hit rate for survivors by four to fifteen times across the board, although we expect this is pretty dependent on details such as the available anchors in the literature at the time. We still maintained uniform sampling runs, as groups without good guarantees could still (though rarely) give good codes.

\begin{figure}[p]%
\centering
\scalebox{0.75}{{\hyphenpenalty=10000\exhyphenpenalty=10000
\begin{tikzpicture}[
    box/.style={draw,rounded corners=2pt,align=center,font=\small,text width=3.7cm,
                minimum height=1.65cm,inner sep=3pt,fill=white},
    outer/.style={box,fill=black!4},
    flow/.style={->,black!60,line width=0.6pt},
    note/.style={font=\footnotesize,text=black!60,align=center},
    >={Stealth[length=5pt]}]
  \node[outer] (t) at (0,2.75) {\textbf{target}\\check weight $w$, $n$ band};
  \node[outer] (u) at (4.4,2.75) {\textbf{uniform sampling}\\over all groups\\and block pairs};
  \node[outer] (wt) at (8.8,2.75) {\textbf{targeted sampling}\\toward the structures\\theory picked out};
  \node[outer] (th) at (13.2,2.75) {\textbf{theory}\\why are the hits good?};
  \draw[flow] (t) -- (u);
  \draw[flow] (th) -- (wt);
  \node[box] (p) at (0,0) {\textbf{reduce}\\pairs up to\\equivalence};
  \node[box] (s) at (4.4,0) {\textbf{screens}\\cheap bounds\\on $k$ and $d$};
  \node[box] (d) at (8.8,0) {\textbf{distance}\\witnesses first,\\then exact};
  \node[box] (r) at (13.2,0) {\textbf{record}\\fronts against\\the literature};
  \foreach \i/\j in {p/s,s/d,d/r} \draw[flow] (\i) -- (\j);
  \coordinate (bus) at (0,1.37);
  \draw[black!60,line width=0.6pt] (wt.south) -- (wt.south |- bus) -- (bus);
  \draw[black!60,line width=0.6pt] (u.south) -- (u.south |- bus);
  \draw[flow] (bus) -- (p.north);
  \node[note,above right=0pt and 2pt] at (bus) {candidates $(G,a,b)$};
  \draw[flow] (r.north) -- node[note,right=2pt] {hits} (th.south);
  \draw[-{Stealth[length=7pt]},black!25,line width=3pt] ([yshift=-0.25cm]p.south west) -- ([yshift=-0.25cm]r.south east)
       node[midway,below=3pt,note] {fewer candidates at each stage, each more expensive to test};
\end{tikzpicture}}}%
\caption{The search pipeline. Top: the outer loop.  A target (check weight and $n$
band) is first sampled uniformly over groups and block pairs; we then return to theory to try to
explain what the good codes have in common. Later rounds then sample toward
those structures.  Bottom: every candidate $(G,a,b)$ is reduced up to
equivalence, screened with cheap bounds on $k$ and $d$, and given a distance,
witnesses first and exact for the survivors; the results are compared with the
literature and may feed the next round.}
\label{fig:pipeline-flow}

\vspace{0.1em}

\centering
\footnotesize
\begin{tabular}{@{}l>{\raggedright\arraybackslash}p{0.42\linewidth}>{\raggedright\arraybackslash}p{0.2\linewidth}lr@{}}
\toprule
stage & step & origin & effect & measured \\
\midrule
reduce & two-sided translations of each block (N1) & published~\cite{Lin_PRA_2024_2bga} & major & up to $4\times10^3\times$ \\
 & automorphisms of $G$ on both blocks (N2) & published~\cite{Lin_PRA_2024_2bga} & major & up to $3\times10^2\times$ \\
 & antipode with block swap; $X\leftrightarrow Z$ (N3, N4) & published~\cite{Lin_PRA_2024_2bga} & major & ${\sim}2\times$ \\
\midrule
sample & shared factors: both blocks singular in a common component of $\mathbb{F}_2[G]$ (abelian) & published~\cite{Panteleev_Quantum_2021_bposd,Lu_arXiv_2026_divisor_driven} & major & $7$--$15\times$ \\
 & rank targeting in simple components (non-abelian) & new & major & ${\sim}4\times$ \\
 & ideal targeting in non-simple components & new & minor & ${\sim}1.6\times$ \\
 & time per group weighted by the odd part of $|G|$ & new & minor & ${\sim}1\times$ \\
\midrule
screens: $k$ & connectivity & published~\cite{Lin_PRA_2024_2bga} & major & $3$--$90\%$ \\
 & $k$ from annihilators & published~\cite{Lin_PRA_2024_2bga,Eberhardt_arXiv_2024_bb_logical_operators} & major & $40$--$90\%$ \\
 & odd-weight blocks over $2$-groups ($k=0$) & new & major & whole groups \\
\midrule
screens: $d$ & hypergraph-product detection & published~\cite{Lin_PRA_2024_2bga} & minor & ${\le}3\%$ \\
\midrule
distance & witness-first ladder & published~\cite{Webster_arXiv_2026_distance_finding,Cuellar_SwarmEvolComput_2021_genetic_distance} & major & all rejections \\
 & connected-cluster search & published~\cite{Dumer_IEEEIT_2017_distance_verification} & major & -- \\
 & pruning bounds & published~\cite{Lee_arXiv_2026_ultra_high_rate} & major & $40$--$65\times$ \\
 & symmetry roots: one start per orbit of Tanner-graph automorphisms & published~\cite{Okada_arXiv_2026_pair_partition,McKay_JSC_2014_nauty} & major & $4$--$11\times$ \\
\midrule
record & canonical keys & published~\cite{McKay_JSC_2014_nauty} & minor & $0$--$36\%$ \\
\bottomrule
\end{tabular}
\captionof{table}{The steps of the pipeline (Fig.~\ref{fig:pipeline-flow}). The step names refer to the names of particular statements developed in the appendix, but omitted from the main text for readability. Origin:
published, with source, or new.  Effect: \emph{major}, the step removes a large share and the cost of a census
depends on it; \emph{minor}, rarely decisive.  Measured: for reductions, the
factor by which the step cuts the number of classes; for sampling, the yield of good codes
against uniform draws at equal time;
for screens, the share of the codes
reaching it that it rejects; for the distance stage, the factor by which the step cuts the nodes of the exact search tree, against the same search without it, measured at $d=10$ on a two-block and a lifted-product code (Table~\ref{tab:roots-102-mitten}; the pruning saving grows with $d$); the connected-cluster search is the exact engine itself, as in dist-m4ri, and App.~\ref{sec:distance} compares the whole engine with dist-m4ri. These measurements should be regarded as orders of magnitude, and are obtained from probes and census records detailed in App.~\ref{sec:search}.}
\label{tab:steps}
\end{figure}

\section{Results}
\label{sec:results}

\subsection{Several standout small codes}

We begin our results with a handful of small codes worth knowing about. This curated set of codes is displayed in Table~\ref{tab:standout}, and was chosen primarily for their high values of $kd^2/n$. The first six weight-nine codes in the table each sit above the best published $kd^2/n$ for weight-nine CSS codes of their size (see App.~\ref{app:pareto} and Fig.~\ref{fig:dvn-w9}). Interestingly, these six are all over abelian groups. For weight nine, our best non-abelian code ties the best abelian one at $n \simeq 300$: $[[294,24,21]]$ over a non-abelian group of order $147$ reaches $kd^2/n=36.0$, equal to the abelian $[[288,32,18]]$. The gap appears to close with $n$: in our four length bands the best non-abelian $kd^2/n$ is $0.44$, $0.67$, $0.73$ and $1.00$ of the best abelian one.  Non-abelian codes may therefore take over at sizes beyond those explored here, though the weight-nine non-abelian search was sampled, not exhaustive. At the smaller sizes, these also look somewhat like high-rate codes, with $[[102, 20, 10]], [[120, 24, 10]]$ coming in at rate $\sim 1/5$. Higher rates cost some distance, but the distance does not vanish: $[[270,72,6]]$, over $C_{45}\times C_3$ encodes $72$ logical qubits at rate $0.27$.  However, at such rates, pair-partition codes reach much larger distance at weight eight (e.g.\ $[[232,62,12]]$~\cite{Okada_arXiv_2026_pair_partition}). Sitting in the middle is $[[294,36,16]]$, which encodes $36$ logical qubits at distance $16$. 

In contrast, many of the good codes we found at lower check weights, say five, were non-abelian, e.g.\ the $[[288, 4, 20]]$ code. We note that weight-five codes were exhausted for $n\le300$ over all groups (Sec.~\ref{sec:exhausted}), so this is not a sampling artifact -- non-abelian weight-five codes do actually have better parameters. We note that the weight-eight anchors, given mostly by Refs.~\cite{Aydin_arXiv_2026_coset_codes,Okada_arXiv_2026_pair_partition} were particularly strong. The $[[292,22,23]]$ code we highlight reaches $kd^2/n=39.9$, the highest among our weight-eight codes.  Its closest rival is the coset code $[[306,22,24]]$ of Ref.~\cite{Aydin_arXiv_2026_coset_codes}, with the same $k$ on $14$ more qubits.  Its distance was published as an upper bound and confirmed via our engine, so it reaches $41.4$ and is the best weight-eight CSS code with $n\le320$ that we know of.  It is not the anchor of its band only because anchors were fixed on distances known to be exact at the time (App.~\ref{app:anchors}).\footnote{Our earliest comparisons, made before the anchors were fixed, stopped at $n=300$, which this code just exceeds; the length bands were later extended to $n\le320$.}  The band's actual anchor is the pair-partition $[[232,62,12]]$ of Ref.~\cite{Okada_arXiv_2026_pair_partition} ($38.5$), which sits in a different regime, at almost three times the rate. For scale, most of the best known stabilizer codes at the same $n, k$ have $d$ about $1.5$--$2\times$ larger, at much heavier checks, e.g.\ Ref.~\cite{Grassl_codetables} gives a code that is $[[102, 20, 16]]$ but has check weights $25$--$51$. 

Most of the chosen codes have a translation-invariant layout on a twisted torus, with the window column describing the window within which their checks would sit. The weight-five codes we chose to highlight have particularly small windows. All distances are exact, and each code comes with a witness; verification records are kept in the database~\dburl. Rather unfortunately, we did not discover new good codes with thickness two (such as the Gross code, see Ref.~\cite{Bravyi_Nature_2024_bb_codes}). As such, we primarily target platforms with movable qubits~\cite{Bluvstein_Nature_2022_coherenttransport,Bluvstein_Nature_2024_logicalprocessor,Pino_Nature_2021_qccd}, where crossings are free. We do not time movement schedules here as such numbers date quickly as hardware improves.  However, certain structural facts remain true. On a twisted torus every check is a translate of one stencil \cite{Liang_PRXQ_2025_twisted_tori} (App.~\ref{sec:layout}), so the construction of qSieve~\cite{Viszlai_ACMTQC_2026_qsieve} applies: each stencil vector takes at most four rigid translations of the ancilla array, as on an ordinary torus, and syndrome extraction needs only two transfers between static and mobile traps; qSieve times such schedules on ordinary tori at $3$--$6$\,ms.\footnote{We note that such timings depend greatly on assumed hardware, and every paper (this one included!) tends to quote numbers favoring their studied class of code.  Comparisons between codes are meaningful only when their schedules are timed on equal footing, which we do not attempt here.}

For two of these codes, we also provide LER estimates from simulated syndrome-extraction circuits, with noise at gates, preparation and measurement but none on idle qubits. We obtained syndrome extraction circuits by brute force search, which we did not attempt to optimize to any significant degree. Circuits are provided in the companion database~\dburl; the actual syndrome extraction circuits used have circuit distance $8$, and we leave the design of distance-preserving circuits to future work. For a decoder, we used the open source Relay-BP \cite{Muller_arXiv_2025_relay_bp}. In Fig.~\ref{fig:ler-102} we plot the results for the $[[102,20,10]]$ and $[[120,24,10]]$ codes. These are run each with only one ancilla per check. Both circuits are sequential ($X$ half then $Z$ half) and equivariant (one rigid move per CNOT layer). These simulations were run with gate and SPAM error only, as has become customary for neutral atom adjacent work \cite{Zhao_arXiv_2026_ultra_high_rate_atoms,Yang_arXiv_2026_gala,Hong_arXiv_2026_rate_fifth}. We stress that such simulations do not take into account the rather serious impact of transfer noise expected in many qLDPC codes. For these codes we expect transfer noise to be minimal due to their structure \cite{Viszlai_ACMTQC_2026_qsieve}. We performed $10$ rounds of syndrome extraction, for which we did memory experiments in $X$ and $Z$ and took the mean. We emphasize that these circuits have not been optimized, and simply provide a proof of principle of equivariant circuits suitable for atom motion with good distance and low overhead, and we expect a systematic study to vastly improve performance. Nevertheless, we discuss these results: Per logical qubit per round, $[[102,20,10]]$ reaches $p_L\approx2\times10^{-6}$ at $p=10^{-3}$, and $[[120,24,10]]$ about $4\times10^{-7}$. The larger code stays below the smaller one across the whole range, by about $6\times$ at $p=10^{-3}$ and $2$--$4\times$ for $p=1.5$--$3\times10^{-3}$, shrinking to $1.1$--$1.4\times$ near break-even.  Break-even, $p_L=p$, is at $p\approx0.57\%$ and $0.62\%$ respectively.

\begin{figure}[tbp]
\centering
\begin{subfigure}[b]{0.49\linewidth}
\centering
\includegraphics[width=\linewidth]{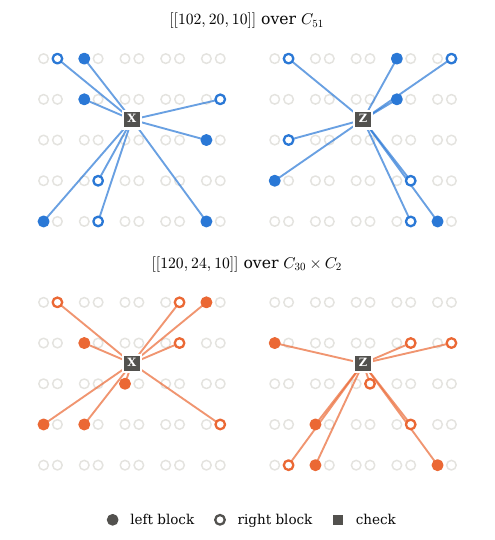}
\caption{$X$- and $Z$-check stencils.}
\label{fig:stencils-102-120}
\end{subfigure}\hfill
\begin{subfigure}[b]{0.49\linewidth}
\centering
\includegraphics[width=\linewidth]{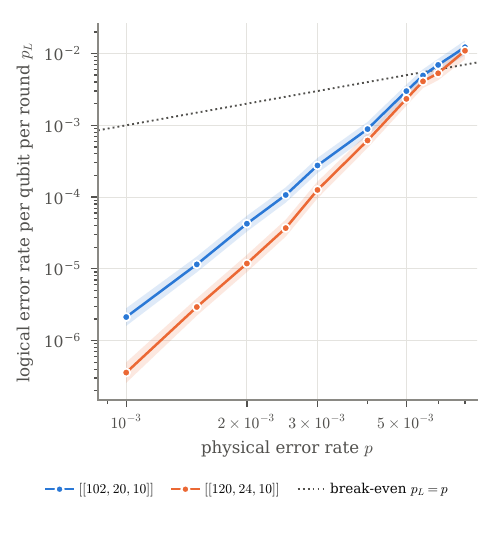}
\caption{Logical error rate.}
\label{fig:ler-panel}
\end{subfigure}
\caption{(a) The $X$ and $Z$ checks of each code on its most local twisted torus, which sit in a $5\times5$ window (App.~\ref{sec:layout}). Filled circles are left-block qubits, open circles right-block qubits, and the square is the check; every other check of the same type is a translation of the one shown. (b) Logical error rates of $[[102,20,10]]$ (blue) and $[[120,24,10]]$ (orange), per logical qubit per round, $p_L=1-(1-P_{\mathrm{fail}})^{1/(10k)}$, against the physical error rate $p$, averaged over $X$- and $Z$-memory experiments of $10$ rounds.  Circuits: one ancilla per check, circuit distance $8$, on $204$ and $240$ qubits.  Noise: depolarizing noise at gates, preparation, and measurement only, as in Refs.~\cite{Zhao_arXiv_2026_ultra_high_rate_atoms,Yang_arXiv_2026_gala}.  The decoder used is Relay-BP~\cite{Muller_arXiv_2025_relay_bp}.  Dotted: break-even, $p_L=p$.  Shaded bands show $95\%$ confidence intervals.}
\label{fig:ler-102}
\end{figure}

\begin{table*}[tbp]
\centering
\footnotesize
\resizebox{\linewidth}{!}{\begin{tabular}{@{}lrrllll@{}}
\toprule
code & $w$ & $kd^2/n$ & group & $a$ & $b$ & window \\
\midrule
$[[102,20,10]]$ & 9 & 19.6 & $C_{51}$ & $1 + x^{19} + x^{27} + x^{35} + x^{48}$ & $1 + x^{7} + x^{9} + x^{19}$ & 5 \\
$[[120,24,10]]$ & 9 & 20.0 & $C_{30}\times C_{2}$ & $1 + x^{6} + x^{16}y + x^{23} + x^{24}y$ & $1 + x^{2}y + x^{21}y + x^{29}$ & 5 \\
$[[126,16,14]]$ & 9 & 24.9 & $C_{63}$ & $1 + x^{17} + x^{21} + x^{31} + x^{60}$ & $1 + x^{6} + x^{19} + x^{28}$ & 5 \\
$[[150,24,13]]$ & 9 & 27.0 & $C_{15}\times C_{5}$ & $1 + y + x^{6}y^{2} + x^{7}y^{4} + x^{10}y^{3}$ & $1 + x^{5}y + x^{6} + x^{7}y$ & 5 \\
$[[186,20,18]]$ & 9 & 34.8 & $C_{93}$ & $1 + x^{7} + x^{52} + x^{57} + x^{65}$ & $1 + x^{20} + x^{31} + x^{32}$ & 5 \\
$[[210,28,16]]$ & 9 & 34.1 & $C_{105}$ & $1 + x^{56} + x^{66} + x^{72} + x^{94}$ & $1 + x^{15} + x^{19} + x^{61}$ & 6 \\
$[[270,72,6]]$ & 9 & 9.6 & $C_{45}\times C_{3}$ & $1 + x^{6}y^{2} + x^{10} + x^{25}y^{2} + x^{27} + x^{40}y$ & $1 + x^{3}y + x^{36}$ & 7 \\
$[[294,36,16]]$ & 9 & 31.3 & $(C_7\times C_7)\rtimes C_3$ & $1 + c + uv + u^{3}v^{3} + u^{4}v^{6}c + u^{5}v^{2}c$ & $1 + uv^{6} + u^{3}v^{2}$ & -- \\
$[[294,24,21]]$ & 9 & 36.0 & $(C_7\times C_7)\rtimes C_3$ & $1 + c + u + uv^{4} + u^{4}c^{2} + u^{4}v^{6}$ & $1 + u^{2}v^{5} + u^{3}v$ & -- \\
$[[292,22,23]]$ & 8 & 39.9 & $C_{146}$ & $1 + x^{7} + x^{47} + x^{110}$ & $1 + x^{9} + x^{44} + x^{61}$ & 6 \\
$[[252,16,20]]$ & 7 & 25.4 & $C_7\times D_{18}$ & $1 + r + xs + x^{3}r^{7}s$ & $1 + r^{4} + xr^{2}$ & -- \\
$[[294,18,20]]$ & 7 & 24.5 & $(C_7\times C_7)\rtimes C_3$ & $1 + c + v^{2}c^{2} + u$ & $1 + uv + u^{2}v^{4}$ & -- \\
$[[288,4,20]]$ & 5 & 5.6 & $C_{24}\times S_3$ & $1 + x^{8}s + x^{10}r$ & $1 + x$ & -- \\
$[[238,6,14]]$ & 5 & 4.9 & $C_{119}$ & $1 + x + x^{89}$ & $1 + x^{21}$ & 4 \\
$[[280,6,15]]$ & 5 & 4.8 & $C_{140}$ & $1 + x^{50} + x^{101}$ & $1 + x^{133}$ & 3 \\
\bottomrule
\end{tabular}
}
\caption{Standout codes of this work.  $C(a,b)$ with
$H_X=(L(a)\,|\,R(b))$ and $H_Z=(R(b)^{\mathsf T}\,|\,L(a)^{\mathsf T})$; $w$ is the check
weight.  For abelian groups $a,b$ are polynomials in generators $x$ (and $y$)
of the stated orders.  For the non-abelian groups they are words in a
presentation: $C_{24}\times S_3=\langle x\rangle\times\langle r,s\mid r^3=s^2=1,\ srs^{-1}=r^{-1}\rangle$
with $x$ of order $24$, $C_7\times D_{18}=\langle x\rangle\times\langle r,s\mid r^9=s^2=1,\ srs^{-1}=r^{-1}\rangle$ with $x$ of order $7$, and $(C_7\times C_7)\rtimes C_3=\langle
u,v,c\mid u^7=v^7=c^3=1,\ uv=vu,\ cuc^{-1}=u^4,\ cvc^{-1}=v^{\lambda}\rangle$, with $\lambda=4$ (a scalar action) for $[[294,18,20]]$ and $\lambda=2$ for $[[294,36,16]]$ and $[[294,24,21]]$; these are two non-isomorphic groups of order $147$.
Window: the smallest twisted-torus window, minimized over all our codes with
these parameters (--: no 2D layout).  Every distance is exact, with a
witness (Appendix~\ref{sec:verification}). The verification record of each code is kept in the
database. }
\label{tab:standout}
\end{table*}

\subsection{Moving the frontier: check weights seven to nine}
\label{sec:frontier}

Beyond $kd^2/n$, we also considered the Pareto frontier, as alluded to in the introduction. These Pareto frontier plots are contained in Appendix~\ref{app:pareto}. Our codes significantly push out the frontier for check weights seven to nine, and across the board, we find that two-block codes (whether new in this work or previously in the literature) dominate the frontier for small codes up to $n \sim 320$. The exception is that the high-rate end of the frontier tends to be captured by other codes, such as mitten or hyperbolic or pair-partition codes. 

As one broad snapshot of the frontier for weight-nine codes, in Fig.~\ref{fig:dvn-w9}, we plot the CSS codes with the best distance with at most $n$ qubits and at least $K$ logical qubits. For $K \leq 24$, our codes set this line, but from $K = 40$ on, other codes take over. In Fig.~\ref{fig:dvn-w7}, we show a similar plot for weight seven. Here, this work introduces fewer new points on the curve, but we note that every single point along these curves is a two-block code.

Finally, we note that these curves are primarily illustrative, as there is much more to a code than $n, k, d$, and a frontier is only as good as the database that created it. For reproducibility, the database, which is described in detail in Sec.~\ref{sec:database} and Appendix~\ref{app:literature}, is provided in Ref.~\dburl, with a literature cut-off of 29 September 2026. We invite corrections for codes from before 29 September that we have missed and should be on these plots.

\begin{figure}[tbp]
\centering
\begin{subfigure}[b]{0.54\linewidth}
\centering
\includegraphics[width=\linewidth]{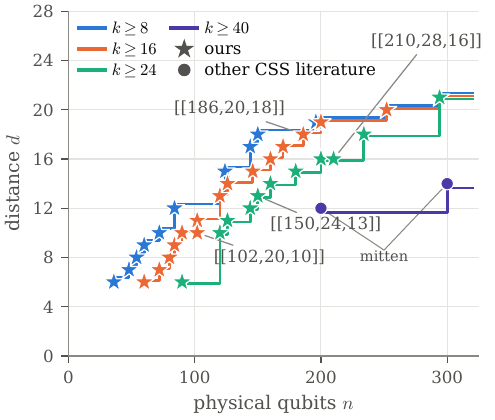}
\caption{Best distance against $n$.}
\label{fig:dvn-w9-panel}
\end{subfigure}\hfill
\begin{subfigure}[b]{0.43\linewidth}
\centering
\includegraphics[width=\linewidth]{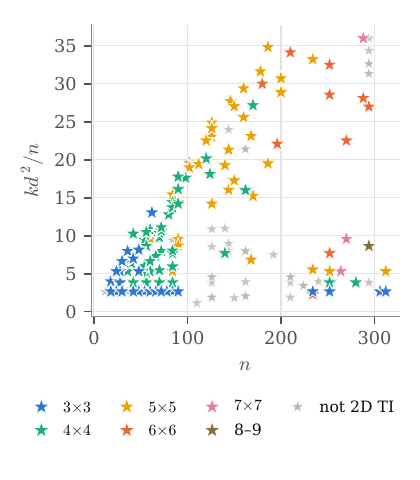}
\caption{Locality.}
\label{fig:locality-w9}
\end{subfigure}
\caption{Check weight nine.  (a) The best distance with at most $n$ physical qubits and at
least $k$ logical qubits, for several $k$, at check weight nine (including all CSS codes with check weight nine and $n\le320$ in the literature set).  Markers give
the code that sets each step, in the color of the largest $k$ it serves: stars, new with this work; squares, two-block codes from the literature; circles, other literature codes, tagged by family; faint, literature codes whose distance is only an upper bound.  This is the $(n,k,d)$ Pareto front sliced at fixed $k$, with $n$ kept on the
horizontal axis.  Check weight seven is in Fig.~\ref{fig:dvn-w7}, and weights five, six and eight in Fig.~\ref{fig:dvn-app}.  (b) Our weight-nine codes on the $(n,kd^2/n)$ plane, colored by the side of the smallest square window that holds every check on the code's most local twisted torus (App.~\ref{sec:layout}); gray: no translation-invariant 2D layout with one site per group element (abelian groups needing three generators, and non-abelian groups).}
\label{fig:dvn}\label{fig:dvn-w9}
\end{figure}

\subsection{What we can rule out: exhausted censuses}
\label{sec:exhausted}

Our search is fast enough to allow us to run exhaustive censuses of certain classes of codes on the laptop. Here exhaustive means that we check every single possible code in that class to see if they meet some particular standard of $kd^2/n$ or a frontier. Every code that falls short is rejected either by a logical lighter than required (a valid, often loose, upper bound on $d$) or because it has too few logical qubits.

Hence, in addition to finding particular new codes with good parameters, this work also allows us to make qualitative statements about the landscape of two-block codes. The scale of these censuses also played a key role in informing the content of Sec.~\ref{sec:where}.

The list of exhaustive censuses in this work include: 
\begin{itemize}
  \item Every two-block code at weight five up to $n = 300$, both abelian and non-abelian. 
  \item Every abelian two-block code at weight six for $n\le300$, and every non-abelian two-block code at weight six up to $n\le300$. 
  \item Every two-block code at weight eight up to $n\le100$, every group.
  \item Every two-block code at weight seven up to $n\le100$, and at weight nine up to $n\le94$, every group.
\end{itemize}

At small $n$ every check weight from five to nine is now exhausted (Table~\ref{tab:small-best}), and one code stands out: the $[[64,18,8]]$ of the Lin--Pryadko archive~\cite{Lin_PRA_2024_2bga}, over $C_4\times C_4\times C_2$ with $4{+}4$ blocks, whose distance we make exact here.  At $kd^2/n=18.0$ it is the best two-block code of any check weight from five to nine with $n\le94$: the censuses prove that no other code in this range reaches it.  Its group is a $2$-group, so at odd check weight it gives no code with $k>0$ at all (Prop.~\ref{prop:unit-blocks}). As such, weight nine cannot reach it \textit{in principle}. There are several other codes of similar size that are worth knowing about: at weight eight, $[[84,14,10]]$ ($16.7$; two classes, of which the one over $C_7\times S_3$ is a Lin--Pryadko archive code~\cite{Lin_PRA_2024_2bga}); at weight nine, $[[90,16,10]]$ ($17.8$), and $[[98,12,12]]$ ($17.6$). Longer lists, by weight and size, are in the database~\dburl.  Of course, code parameters alone do not tell the whole story. The small codes also differ in layout. Unfortunately, $[[64,18,8]]$ has no single-layer two-dimensional layout, since $C_4\times C_4\times C_2$ needs three generators. $[[90,16,10]]$ over $C_{15}\times C_3$ and $[[98,12,12]]$ over $C_7\times C_7$ both fit a $4\times4$ window on a single twisted torus, and so are the best small codes found by this work that live on a strictly two-dimensional layout.

\begin{table}[tbp]
\centering
\small
\resizebox{\linewidth}{!}{\begin{tabular}{@{}cllrlll@{}}
\toprule
$w$ & exhausted & best code & $kd^2/n$ & group & $a$ & $b$ \\
\midrule
5 & $n\le300$ & $[[60,4,8]]$~\cite{Lin_PRA_2024_2bga} & $4.27$ & $C_{30}$ & $1 + x^{10} + x^{26}$ & $1 + x^{21}$ \\
6 & $n\le300$ & $[[90,8,10]]$~\cite{Bravyi_Nature_2024_bb_codes} & $8.89$ & $C_{15}\times C_{3}$ & $y + y^{2} + x^{9}$ & $1 + x^{2} + x^{7}$ \\
7 & $n\le100$ & $[[90,12,9]]$~\cite{Lin_PRA_2024_2bga} & $10.80$ & $C_{15}\times C_{3}$ & $1 + x^{3}y + x^{10}y + x^{11}$ & $1 + x^{6}y^{2} + x^{13}y$ \\
8 & $n\le100$ & $[[64,18,8]]$~\cite{Lin_PRA_2024_2bga} & $18.00$ & $C_{4}\times C_{4}\times C_{2}$ & $1 + y + x + xyz$ & $1 + xy + x^{2}y^{3}z + x^{3}y^{2}$ \\
9 & $n\le94$ & $[[90,16,10]]$ & $17.78$ & $C_{15}\times C_{3}$ & $1 + y + x^{7}y + x^{10} + x^{11}$ & $1 + x^{3}y^{2} + x^{6} + x^{13}$ \\
\bottomrule
\end{tabular}
}
\caption{The best two-block codes with $n\le100$ by check weight $w$ ($n\le94$ at weight nine), all with exact distances; ``exhausted'' gives the range over which the census of that weight is complete, so no two-block code of that weight in the range has larger $kd^2/n$.  Codes with a citation are published; the others are ours.  Blocks as in Table~\ref{tab:standout}; $[[64,18,8]]$ needs three generators $x,y,z$ of orders $4,4,2$.  For comparison, sampling at weight nine found $[[98,12,12]]$ ($17.63$) just above this range.}
\label{tab:small-best}
\end{table}

This allows us to make definitive statements about certain classes of codes. One particularly striking statement is that the Gross code at $[[144,12,12]]$ is the unique best abelian weight-six two-block code with $n\le200$. It gives the only class of code (up to equivalence) at $kd^2/n=12$; the next best is $[[180,8,16]]$ at $11.4$. Over non-abelian groups it is also the best up to $n\le191$; at $n=192$ the non-abelian $[[192,12,14]]$ ($12.25$) beats it and $[[192,16,12]]$ ties it, with both codes known from the Lin--Pryadko archive~\cite{Lin_PRA_2024_2bga}. Table~\ref{tab:w6-top10} details the ten best weight-six codes with $n\le200$, of which nine were already published.

Over non-abelian groups, $[[288,4,20]]$, $[[264,4,19]]$ and $[[240,4,18]]$ beat every published weight-five code of distance above six, whose best is $kd^2/n=5.33$, at $5.56$, $5.47$ and $5.40$. Only two hyperbolic surface codes~\cite{Breuckmann_thesis_2018_homological_codes} reach higher, $[[80,18,5]]$ at $5.62$ and $[[150,32,6]]$ at $7.68$, through rate rather than distance; both sit below the distance floors of their bands. At weight six the non-abelian census obtained $[[288,16,16]]$ at $kd^2/n=14.22$, over two non-abelian groups of order $144$, exceeding the best published weight-six code in this range, the $[[254,14,16]]$ of Liang et al.~\cite{Liang_PRXQ_2025_twisted_tori} ($14.11$). One of its two classes had already been submitted to the qLDPC Challenge leaderboard~\cite{qldpc_challenge_2026}, with an upper bound on its distance; the census makes $d=16$ exact. As a consistency check, every published two-block code that we could rebuild and that lies within the scope of one of our exhaustive censuses ($919$ codes, at check weights five to eight) appears among that census's classes, identified by canonical key. We stress that since the censuses were exhaustive, the above-mentioned codes indeed have the highest possible $kd^2/n$ for two-block codes with check weights five and six and $n \leq 300$.

\subsection{The Code Database}
\label{sec:database}

The numbers and codes in this paper can be checked against a public database~\dburl, which holds three things:
\begin{enumerate}
  \item Our codes, which number about $12{,}400$, each specified by its group (a SmallGroup identifier from GAP), its blocks, its parameters $[[n, k, d]]$ with $d_X, d_Z$ separately, the method by which its distance was obtained, with its verification record, and a witness logical of weight exactly $d$: every code in the database has an exact distance, so its witness is tight (for three codes, one of $d_X, d_Z$ is only bounded below).  (Codes rejected by our censuses are not stored as codes; the coverage records keep their bounds on $d$, which may be loose.)
  \item The literature set of codes (App.~\ref{app:literature}), comprising $2{,}611$ published codes from $47$ papers, selected by the rules stated there, each with its source and location and with our status for it: confirmed (our exact distance equals the published value), lowered (with a witness), or taken as published. We note that we computed the distances for every code in the Lin--Pryadko archive, and found their upper bounds to be largely accurate; only $121$ out of the $1{,}660$ codes whose published distance was an upper bound had upper bounds which reflected only one of the $X$ or $Z$ sector distances without the other. This speaks to the accuracy of randomized upper bounds.
  \item Coverage records of all the censuses we carried out for this work.
\end{enumerate}

The literature set has a cut-off of 29 September 2026 and is versioned. We welcome corrections, both to our codes and to the literature set, in particular codes published before the cut-off that we have missed. If you use the database, please cite this paper, and for literature records also the papers they come from, which each record names.

\section{Where the good codes live}
\label{sec:where}

Our censuses are large enough that we can say where good codes are, not just which ones exist -- a key empirical observation is that they cluster in certain groups; this is what enables the targeted sampling half of our search pipeline. In this section, we informally discuss these observations. The data is provided in the public database~\dburl{} for anyone to take a gander. We provide a more formal treatment of one particular observation in App.~\ref{app:targeting}.

Our first observation concerns a fairly consistent split between abelian and non-abelian two-block codes. Table~\ref{tab:abelian-vs-nonabelian} quantifies this by our best codes in each category. Across check weights, abelian codes outperform non-abelian codes at smaller $n$, with a crossover at larger $n$. The larger the check weight is, the further out this crossover seems to be. Weights five and six are exhausted here, and weight eight for $n\le100$; weight seven is sampled, though a full census at these sizes looks within reach of a cluster. At weights eight and nine even a cluster would not exhaust the space, so those rows are statistical samples. For weight nine, the codes tie at the highest $n$ band, which may just be sampling luck on the non-abelian side -- our shortlists contained far more abelian than non-abelian codes.

Our second observation is a null statement concerning covers. Previous works have managed to systematically build bigger and better two-block codes by covering smaller good codes \cite{Symons_arXiv_2025_bb_covering_graphs}. At the same time, the distance of a cover is bounded by that of the base code -- in particular, the distance of the cover is at most $h$ times the distance of the base code, where $h$ is the number of times it covers the base code, whenever the base code has $k>0$~\cite{Hirasaki_arXiv_2026_lifting_lp,Symons_arXiv_2025_bb_covering_graphs,Guemard_IEEEIT_2025_lifts_css}. One question we hoped to answer with our large censuses is which of these effects win out? In the end, it seemed like we had just as many frontier codes that were covers as were primitive, and we were unable to obtain any consistent rule for when one outperformed the other.

Our third observation concerns the specific case of weight-five codes: our exhaustive census of the weight-five codes also reveals a best distance scaling like $\sqrt{n}$, consistent with the numerical fits in Ref.~\cite{Lin_PRA_2024_2bga}. The best analytical bound derived from an effective dimension, via Refs.~\cite{Hopkin_arXiv_2026_compactified_fracton,Arnault_arXiv_2025_bravyi_terhal_variant}, is $O(n^{2/3})$. We might then conjecture a bound that goes like $d \lesssim c \sqrt{n}$ for some constant $c$. For the codes up to $n = 300$ we know that abelian codes stay below $\sim 1.11\sqrt{n}$. Non-abelian codes with $k=2$ reach about $1.6\sqrt{n}$, but for $k \geq 4$, non-abelian codes stay below $\sim 1.18 \sqrt{n}$. This gives a stopping rule for when we can say a sampled distance is good enough. Figure~\ref{fig:w5-sqrt} shows the best weight-five distance against $n$ for abelian and non-abelian codes, with the line $d=1.18\sqrt{n}$; non-abelian codes pull ahead from $n\approx155$, and the abelian ones level off at $d=13$--$15$.

Our fourth observation is that good codes are not uniformly spread across groups. The most extreme case of this is the odd-weight $2$-group screen: at odd check weight, $2$-groups give no code with $k>0$ at all. However, even across the rest of the sample space, good codes concentrate in certain groups. In Table~\ref{tab:concentration}, we tabulate results for a selection of our censuses, in which we find that often $90 \%$ of short-listed codes concentrate in as small as $3$--$18\%$ of groups. The top groups for each of these censuses are also listed. Most strikingly, the top group for the weight-six abelian census for $n \leq 200$ contained $41 \%$ of the short-listed codes with $kd^2/n > 9$. 

Besides these cumulative numbers, we also list several stray observations on where good codes concentrate. The order of the group seems to matter, though less cleanly than the concentration above.  At weight nine, odd-order groups give good codes more often: in the non-abelian uniform run, $31$ of the $39$ codes above the bar came from odd-order groups, which received under a quarter of the draws. Much of this comes from a few groups: $24$ of the $31$ non-abelian codes come from just two, $(C_9\times C_3)\rtimes C_3$ and $C_3\times(C_9\rtimes C_3)$, so the parity of the order is partly the concentration of the previous paragraph seen through a coarse lens.  At weight seven the same ordering appears, but on too few codes to be sure (five above the bar from odd-order groups); at weight six it differs, with groups of order twice an odd number leading among abelian groups.  These numbers sound a little like numerology, and we were not able to turn them into a mechanism: the forecast of $k>0$ below, for instance, gives about the same chance to odd orders and to orders divisible by four.  Operationally we used the observation only by spending more sampling time on the most productive groups.

Our fifth observation concerns $k$. The idea is that many codes have $k=0$, but this often occurs for systematic reasons that allow us to rule out many groups before even building the codes (e.g.\ the odd-weight $2$-group rule). Perhaps more surprisingly, one can construct a statistical model for predicting $k$. 

To understand this, note that group algebras can be decomposed into components, analogous to familiar irrep spaces, via Wedderburn theory \cite{Knapp_2016_advanced_algebra,Webb_2016_finite_group_representations}. These are roughly speaking the finest block decompositions of the algebra (for abelian groups, these are just the characters). From the general formula for the $k$ of a two-block code (see Ref.~\cite{Lin_PRA_2024_2bga}, or Prop.~\ref{prop:k}), one requires $a, b$ to be singular in a common component (for abelian groups, where every component is a field, singular means vanishing) for a two-block code to have non-zero $k$. Investigating these components, which can be done on the level of the group (in contrast to the code) allows us to forecast how often $k>0$. In general, groups whose algebra has many small components tend to have $k > 0$ more often -- the converse observation is that large components are highly unlikely to be singular by chance. A forecast of how often $k>0$ can be built from the group alone, by modeling the images of the blocks in the components as random matrices, and empirically it works quite well (Appendix~\ref{app:targeted-sampling}), which means that uniform sampling really does look quite a lot like a random matrix process. Note that calculating $k$ once a code is obtained is fairly cheap; however, this lets you anticipate values of $k$ on the level of the group rather than the code. This is described in more technical detail in App.~\ref{app:components}.

Our final observation is that good codes become rarer as the check weight grows. To quantify this, we compare our short-listed codes obtained during uniform sampling runs and exhausted censuses against our best two-block codes at each check weight and length band. This measures how hard each weight's best is to reach by uniform draws. About $10^{-4}$ of all weight-six block pairs (split $3{+}3$, $n=170$--$300$) give a code within $80\%$ of the best, while uniform draws at weight nine reach that level only about $10^{-7}$--$10^{-6}$ of the time. For weight eight, a test run of about $2\times10^5$ uniform draws did not turn up any such near-best codes, and the exhaustive census at $n\le100$ gives about $2\times10^{-7}$. Every near-best weight-nine code we have came from targeted sampling. This rarity is one possible reason why check weight does not order our results: our best weight-eight code, $[[292,22,23]]$ ($39.9$), beats our best weight-nine code at $n\le300$ ($36.0$), since the weight-nine space is the largest and its near-best codes the hardest to reach.  Parity may be another: even weights keep the $2$-groups, which odd weights exclude (Prop.~\ref{prop:unit-blocks}), and give every pair $k\ge2$ for free, since two blocks of even weight both annihilate the sum of all group elements.

\paragraph{From observations to samplers. } We now discuss how we tried to turn some of these observations into targeted samplers. In each case, we ran separate experiments where we ran uniform sampling and targeted sampling for a particular strategy and compared the results. We only ran longer censuses if we observed a significant difference in these experiments. Not all our targeted sampling strategies worked, but the ones that did increased code yield by about four to fifteen times in fixed time tests compared to uniform sampling. 

As mentioned, we attempted to use the second observation, and sampled for covers and non-covers, but found no consistent rule for when covers win. We did, however, run several censuses sampling covers on purpose (these runs are labeled ``structured sampling'').

The fourth observation is the easiest to apply, but also the least satisfying -- since a small number of groups yield most good codes, we simply aim for these groups. One heuristic that appeared to work (weakly) here was to aim for either odd-order groups, or even-order groups with no nontrivial normal $2$-subgroup. However, we were unable to derive any systematic rule that improved on just re-sampling good groups. 

The fifth observation actually did lead to a successful targeted sampler. The idea here is to perform targeted sampling based on lower bounds for $k$. This splits into abelian and non-abelian groups. Uniform draws for abelian groups mostly give $k = 0$, so it sufficed to bound $k > 0$. This led to a \textit{shared-factor sampling} strategy. For non-abelian groups, most draws already give $k > 0$, so one requires a larger bound for $k$, which then becomes a hyperparameter for the search strategy. In this case, this led to a \textit{rank-targeting} strategy. These strategies can be derived from the decomposition of the group algebras into components; in brief, investigating components allows us with cheap linear algebra computations to aim for block pairs with non-trivial $k$. This decomposition as well as shared-factor and rank-targeting strategies are described in App.~\ref{app:components} and App.~\ref{app:targeted-sampling}.

We would have liked observations about $d$ as sharp as those about $k$.  The one structural bound we know, the quotient cap, gave no consistent rule for when covers win. We could not explain the parity pattern either, which seems to concern $d$ rather than $k$; although, as mentioned the parity pattern may simply be a coarse-grained view of the most concentrated group. We did try several screens on $d$, but they were all either too costly or too weak to pay off.  Targeted sampling therefore guarantees $k$ and leaves $d$ to the search.

We close with an important caveat. Targeted sampling greatly increases the yield of decent codes. As we see in this work, this includes some pretty good codes. However, it remains possible that the best codes hide in groups that are largely barren. As an example, three of the twelve best non-abelian weight-nine codes come from groups that have order twice an odd number, which were almost completely barren otherwise. At the moment, we have no systematic theory to explain this, but it suggests caution before throwing one's lot in with targeted sampling -- at the very least, we might suggest a mix of sampling styles for the best yield.

\begin{table}[tbp]
\centering
\small
\begin{tabular}{ccccc}
\toprule
$w$ & $n\le130$ & $130<n\le170$ & $170<n\le220$ & $220<n\le300$ \\
\midrule
5 & \cellcolor{abC!30}A \,4.3 & \cellcolor{abC!30}A \,4.7 & \cellcolor{naC!35}N \,5.3 & \cellcolor{naC!35}N \,5.6 \\
6 & \cellcolor{black!8}tie \,9.6 & \cellcolor{abC!30}A \,12.0 & \cellcolor{naC!35}N \,12.2 & \cellcolor{naC!35}N \,14.2 \\
7 & \cellcolor{abC!30}A \,13.7 & \cellcolor{abC!30}A \,16.1 & \cellcolor{abC!30}A \,17.7 & \cellcolor{naC!35}N \,25.4 \\
8 & \cellcolor{abC!30}A \,21.6 & \cellcolor{abC!30}A \,27.1 & \cellcolor{abC!30}A \,26.5 & \cellcolor{abC!30}A \,39.9 \\
9 & \cellcolor{abC!30}A \,24.9 & \cellcolor{abC!30}A \,29.4 & \cellcolor{abC!30}A \,34.8 & \cellcolor{black!8}tie \,36.0 \\
\bottomrule
\end{tabular}

\caption{Which side holds the best two-block code, by check weight $w$ and length band: A, an abelian group; N, a non-abelian one (a code realized over both counts as abelian); the number is the best $kd^2/n$ in the cell.  At weight six and $n\le130$ the best abelian and non-abelian codes have the same parameters, $[[120,8,12]]$; at weight nine and $220<n\le300$ the abelian $[[288,32,18]]$ and the non-abelian $[[294,24,21]]$ tie at $36.0$.}
\label{tab:abelian-vs-nonabelian}
\end{table}

\begin{table}[tbp]
\centering
\small
\resizebox{\linewidth}{!}{\begin{tabular}{@{}clllcrrrrl@{}}
\toprule
$w$ & groups & $n$ & source & bar & groups & with hits & $50\%$ in & $90\%$ in & top group (share) \\
\midrule
6 & abelian & $n\le200$ & exhausted & $9.0$ & 183 & 17 & 2 & 8 & $C_{14}\times C_{7}$ (41\%) \\
6 & non-abelian & $200<n\le300$ & exhausted & $10.6$ & 419 & 155 & 13 & 74 & $C_{3}\times (C_{2}^{4}\rtimes C_{3})$ (24\%) \\
7 & non-abelian & $240\le n\le300$ & uniform & front & 309 & 56 & 16 & 48 & $C_{7}\times D_{18}$ (9\%) \\
9 & abelian & $90\le n\le164$ & uniform & $16.6$ & 67 & 18 & 4 & 12 & $C_{63}$ (22\%) \\
9 & non-abelian & $112\le n\le164$ & uniform & $16.6$ & 146 & 8 & 2 & 5 & $(C_{9}\times C_{3})\rtimes C_{3}$ (36\%) \\
\bottomrule
\end{tabular}
}
\caption{Good codes concentrate in a few groups.  For each source, the number of groups searched, the number with any code above the bar, and how many groups hold $50\%$ and $90\%$ of the codes above it, with the group holding the most.  This table records only unbiased sources: including exhausted censuses and uniform sampling runs.  Bar: the census bar on $kd^2/n$, or ``front'' for a run in frontier mode. An extended table is available in the supplementary data.}
\label{tab:concentration}
\end{table}

\section{Discussion}
\label{sec:discussion}

We end with some discussion. The primary message of this paper is clear, and hopefully by now, believable: small two-block codes are competitive once searched hard, and the previous gap from more elaborate constructions at $n \le 300$ is not intrinsic. 

To contextualize our work, we will compare our work with prior code searches. The main precedent is Lin--Pryadko's exhaustive search and archive \cite{Lin_PRA_2024_2bga} (see also Refs.~\cite{Arnault_arXiv_2025_22gb, Khesin_arXiv_2026_mirror_codes}). In spirit, the main difference from Ref.~\cite{Lin_PRA_2024_2bga} is only that recent advances in numerical techniques have allowed us to scale drastically. In contrast to these aforementioned searches, most other searches are highly directed and far from exhaustive, usually along some fixed ansatz.

This work was also inspired in large part by the many recently published search pipelines, especially in the sense of establishing a search pipeline on top of strong algebraic constraints. The general idea of structural filters, including co-design-related filters, followed by distance calculations is preceded by Refs.~\cite{Bhardwaj_arXiv_2026_high_rate_processors,Hong_arXiv_2026_rate_fifth,Kasai_arXiv_2026_orthogonality_barrier,Lee_arXiv_2026_ultra_high_rate}. The main difference here is that we return to an earlier, simpler code family. Among previous searches on two-block codes, our use of algebra to constrain the search was informed by Refs.~\cite{Lu_arXiv_2026_divisor_driven, Postema_arXiv_2025_bb_existence, Yang_arXiv_2026_gala, Lee_arXiv_2026_logical_spectroscopy}. As mentioned in the introduction, much of the theory has already been developed in the literature, albeit scattered across many different sources. A key contribution of this work is hence also a thorough literature review for two-block codes and isolation of the most useful filters for code search. 

We should also compare our work to previous learning or LLM proposals for code search. As discussed in the introduction, LLMs were used in building the software tools. However, as compared to Refs.~\cite{CruzBenito_arXiv_2026_llm_bb_search,CruzBenito_arXiv_2026_weight5_bb}, we do not allow inference itself in the search loop. Other LLM-driven searches pursue different targets: OmniQEC~\cite{Yan_arXiv_2026_omniqec} searches for codes by their circuit-level logical error rate at a fixed qubit budget, and structured concept evolution~\cite{Liu_arXiv_2026_structured_concept_evolution} evolves lifted-product code families through an algebraic mutation grammar, with codes of $n\ge1360$; in both, and in contrast to our work, the model proposed the candidates. Any modifications of search (such as transitioning from uniform to targeted sampling) require an additional theory period in which algebraic statements have to be proven and statistical statements tested before we allow any modification of the loop. We believe that in addition to tightening up the search loop, this also greatly boosts interpretability of search methods. Our controls are inspired by the equal-budget comparison with uniform search of Ref.~\cite{Chengyu_arXiv_2026_bayesian_codes}. Sampling aside, at weights five and six the space is small enough to census completely, so at these sizes the census closes the search for the best small ($n \leq 300$) two-block codes.

While the goal of this work is not to develop new theory, we note that our observations of the landscape of two-block codes open up many new questions for theory. We have tried to give partial answers on which algebraic structures lower-bound $k$ and how to exploit them. The concentration of good codes via group structure is extremely evocative, and we invite new theory work based on the large number of numerical results in our database. We know that good codes are sometimes covers and sometimes not, with no consistent rule found in this work. Some insights into covers were developed alongside this work, and will appear in forthcoming work. To go beyond $n = 300$, it seems likely that one has to put more time into the non-abelian search. A complete set of equivalence moves remains open, and there is clear space for improvement, which will further ease search. The scope of this work is $n\le320$ and check weight at most nine.  Going beyond weight nine will take more than compute alone -- near-best codes become rarer per draw as the check weight grows (Sec.~\ref{sec:where}), and the search space grows by orders of magnitude with each unit of weight. As such,  either a far larger budget or much sharper targeting, ideally backed by theory, will be needed.

The clear next steps are to develop logic for some of the good codes revealed by this work. Due to the fact that two-block codes are among the most symmetric qLDPC codes, as well as the large amount of prior work on logic for two-block codes, we expect this to be largely an engineering task. What remains open is the co-design and performance of logical gadgets such as extractors on actual neutral atom devices, and whether the highly compatible nature of syndrome extraction of two-block codes with AOD constraints carries over to logic. While we did some small-scale LER simulation in this work, circuits and decoders were not optimized, which also remains an engineering challenge. 

Finally, we hope that the database becomes a valuable reference for the community. It is meant as a checkable benchmark, and corrections are welcome.

\FloatBarrier
\appendix

\section{Mathematical Preliminaries}
\label{app:prelim}

The material in this section can all be found in the following textbooks, Refs.~\cite{DummitFoote_2004_abstract_algebra,Knapp_2016_basic_algebra,Knapp_2016_advanced_algebra,Webb_2016_finite_group_representations}. For two-block codes, everything happens inside one algebra, with most arguments not more than a line or two once the vocabulary is settled. This section fixes vocabulary and collects mathematical facts that will be used without proof in later sections. This section is elementary but heavy. Familiar readers may skip it, whereas unfamiliar readers are recommended to use this as a reference section, returning when necessary. 

\paragraph{Groups.} Let $G$ be a finite group. A homomorphism $\varphi: G \rightarrow G'$ is a group multiplication-preserving map which must satisfy $\varphi(g_1) \varphi(g_2) = \varphi(g_1 g_2)$, and an automorphism is a bijective homomorphism from $G$ to itself. A subgroup $N \leq G$ is \textit{normal}, written $N \trianglelefteq G$, if $g N g^{-1} = N$ for all $g \in G$. The cosets are $g N$, which form a group $G/N$. By commutativity, every subgroup of an abelian group is normal. For a prime $p$, a \textit{$p$-group} is a group of order $p^{\alpha}$ with $\alpha \geq 0$. In this work, $2$-groups will play a particularly outsized role. Every $G$ has a unique largest normal $2$-subgroup $O_2(G)$, and we will write $O_2(G) = 1$ when $G$ has no nontrivial normal $2$-subgroup.

\paragraph{Algebras.} Let $G$ be a finite group. We will work over the regular representation, which supplies us with orthonormal basis vectors $| g \rangle$ for each $g \in G$, and a left and right action $L, R$ satisfying 
\begin{equation*}
  \begin{aligned}
    &L(g) | h \rangle = | g h \rangle, & R(g) | h \rangle = | h g \rangle,
  \end{aligned}
\end{equation*}
for every $g, h \in G$. Note the contrast to the usual convention for right multiplication with an inverse. Here $L$ is a homomorphism $L(g) L(h) = L(gh)$ whereas $R$ is an anti-homomorphism $R(g) R(h) = R(hg)$. More generally, we can write down the left (right) action of an element $c = \sum_g c_g g \in \mathbb{F}_2[G]$, the group algebra. This is a formal sum over elements of $G$ with coefficients in $\mathbb{F}_2$. The action extends linearly, with $L(c) = \sum_{g} c_g L(g), R(c) = \sum_g c_g R(g)$. The left and right action thus give linear maps, specifying $|G| \times |G| $ matrices. For single elements, we sometimes refer to these maps $L(g), R(g)$ as the \textit{left} and \textit{right translations} by $g \in G$. For an abelian two-block code identified with a torus, these literally give physical translation. 

The ambient space is a vector space of dimension $|G|$ with basis indexed by $G$, where we can identify an element with its coefficient vector. We exclusively work over $\mathbb{F}_2$, where $-1 = 1$, so we use addition throughout, e.g.\ $g + e, x^m + 1$, and so on. The \textit{support} of $c$ is ${\rm supp}\, c = \{g: c_g = 1\}$, and its weight is $|c| = | {\rm supp}\, c |$. Over $\mathbb{F}_2$, an element is determined by its support, so we identify $c$ with its support, such that we can write expressions like $g \in c$ if we have $c_g = 1$. 

The \textit{antipode} is $c^* = \sum_g c_g g^{-1}$, and acts like an adjoint, $(cc')^* = c'^* c^*$. In particular, $L(c)^{\mathsf T}=L(c^*)$ and $R(c)^{\mathsf T}=R(c^*)$, which is why $H_Z=(R(b)^{\mathsf T}\,|\,L(a)^{\mathsf T})$ is the natural partner of $H_X=(L(a)\,|\,R(b))$. The \textit{augmentation} $\varepsilon(c) = \sum_g c_g \in \mathbb{F}_2$ is the weight modulo two. The \textit{augmentation ideal} is the kernel of $\varepsilon$, and is spanned by elements $g+e$. The center $Z( \mathbb{F}_2[G]) $ consists of elements that commute with everything, and is spanned by sums $\sum_{g \in C} g$ over the conjugacy classes $C$.

\paragraph{Ideals and quotients.} A \textit{left ideal} is a subspace $J$ that absorbs the algebra from the left, in the sense that $\mathbb{F}_2[G] J \subseteq J$. A \textit{right ideal} analogously is a subspace $J$ satisfying $J \mathbb{F}_2[G] \subseteq J$. A \textit{two-sided ideal} is both a left and right ideal. A two-sided ideal is analogous to a normal subgroup in that it lets one obtain the quotient algebra $\mathbb{F}_2[G]/J$. In fact, the map to a quotient group $G \rightarrow G/N$ linearly extends to a surjection $\mathbb{F}_2[G] \rightarrow \mathbb{F}_2[G/N]$.  The \textit{principal} left ideal generated by one element $c$ is 
\begin{equation*}
  \mathbb{F}_2[G] c = \{y c: y \in \mathbb{F}_2[G]\},
\end{equation*}
the left multiples of $c$. It is also the smallest left ideal containing $c$. The principal right ideal is defined accordingly. As vectors, these are also the column spaces of $R(c), L(c)$ respectively.  

\paragraph{Units, zero divisors, idempotents.} A \textit{unit} is an element of an algebra with a two-sided inverse. A \textit{zero divisor} is a nonzero $c$ with $cy = 0$ or $yc = 0$ for some nonzero $y$. An \textit{idempotent} is an element with $c^2 = c$. We say it is \textit{central} if it lies in $Z(\mathbb{F}_2[G])$, and we say a central idempotent is \textit{primitive} if it is not the sum of two nonzero orthogonal ($\eta\eta'=0$) central idempotents. An element $c$ is nilpotent if $c^m = 0$ for some $m$, and an ideal $J$ is nilpotent if $J^m=0$, i.e.\ every product of $m$ elements of $J$ vanishes. In a finite-dimensional algebra every nonzero element is either a unit or a zero divisor. Hence, $c$ is a unit exactly when $L(c)$ is invertible.

\paragraph{Annihilators.}  For $c \in \mathbb{F}_2[G]$, we write the right and left \textit{annihilator}, 
\begin{equation*}
  \begin{aligned}
    &{\rm ann}_r(c) = \{y \in \mathbb{F}_2[G]: cy = 0\}, & {\rm ann}_l(c) = \{y \in \mathbb{F}_2[G]: yc = 0\},
  \end{aligned}
\end{equation*}
which are a right and left ideal respectively. The \textit{nullity} of $c$ is $z(c) = \dim \ann_r(c) = |G| - {\rm rank} L(c)$, which is also the dimension of $\ann_l(c)$. To see this, note that with $P|g\rangle=|g^{-1}\rangle$ one has $R(c^*)=P\,L(c)\,P$, and $R(c^*)=R(c)^{\mathsf T}$, so $R(c)$ and $L(c)$ have the same rank.

\section{Two-block group-algebra codes}
\label{sec:codes}

Here we define the codes in the language of group algebras; for abelian groups this is the familiar polynomial representation. App.~\ref{app:prelim} collects the algebra we use.

\subsection{Definition and dimension}
\label{sec:definition}

Let $G$ be a finite group and let $a, b \in \mathbb{F}_2[G]$ be a pair of group algebra elements. A two-block group-algebra code~\cite{Lin_PRA_2024_2bga}, written $C(a, b)$, has $n = 2 | G|$ qubits in left and right blocks, with each qubit within each block corresponding to a group element. The check matrices are
\begin{equation}
  \begin{aligned}
    &H_X = \bigl(L(a)\ \big|\ R(b)\bigr),\qquad
    &H_Z=\bigl(R(b)^{\mathsf T}\ \big|\ L(a)^{\mathsf T}\bigr),
  \end{aligned}
\end{equation}
with $L(R)$ being the left (right) regular representation. $a, b$ may be arbitrary algebra elements; the check matrices commute due to the fact that the left and right regular representations commute. Recall that here we used the convention that the right regular representation acts as right multiplication, so it actually gives an anti-homomorphism satisfying $R(gh) = R(h)R(g)$. The check weight of this code is $|a| + |b|$, so it is LDPC when we choose $a, b$ to be fairly light elements. Symmetrically, each qubit is acted on by $|a| + |b|$ checks ($|a|$ of one type and $|b|$ of the other).

When $G$ is abelian, we have $L(c)=R(c)$. Every abelian group is the product of cyclic groups, which can be associated with polynomials in their generators. As special cases, cyclic groups yield generalized bicycle codes of Ref.~\cite{Kovalev_PRA_2013_kronecker} and groups of the form $C_m \times C_l$ are bivariate bicycle codes of Ref.~\cite{Bravyi_Nature_2024_bb_codes}.

The formula for the dimension of a two-block code is given as either Eq.~(25) of Ref.~\cite{Lin_PRA_2024_2bga} or Thm.~2.3 of \cite{Eberhardt_arXiv_2024_bb_logical_operators}. We state it below, and reprove it primarily to get used to the notion of annihilators, which will come into play later.

\begin{proposition}[Dimension \cite{Lin_PRA_2024_2bga,Eberhardt_arXiv_2024_bb_logical_operators}]
\label{prop:k}
Let
$t_1=\dim\bigl(\ann_r(a)\cap\ann_l(b)\bigr)$ and
$t_2=\dim\bigl(\ann_l(a)\cap\ann_r(b)\bigr)$.  Then
$\rank H_Z=|G|-t_1$, $\rank H_X=|G|-t_2$, and
\[
  k=t_1+t_2 .
\]
\end{proposition}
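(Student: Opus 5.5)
The plan is to compute the ranks of $H_X$ and $H_Z$ by identifying their kernels (for the transposes) with the stated intersections of annihilators, and then apply the standard CSS formula $k = n - \rank H_X - \rank H_Z$ with $n = 2|G|$.

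\emph{Rank of $H_X$.} We have $\rank H_X = |G| - \dim\ker H_X^{\mathsf T}$, because $H_X$ has $|G|$ rows. Using $L(c)^{\mathsf T} = L(c^*)$ and $R(c)^{\mathsf T} = R(c^*)$ from App.~\ref{app:prelim}, a vector $y \in \mathbb{F}_2[G]$ lies in $\ker H_X^{\mathsf T}$ exactly when $L(a^*)y = 0$ and $R(b^*)y = 0$, that is, when $a^* y = 0$ and $y b^* = 0$. So $\ker H_X^{\mathsf T} = \ann_r(a^*) \cap \ann_l(b^*)$.

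Next apply the antipode, which is a linear bijection satisfying $(cc')^* = c'^* c^*$. It sends $\ann_r(a^*)$ onto $\ann_l(a)$, since $a^* y = 0 \iff y^* a = 0$. Likewise it sends $\ann_l(b^*)$ onto $\ann_r(b)$, since $y b^* = 0 \iff b y^* = 0$. Hence $\dim\ker H_X^{\mathsf T} = \dim(\ann_l(a)\cap\ann_r(b)) = t_2$, and $\rank H_X = |G| - t_2$.

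\emph{Rank of $H_Z$.} By the same argument, $H_Z^{\mathsf T} = \begin{pmatrix} R(b) \\ L(a)\end{pmatrix}$ as a map $\mathbb{F}_2[G] \to \mathbb{F}_2[G]^2$. Its kernel is $\{y : yb = 0,\ ay = 0\} = \ann_l(b) \cap \ann_r(a)$, which has dimension $t_1$. Since $H_Z$ has $|G|$ rows, $\rank H_Z = |G| - t_1$, with no antipode needed here.

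\emph{Conclusion.} Then $k = 2|G| - (|G| - t_2) - (|G| - t_1) = t_1 + t_2$.

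The only delicate point is the bookkeeping of conventions. One must check that with $R(g)|h\rangle = |hg\rangle$, the operator $R(b)$ is indeed right multiplication $y \mapsto yb$, and that the transposes match the antipode as stated, so that left and right annihilators are not swapped. I would verify this on basis elements before finishing; everything else is routine linear algebra.
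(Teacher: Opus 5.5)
Your proposal is correct and matches the paper's proof essentially line for line. You identify the left kernel of $H_Z$ directly with $\ann_r(a)\cap\ann_l(b)$ and the left kernel of $H_X$ with $\ann_r(a^*)\cap\ann_l(b^*)$, transport the latter by the antipode to $\ann_l(a)\cap\ann_r(b)$, and conclude with $k=n-\rank H_X-\rank H_Z$; the conventions you flag ($R(b)y=yb$, $L(c)^{\mathsf T}=L(c^*)$, $R(c)^{\mathsf T}=R(c^*)$) hold as set up in App.~\ref{app:prelim}, so nothing is missing.
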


\begin{proof}
The checks of each type are labeled by $G$, so a vector of checks of
either type is again an element $y\in\mathbb{F}_2[G]$; this is the only
identification used.  A $Z$-check vector $y$ lies in the left kernel of $H_Z$ when $R(b)y=0$ and $L(a)y=0$, that is, $yb=0$ and $ay=0$; so this kernel is $\ann_r(a)\cap\ann_l(b)$, of dimension $t_1$, and $\rank H_Z=|G|-t_1$. Likewise an $X$-check vector $y$ lies in the left kernel of $H_X$ when $a^*y=0$ and $yb^*=0$; the antipode, which reverses products, maps this space onto $\ann_l(a)\cap\ann_r(b)$, so $\rank H_X=|G|-t_2$.  Finally $k=n-\rank H_X-\rank H_Z=t_1+t_2$.
\end{proof}

\subsection{Which pairs give the same code}
\label{sec:equivalences}

Many pairs $(a, b)$ give the same code with the qubits relabeled. There are only a few basic moves here, which divide the code landscape into classes; these have already been provided in Thm.~6 of Ref.~\cite{Lin_PRA_2024_2bga}. The following restates this over $\mathbb{F}_2$.

\begin{proposition}[Equivalences \cite{Lin_PRA_2024_2bga}]
\label{prop:equivalences}
Each of the following gives a code equivalent to $C(a,b)$.
\begin{enumerate}[label=(N\arabic*),leftmargin=30pt,itemsep=2pt]
  \item Two-sided actions of each block independently:
    $C(gah,\,kbl)$ for $g,h,k,l\in G$.
  \item Automorphisms on both blocks at once: $C(\varphi(a),\varphi(b))$ for $\varphi\in\Aut(G)$.
  \item Antipode with the blocks swapped: $C(b^*,a^*)$.
\end{enumerate}
In addition, (N4) exchanging $X$ and $Z$ in $C(a,b)$ gives $C(a^*,b^*)$,
which by (N3) is equivalent to $C(b,a)$.
\end{proposition}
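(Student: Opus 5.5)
The approach is to exhibit, for each move, an explicit pair of permutation matrices (one on qubits, one on checks) that carries the check matrices of $C(a,b)$ to those of the transformed code. It then suffices to check the $X$ sector, since each $H_Z$ is built from the same blocks transposed. The basic tools are the identities $L(g)L(c)L(g)^{-1}=L(gcg^{-1})$, $L(c)R(d)=R(d)L(c)$, $L(c)^{\mathsf T}=L(c^*)$, and $R(c^*)=PL(c)P$ with $P|g\rangle=|g^{-1}\rangle$, all from App.~\ref{app:prelim}. Throughout, $L(g)$ and $R(g)$ for $g\in G$ are permutation matrices.

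\textbf{(N1).} Note that $L(gah)=L(g)L(a)L(h)$ and that $L(h)=R(h)$-conjugation is not needed. Rows (checks) can be multiplied on the left by the permutation $L(g)$, and left-block columns on the right by $L(h)$. Thus $L(gah)=L(g)\,L(a)\,L(h)$ is $L(a)$ with its rows and columns permuted.

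The right block must be permuted consistently with the row relabeling already chosen. We have $R(kbl)=R(l)R(b)R(k)$, but the row permutation is now fixed to $L(g)$, so we cannot simply use $R(l)$. Instead we use commutation: $L(g)$ commutes with $R(b)$, so
\[
L(g)^{-1}R(kbl)=R(l)\,L(g)^{-1}R(b)\,R(k),
\]
where $R(l)$ also commutes past $L(g)^{-1}$. Since $R(l)$ acts on the rows, the row permutations of the two blocks no longer agree. This is the delicate point, and I handle it by doing the moves one at a time. Left multiplication of $a$ by $g$ is absorbed by the row permutation $L(g)$, and we compensate on the right block by conjugating: $L(g)^{-1}R(b)L(g)$ equals $R(b)$ with columns permuted by $L(g)$, which is legitimate because $L$ and $R$ commute. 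Right multiplication of $a$ by $h$ is a column permutation of the left block alone. Symmetrically, $b\mapsto bl$ is absorbed by the row permutation $R(l)$, compensated on the left block via $R(l)^{-1}L(a)R(l)$; and $b\mapsto kb$ is a right-block column permutation. For the $Z$ checks I would check that the induced row and column permutations are the transposes of those above, which is immediate from $H_Z$ being built from the same blocks transposed.

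\textbf{(N2).} The permutation $Q_\varphi|g\rangle=|\varphi(g)\rangle$ satisfies $Q_\varphi L(c)Q_\varphi^{-1}=L(\varphi(c))$, and likewise for $R$. Applying $Q_\varphi$ simultaneously to the checks and to both qubit blocks therefore works, for $H_X$ and $H_Z$ alike.

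\textbf{(N3) and (N4).} For (N4), exchanging $X$ and $Z$ gives the code with $H_X'=(R(b^*)\,|\,L(a^*))$. Conjugating every block by $P$ (on rows and on both column blocks) turns $R(b^*)$ into $L(b)$ and $L(a^*)$ into $R(a)$, so $H_X'\simeq(L(b)\,|\,R(a))$, the $X$ checks of $C(b,a)$. Swapping the two qubit blocks then matches this to the statement that $X\leftrightarrow Z$ gives $C(a^*,b^*)\simeq C(b,a)$. Concretely, $C(a^*,b^*)$ has $H_X=(L(a^*)\,|\,R(b^*))$, which after swapping blocks equals the $H_X'$ above; the $Z$ parts follow by transposition.

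(N3) is the same $P$-conjugation combined with a block swap: $C(b^*,a^*)$ has $H_X=(L(b^*)\,|\,R(a^*))$, which becomes $(R(b)\,|\,L(a))$ under $P$ and then, after swapping blocks, $(L(a)\,|\,R(b))$.

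The main obstacle is the bookkeeping in (N1), namely ensuring that one row permutation serves both blocks. The commutation of $L$ and $R$ resolves this.
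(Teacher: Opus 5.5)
The paper does not prove this proposition. It restates Thm.~6 of Lin--Pryadko~\cite{Lin_PRA_2024_2bga} over $\mathbb{F}_2$ and cites it, and the proof of Cor.~\ref{cor:normal-form} then uses the proposition as given. Your explicit-permutation argument is therefore a self-contained replacement for the citation rather than a competitor to an argument in the text, and it is correct. The citation buys the paper brevity. Your version buys a direct check in the conventions actually used here: $R$ is an anti-homomorphism, and $H_Z=(R(b)^{\mathsf T}\,|\,L(a)^{\mathsf T})$. (N2) via $Q_\varphi$ is exactly right. So are (N3) and (N4) via $P$ plus a block swap, including the $Z$ sectors, because $P^{\mathsf T}=P$ makes conjugation by $P$ commute with transposition.

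Two points in (N1) should be tightened.

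\textbf{The ``delicate point'' does not exist.} One row permutation serves both blocks at once. Take $M=L(g)R(l)$ on the $X$ checks and the qubit permutation $N=\mathrm{diag}\bigl(R(l^{-1})L(h),\,L(g^{-1})R(k)\bigr)$. Commute $R(l)$ past $L(a)$, and $L(g)$ past $R(l)R(b)$. This gives $MH_XN=\bigl(L(g)L(a)L(h)\,\big|\,R(l)R(b)R(k)\bigr)$, which is $H_X$ of $C(gah,kbl)$. Your move-at-a-time route also works, since equivalences compose.

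\textbf{The $X$ sector alone does not suffice.} Checking only the $X$ sector is not a valid general principle: a qubit permutation matching $H_X$ need not match $H_Z$. What must be shown is that the same $N$ works for $H_Z$ with some row permutation. Here that row permutation is $L(h^{-1})R(k^{-1})$, not the transpose of $M$. The identity $L(h^{-1})R(k^{-1})\bigl(R(b)^{\mathsf T}\,\big|\,L(a)^{\mathsf T}\bigr)N=\bigl(R(kbl)^{\mathsf T}\,\big|\,L(gah)^{\mathsf T}\bigr)$ follows from $L(g)^{-1}=L(g)^{\mathsf T}$, $R(l)^{-1}=R(l)^{\mathsf T}$ and the same commutations.

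Your sentence that the $Z$ permutations are ``the transposes of those above'' is false if read literally. It holds only in the sense that row and column roles are exchanged between the sectors. In the move $a\mapsto ga$, for example, the compensating right-block permutation $L(g)^{-1}$ of the $X$ sector coincides with the $L(g)^{\mathsf T}$ that $L(ga)^{\mathsf T}=L(a)^{\mathsf T}L(g)^{\mathsf T}$ produces in the $Z$ sector. That coincidence is what you should state.

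Finally, drop the aside ``$L(h)=R(h)$'', since the two differ for non-abelian $G$.
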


Thoroughly accounting for the above is probably the single most important step in the code search, see Table~\ref{tab:w6-funnel}. How we enumerate and count these classes is explained in App.~\ref{sec:enumeration}. For the enumeration step, the following normal form is helpful.

\begin{corollary}[Classes in two stages]
\label{cor:normal-form}
\begin{enumerate}[label=(\roman*),leftmargin=24pt,itemsep=2pt]
  \item Every $C(a,b)$ is equivalent to a code $C(a',b')$ in which both blocks contain $e$.
  \item For a block $a$ containing $e$, let $T(a)$ be the set of its two-sided translations
    that contain $e$; these are exactly $g\,(a s^{-1})\,g^{-1}$ for $s\in a$ and $g\in G$.  Two blocks
    $a,b$ containing $e$ are related by (N1) if and only if $T(a)=T(b)$.
  \item The moves (N2)--(N4) send two-sided translations of a block to two-sided translations
    of its image, so they act on (N1)-classes: it suffices to apply them to one representative of each
    class.  The classes generated by (N1)--(N4) are the orbits of this action on pairs of (N1)-classes.
\end{enumerate}
\end{corollary}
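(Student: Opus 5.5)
The plan is to treat the three parts in order, working throughout with blocks identified with their supports and using only the moves of Prop.~\ref{prop:equivalences}.

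For (i), $a\neq 0$ and $b\neq0$ (otherwise a block is empty and nothing is claimed). I pick any $s\in a$ and $t\in b$ and apply (N1) with $g=k=l=e$ and $h=s^{-1}$, $l=t^{-1}$, giving $C(as^{-1},bt^{-1})$. Right multiplication by $s^{-1}$ is a bijection of $G$, so $as^{-1}$ has support $\{xs^{-1}: x\in a\}$, which contains $ss^{-1}=e$. The same holds for $b$.

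For (ii), I first describe $T(a)$. A two-sided translation $gah$ contains $e$ exactly when $gsh=e$ for some $s\in a$, that is $h=s^{-1}g^{-1}$. Then $gah=g(as^{-1})g^{-1}$, which is the claimed description.

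For the equivalence: if $b=gah$ then every two-sided translation of $b$ is one of $a$ and conversely, since two-sided translations compose and invert. So $T(a)=T(b)$. Conversely, if $T(a)=T(b)$, then $b\in T(b)$ (take $s=e$, $g=e$), so $b\in T(a)$ and $b$ is a two-sided translation of $a$. The only point needing care is that $e\in a$, so that $a$ itself lies in $T(a)$.

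For (iii), I check each move against a two-sided translation $gah$:
\begin{itemize}
\item (N2): $\varphi(gah)=\varphi(g)\varphi(a)\varphi(h)$.
\item Antipode: $(gah)^*=h^{-1}a^*g^{-1}$, which is a two-sided translation of $a^*$.
\item (N3), (N4): these act by antipode and swap (or only antipode) on the pair.
\end{itemize}
So each move maps the (N1)-class of $(a,b)$, which is the product of the two blocks' classes, into the class of the image pair. Hence the moves descend to a well-defined action on pairs of (N1)-classes.

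To see that this is a group action, I would note that the moves generate a group $\Gamma$ together with the (N1) group. (N1) is normalized by the others because of the conjugation identities just computed. The full equivalence classes are then orbits of the product, and these equal the $\Gamma$-orbits on (N1)-orbits. This is a standard fact for a group containing a normal subgroup.

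The main obstacle is bookkeeping rather than depth: making sure that the antipode's reversal of order, combined with the swap in (N3), still sends the pair of classes to a pair of classes. For this I would track the left and right factors explicitly, as above.
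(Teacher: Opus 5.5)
Your proposal is correct and follows the paper's proof essentially step for step: translating each block by the inverse of one of its elements for (i), the computation $gah=g(as^{-1})g^{-1}$ together with $b\in T(b)=T(a)$ for (ii), and the identities $\varphi(gah)=\varphi(g)\varphi(a)\varphi(h)$ and $(gah)^*=h^{-1}a^*g^{-1}$ for (iii); your remark that the other moves normalize the (N1) group just makes explicit the final orbit claim, which the paper leaves implicit. One slip to fix: in (i) you assign $l$ twice, where you mean $g=k=e$, $h=s^{-1}$, $l=t^{-1}$.
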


\begin{proof}
(i) Translate each block by the inverse of one of its elements. This is equivalent via (N1). (ii) $uav$ contains $e$ if and only if $usv=e$ for some $s\in a$, i.e.\ $v=s^{-1}u^{-1}$, and then $uav=u(as^{-1})u^{-1}$.  If $b$ is a two-sided translation of $a$, the two have the same two-sided translations, so $T(a)=T(b)$; conversely $b\in T(b)=T(a)$ is a two-sided translation of $a$. (iii) Each move sends two-sided translations to two-sided translations:
$\varphi(gah)=\varphi(g)\varphi(a)\varphi(h)$ and $(gah)^*=h^{-1}a^*g^{-1}$, and the block swap (N3) only
exchanges which block is translated; (N4) applies the antipode to each block, which the formula for $(gah)^*$ covers.  So if $b$ is a two-sided translation of $a$, the image of $b$ under
a move is a two-sided translation of the image of $a$, and the moves act on (N1)-classes.
\end{proof}

\subsection{Layout on a lattice}
\label{sec:layout}

Many abelian two-block codes can be translationally invariantly laid out in 2D; in particular abelian codes with at most two generators can always be embedded on a twisted torus~\cite{Liang_PRXQ_2025_twisted_tori}. This describes many of our good codes. Codes with three or more generators or non-abelian codes may require larger unit cells, and have a lower degree of translational invariance. We treat the abelian two-generator case first, then the rest.

First, we describe how the checks are directly stenciled onto the bulk of a 2D square lattice without twisting. We start by considering the case with two generators; this is a bivariate bicycle code, laid out as in Ref.~\cite{Bravyi_Nature_2024_bb_codes}. We can write the generators $x, y$, with blocks being polynomials in $x, y$, and $x^m = y^l = 1$. With every site of the lattice, we associate a left (L) and right (R) qubit, as well as one $X$ check and one $Z$ check. We then associate the group element $x^j y^k$ with a translation in the horizontal direction by $j$ sites and the vertical direction by $k$ sites, so that the sites form an $m\times l$ torus.  Write $a=\sum_{\alpha\in A}x^{\alpha_1}y^{\alpha_2}$, where $A$ is the set of exponent pairs of the terms of $a$ (so $|A|=|a|$, and each $\alpha\in A$ is a single lattice point), and likewise $b$ with $B$.  Labelling sites by lattice points, the $X$ check at site $h$ acts on the left qubits at the sites $h-\alpha$, $\alpha\in A$, and on the right qubits at the sites $h-\beta$, $\beta\in B$; the $Z$ check at $h$ acts on the left qubits at $h+\beta$ and the right qubits at $h+\alpha$.  So every check of a type is a translate of one \emph{stencil}, and the $Z$ stencil is the $X$ stencil rotated by $180^\circ$ with the blocks swapped (Fig.~\ref{fig:stencils-102-120}).

While the above toroidal layout is most natural, it may not be the most compact. An alternative is to place the code on a twisted torus. Now set $x$ and $y$ to be the two unit translations of the lattice, one step right and one step up, and $u,v\in G$ the group elements they act as, with the site $(i,j)$ associated with $u^iv^j$. In the previous paragraph, one can take $u$ and $v$ to be the generators of the two factors, and then $x$ and $y$ are the familiar polynomial variables. However, any $u,v$ that together generate $G$ give a layout of the same code, and the ordinary torus is the special case in which $\langle u\rangle\cap\langle v\rangle=\{e\}$, so that no power of $v$ other than $e$ is a power of $u$.  In general some power of $v$ lies in $\langle u\rangle$, and walking up the lattice then brings one back to the bottom row shifted sideways, hence the twist of the torus; see Fig.~3 of Ref.~\cite{Liang_PRXQ_2025_twisted_tori} for an illustration of the twist and wrap-around.  This is unavoidable for a cyclic group of prime order, which has no decomposition into two factors. One advantage of considering twists is that a stencil on a twist can be more local than the natural torus layout: $[[120,24,10]]$ over $C_{30}\times C_2$ spans $21$ columns of its $30\times2$ torus, but fits in $5\times5$ with $u=xy$ and $v=x^{11}$ (Fig.~\ref{fig:layouts}(a,b)).

This picture also works for a single cyclic group $G = \langle g \rangle$ of order $m$. The natural picture is a ring of sites, but this is space inefficient. Instead, one can do something similar to before by simply choosing elements $u=g^p$ and $v=g^q$ that together generate $G$ (which requires that $\gcd(p,q,m)=1$). As an example, $[[102,20,10]]$ over $C_{51}=C_3\times C_{17}$ needs a $10\times10$ window on its most local ordinary torus but fits in $5\times5$ with $u=g^2$ and $v=g^7$ (Fig.~\ref{fig:layouts}(c)). All locality numbers (stencil windows) in the text are obtained by optimizing over such generators and layouts. Over the fronts of App.~\ref{app:pareto} at check weights seven to nine, $184$ of the $262$ parameter sets $(n,k,d)$ we hold have a two-dimensional layout, and $151$ of those fit a $5\times5$ window; at weight nine alone, $114$ of $139$ do, $100$ within $5\times5$.  The others are codes over non-abelian groups or over abelian groups that need three generators.  

For an abelian group that needs three generators, or a non-abelian group, there is no layout with one site per group element and every check a translate of one stencil. However periodicity still survives, albeit with a larger unit cell. To see this, relabel every qubit and every check in both blocks at once, by $g\mapsto gk$ for a fixed $k\in G$.  This is the permutation $R(k)$ on each block and on the checks. Since $R(k)$ commutes with $L(a)$ and conjugates $R(b)$ to $R(k^{-1}bk)$, it maps the code to itself whenever $k$ commutes with $b$ (likewise $g\mapsto kg$ when $k$ commutes with $a$).  A subgroup $K$ that acts this way and is abelian and generated by at most two elements can be laid out as a torus as above.  For an abelian group every element commutes with the blocks, and $K$ is needed only because $G$ itself needs three generators; for a non-abelian group $K$ must also commute with $b$.  The cosets $gK$ are then layers, each a copy of the torus of $K$, and a unit cell, one site in each of the $|G|/|K|$ layers, holds $2|G|/|K|$ qubits.  For example, $[[64,18,8]]$ over $C_4\times C_4\times C_2$ takes $K=C_4\times C_4$ and lives on two $4\times4$ layers, and $[[294,18,20]]$ over $(C_7\times C_7)\rtimes C_3$ takes $K=C_7\times C_7$ and lives on three $7\times7$ layers, with each check reaching into all three.

On an ordinary torus, qSieve~\cite{Viszlai_ACMTQC_2026_qsieve} compiles each stencil vector into at most four rigid translations of the ancilla array, one for each way the vector wraps around the torus, and syndrome extraction needs only two transfers between static and mobile traps.  The same construction applies on a twisted torus, if the atoms are placed on a parallelogram within the trap grid. 

\begin{figure}[tbp]
\centering
\includegraphics[width=\linewidth]{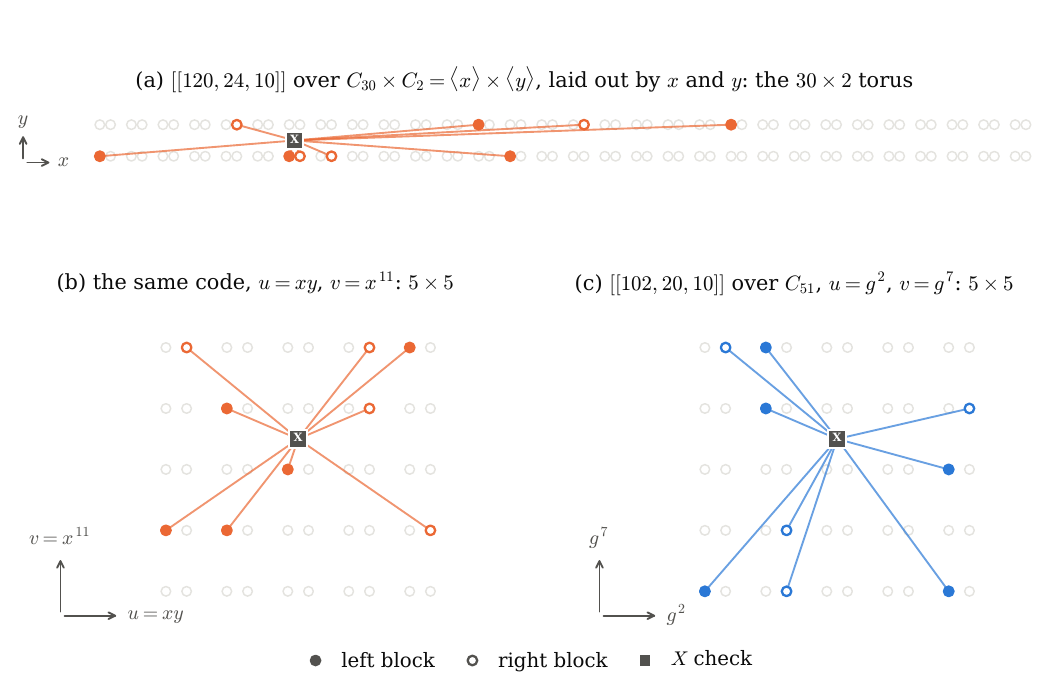}
\caption{Layouts of two-block codes, drawn as in Fig.~\ref{fig:stencils-102-120}: each site carries a left (filled) and a right (open) qubit, the square is the $X$ check, and the arrows give the group elements that one step right and one step up act as.  (a) $[[120,24,10]]$ over $C_{30}\times C_2=\langle x\rangle\times\langle y\rangle$ laid out by its generators: the ordinary $30\times2$ torus, on which one check spans $21$ columns.  (b) The same code laid out by $u=xy$ and $v=x^{11}$, a twisted torus: every check fits in $5\times5$ sites.  (c) $[[102,20,10]]$ over the cyclic group $C_{51}=\langle g\rangle$, which has a single generator, laid out by $g^2$ and $g^7$, again in $5\times5$.}
\label{fig:layouts}\label{fig:layouts-bb}\label{fig:layouts-gb}\label{fig:layouts-2bga}
\end{figure}

\section{Enumeration, screens and exact distance}
\label{sec:search}

\subsection{Enumeration up to equivalence}
\label{sec:enumeration}

Prop.~\ref{prop:equivalences} gives the basic steps for obtaining equivalent codes. Here we operationalize that into an enumeration procedure for searching two-block codes.

First, we itemize every group of each order within the band of interest. This is provided directly by the SmallGroups library~\cite{Besche_IJAC_2002_millennium} in GAP~\cite{GAP4}, with the groups identified by the SmallGroup id. Abelian and non-abelian groups are run as separate censuses.

Next, we obtain equivalence classes of codes, which are the orbits of the moves (N1) -- (N4) of Prop.~\ref{prop:equivalences}. Cor.~\ref{cor:normal-form} has two consequences here: it characterizes the (N1)-class of a block by the set $T(a)$ of its two-sided translations containing $e$, and it also tells us we can obtain orbits by first obtaining (N1)-classes, and then acting on one representative of each class to find which classes collapse under (N2) -- (N4). 

This proceeds in several systematic steps, performed for each group: 
\begin{enumerate}
  \item For each block weight, list the blocks containing $e$ and reduce them to (N1)-classes.
  \item Form pairs of block classes.
  \item Merge pairs under $\Aut(G)$ on both blocks at once (N2), the block swap (N3) and the $X/Z$ exchange (N4), applying each move to one representative of each class. The generators of $\Aut(G)$ are also obtained from GAP.
\end{enumerate}

This yields the classes of codes that we then iterate over. Before building any code, we also discard disconnected codes, since these are direct sums of smaller codes; with $e$ in both blocks, the connected components of the Tanner graph are read off the group multiplication table. We note that the equivalence moves are not complete -- to the best of our knowledge, a complete set of equivalence moves is an open question. On the remaining classes, we perform one more merge by canonical key. It suffices to note that a canonical key is a canonical labeling of the Tanner graph obtained through pynauty~\cite{pynauty,McKay_JSC_2014_nauty}, which identifies codes with isomorphic Tanner graphs. The canonical keys only compare check matrices, not stabilizer groups, so equal keys mean equivalent codes, but not conversely. We note that even after (N1) -- (N4), this step still merges a sizable fraction of classes in some groups ($36\%$ over $C_{24}\times S_3$) but none in others.

\begin{table}[tbp]
\centering
\scriptsize
\begin{tabular}{@{}llrrrr@{}}
\toprule
stage & move & \multicolumn{2}{c}{$3+3$} & \multicolumn{2}{c}{$4+2$} \\
\cmidrule(lr){3-4}\cmidrule(l){5-6}
 & & count & step & count & step \\
\midrule
all pairs of blocks &  & $6.49\times10^{12}$ &  & $4.82\times10^{12}$ &  \\
identity in both blocks & (N1), first step & $7.40\times10^{9}$ & $877\times$ & $4.88\times10^{9}$ & $988\times$ \\
one-sided translations & (N1), one side only & $8.24\times10^{8}$ & $9.0\times$ & $7.22\times10^{8}$ & $6.8\times$ \\
two-sided translations & (N1) & $3.49\times10^{7}$ & $23.6\times$ & $4.17\times10^{7}$ & $17.3\times$ \\
antipode and exchange & (N3), (N4) & $1.02\times10^{7}$ & $3.4\times$ & $2.62\times10^{7}$ & $1.6\times$ \\
automorphisms: census classes & (N2) & $1{,}525{,}317$ & $6.7\times$ & $3{,}821{,}760$ & $6.8\times$ \\
\midrule
connected & Prop. connectivity & $408{,}250$ & $3.7\times$ & $1{,}016{,}017$ & $3.8\times$ \\
$k>0$ & Prop. dimension & $52{,}286$ & $7.8\times$ & $1{,}016{,}017$ & $1\times$ \\
distinct canonical keys & Sec. canonical keys & $44{,}333$ & $1.2\times$ & $815{,}470$ & $1.2\times$ \\
\midrule
above the bar $kd^2/n>9$ & screens, witness ladder & $98$ & $452\times$ & $1$ & $815{,}470\times$ \\
exact $d$, distinct codes & exact engine & $93$ & $1.1\times$ & $1$ & $1\times$ \\
on the weight-six front & Pareto front & $6$ & $15.5\times$ & $0$ & --- \\
\bottomrule
\end{tabular}

\caption{A worked example of the whole search: weight-six codes over all
$863$ non-abelian groups of order at most $100$ ($n\le200$), $2$-groups
included, for the two splits of the check weight.  Each row applies one more
step to what the row above leaves, and ``step'' is the factor it removes.
The equivalences (top block) take $6.5\times10^{12}$ pairs to $1.5\times10^{6}$
classes at $3+3$; connectivity and $k>0$ (middle block) leave $44{,}333$
distinct codes; the screens and the exact engine (bottom block) leave $93$
codes with exact distance, of which $6$ are on the weight-six front (two of
them Lin--Pryadko archive codes found again).  At $3+3$ every $2$-group class
has $k=0$, since a block of odd weight is a unit over a $2$-group; at $4+2$
every connected class has $k>0$, because both blocks lie in the augmentation
ideal.  The $4+2$ codes over $2$-groups all fall below the bar ($kd^2/n\le9$ by witnessed
upper bounds).  Counts are taken group by group and summed, so a code realized over several groups is counted once for each: the $93$ codes with exact distance are $58$ distinct codes.}
\label{tab:w6-funnel}
\end{table}

\subsection{Screens}
\label{sec:screens}

After enumeration, most search time is spent on computing distances. In order to avoid this as much as possible, we implement \textit{screens}, which let us reject codes before computing any distances. The two requirements for a screen are:
\begin{enumerate}
  \item It must be cheap.
  \item It must be sound.
\end{enumerate} 
In particular, our search pipeline only permits screens that are backed by algebra. Furthermore, our screens are only allowed to reject codes, not accept them.

\begin{proposition}[Unit blocks]
\label{prop:unit-blocks}
\begin{enumerate}[label=(\roman*),leftmargin=24pt,itemsep=2pt]
  \item If $a$ or $b$ is a unit of $\mathbb{F}_2[G]$, then $k=0$.
  \item If $G$ is a $2$-group, an element of $\mathbb{F}_2[G]$ is a unit if and only if it has odd
    weight.  So every code over a $2$-group with a block of odd weight has $k=0$.
\end{enumerate}
\end{proposition}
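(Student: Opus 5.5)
For (i) I would argue directly from Prop.~\ref{prop:k}. Since $k=t_1+t_2$, it suffices to show both intersections of annihilators are zero. Suppose $a$ is a unit with inverse $a^{-1}$. If $ay=0$, multiplying on the left by $a^{-1}$ gives $y=0$, so $\ann_r(a)=0$. Similarly $ya=0$ gives $y=0$ after multiplying on the right, so $\ann_l(a)=0$. Then $\ann_r(a)\cap\ann_l(b)$ and $\ann_l(a)\cap\ann_r(b)$ are both zero, and $k=0$. The case where $b$ is a unit is symmetric, since it kills $\ann_l(b)$ and $\ann_r(b)$. Equivalently, $L(a)$ invertible forces $\rank H_X=\rank H_Z=|G|$, so $k=n-2|G|=0$.

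For (ii), the ``only if'' direction uses the augmentation $\varepsilon$, which is a ring homomorphism $\mathbb{F}_2[G]\to\mathbb{F}_2$ because it is the linear extension of the trivial group homomorphism. If $cc'=e$, then $\varepsilon(c)\varepsilon(c')=1$, so $\varepsilon(c)=1$ and $c$ has odd weight. This direction holds for every group.

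The substantive direction is that over a $2$-group every odd-weight element is a unit. The key lemma I would prove is that the augmentation ideal $I=\ker\varepsilon$ is nilpotent when $|G|=2^\alpha$. Given this, write an odd-weight $c$ as $c=e+u$ with $u\in I$. If $u^m=0$, then $(e+u)(e+u+\dots+u^{m-1})=e+u^m=e$, and the two factors commute, so $c$ is a unit. For the lemma I would induct on $|G|$. A nontrivial $2$-group has a central element $z$ of order $2$. Then $(z+e)^2=z^2+e=0$ in characteristic two, and $z+e$ is central, so the two-sided ideal $J=(z+e)\mathbb{F}_2[G]$ satisfies $J^2=(z+e)^2\mathbb{F}_2[G]=0$. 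The kernel of the surjection $\mathbb{F}_2[G]\to\mathbb{F}_2[G/\langle z\rangle]$ is exactly $J$. Since this surjection maps the augmentation ideal of $G$ onto that of $G/\langle z\rangle$, the induction hypothesis gives $I^m\subseteq J$ for some $m$. Hence $I^{2m}\subseteq J^2=0$.

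As a fallback, there is a more elementary route. Since $c^{2^r}=\sum_g c_g g^{2^r}$ over $\mathbb{F}_2$ in the commutative case, taking $2^r\ge|G|$ would give $c^{2^r}=\varepsilon(c)e$. This fails for non-abelian groups, so the inductive nilpotence argument is the safer choice. The final sentence of (ii) then follows immediately by combining it with (i). I expect the main obstacle to be the nilpotence step, specifically checking that the kernel of the quotient map is precisely the ideal generated by $z+e$. That can be done by a dimension count: $(z+e)\mathbb{F}_2[G]$ has dimension $|G|/2$, spanned by $g+gz$ over coset representatives.
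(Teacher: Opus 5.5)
Your proposal is correct and follows the same route as the paper. For (i), both annihilators of a unit vanish, so $t_1=t_2=0$ in Prop.~\ref{prop:k}; for (ii), the augmentation ideal is nilpotent and $e+y$ has inverse $e+y+\dots+y^{m-1}$. The differences are minor: you prove the nilpotence yourself, by induction through a central involution $z$ with $\bigl((z+e)\mathbb{F}_2[G]\bigr)^2=0$, where the paper cites \cite[Prop.~6.3.3]{Webb_2016_finite_group_representations}, and you get the ``only if'' direction from $\varepsilon$ being a ring homomorphism (valid for every $G$) rather than from even-weight elements being nilpotent.
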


\begin{proof}
(i) If $c$ is a unit and $cy=0$, then $y=c^{-1}cy=0$; likewise $yc=0$ gives $y=0$.  So $\ann_r(c)=\ann_l(c)=0$, and $t_1=t_2=0$ in Prop.~\ref{prop:k}.
(ii) Over a $2$-group the augmentation ideal $I=\{y:\varepsilon(y)=0\}$ is nilpotent
\cite[Prop.~6.3.3]{Webb_2016_finite_group_representations}: $I^m=0$ for some $m$.  An element of even
weight lies in $I$, so it is nilpotent and not a unit.  If $c$ has odd weight, then $y=c+e$ has even
weight, so $y\in I$ and $y^m=0$; then $c=e+y$ has inverse $e+y+\dots+y^{m-1}$, since
$(e+y)(e+y+\dots+y^{m-1})=e+y^m=e$. By (i) we then have $k = 0$.
\end{proof}

For abelian $G$, part~(i) is a special case of Cor.~2.6 of Ref.~\cite{Eberhardt_arXiv_2024_bb_logical_operators}. In terms of screens, part~(i) is tested once per block class, by checking the rank/invertibility of $L(c)$ as a matrix. Part~(ii) removes whole groups before any block is formed: it has the largest impact for weights five, seven and nine, where splitting the check-weight into two parts requires a block with odd weight, such that we can safely skip all $2$-groups ($36\%$ of the non-abelian
groups of order at most $100$, $73\%$ of those of order at most $150$).

The next proposition characterizes the connected components of the Tanner graph of a code. As mentioned earlier, we discard disconnected codes as they are direct sums of smaller codes. 

\begin{proposition}[Connectivity {\cite[Sec.~IV\,C]{Lin_PRA_2024_2bga}}]
\label{prop:connectivity}
Let both blocks contain $e$, and write $G_a$ and $G_b$ for the support groups, the subgroups
generated by their elements. For subsets $X,Y\subseteq G$ we write $XY=\{xy:x\in X,\,y\in Y\}$. The connected components of the Tanner graph of $C(a,b)$ are indexed by
the double cosets $G_axG_b$: the left and right qubits in a component are the
elements of one double coset.  So $C(a,b)$ is connected if and only if $G_aG_b=G$.
\end{proposition}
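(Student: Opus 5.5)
The plan is to read the Tanner graph edges directly off the check matrices and then show that reachability is exactly membership in a double coset. Recall the layout from App.~\ref{sec:layout}. The $X$ check at $h$ has row $h$ of $H_X=(L(a)\,|\,R(b))$. Since $L(a)|g\rangle=\sum_{s\in a}|sg\rangle$, that check touches the left qubits $s^{-1}h$ for $s\in a$. Since $R(b)|g\rangle=\sum_{t\in b}|gt\rangle$, it touches the right qubits $ht^{-1}$ for $t\in b$.

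Similarly, the $Z$ check at $h$ has row $h$ of $(R(b)^{\mathsf T}\,|\,L(a)^{\mathsf T})=(R(b^*)\,|\,L(a^*))$. It touches the left qubits $ht$ for $t\in b$ and the right qubits $sh$ for $s\in a$. I will verify these index formulas once, carefully, since the convention $R(g)|h\rangle=|hg\rangle$ is easy to get backwards.

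From these edges I define a labeling. A left qubit $g$ is sent to $g$, a right qubit $g$ is sent to $g$, an $X$ check $h$ is sent to $h$, and a $Z$ check $h$ is sent to $h$. Along every edge the label changes by left multiplication by an element of $a^{\pm1}\subseteq G_a$ or by right multiplication by an element of $b^{\pm1}\subseteq G_b$. Hence every vertex in a connected component carries a label in a single double coset $G_axG_b$, and the components are at least as fine as the double cosets.

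For the converse I need that $e\in a$ and $e\in b$. Then the $X$ check $h$ is adjacent to both the left qubit $h$ and the right qubit $h$. Going from the left qubit $g$ through $X$ check $sg$ reaches the left qubit $s^{-1}\cdot sg$. More usefully, it reaches $X$ check $h=sg$, which is adjacent to the left qubit $g$ for $s\in a$. So from the left qubit $g$ one can step through $X$ check $sg$ to the right qubit $sg$, using $e\in b$, and then back through $X$ check $sg$ to the left qubit $sg$, using $e\in a$. Combined with the symmetric steps through checks on the other side, this means left qubit $g$ is connected to left qubit $s^{\pm1}g$ for $s\in a$, and to left qubit $gt^{\pm1}$ for $t\in b$ (the latter via the $Z$ check $g$). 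Since $G$ is finite, inverses are positive powers, so iterating these moves reaches every element of $G_agG_b$. Every right qubit and every check at label $h$ is adjacent to something with label $h$, so each double coset is a single component.

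The main obstacle is pinning down exactly which single-step moves are available, with their left or right side and inverse, so that both inclusions hold. Once those moves are fixed, connectivity holds iff there is one double coset. Because $e\in G_a,G_b$, this is equivalent to $G_aeG_b=G_aG_b=G$.
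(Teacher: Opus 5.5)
Your proof is correct and takes the same approach as the paper, which only sketches it and cites Lin--Pryadko: every edge changes the group label by left multiplication from $G_a$ or right multiplication from $G_b$, so each component lies inside one double coset, and with $e$ in both blocks the checks at $sg$ and at $g$ give the moves $g\mapsto s^{\pm1}g$ and $g\mapsto gt^{\pm1}$ that fill out each double coset. Your index formulas agree with the convention $R(g)|h\rangle=|hg\rangle$, and there is no cancellation over $\mathbb{F}_2$ because each matrix entry has at most one contributing term; the detour through the right qubit $sg$ is harmless but unnecessary, since the $X$ check $sg$ already touches the left qubit $sg$ via $e\in a$.
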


The basic intuition for the above is that multiplying by elements of the blocks gives moves across the Tanner graph; if one can reach everything via these moves, then we know the graph is connected.

Finally, an observation from Ref.~\cite{Lin_PRA_2024_2bga} about when two-block codes collapse to hypergraph product codes \cite{Tillich_IEEEIT_2014_hypergraph_product}, combined with the fact that the code is built from two classical codes, gives its parameters exactly, distance included.

\begin{proposition}[Hypergraph products {\cite[Statement~8 and the paragraph after it]{Lin_PRA_2024_2bga}}]
\label{prop:hgp}
Let both blocks contain $e$, and suppose $G_a\cap G_b=\{e\}$ and $G_aG_b=G$, so that every $g\in G$ is
uniquely $g=\alpha\beta$ with $\alpha\in G_a$, $\beta\in G_b$. Then $C(a,b)$ is the hypergraph product of
the classical codes with square check matrices $L_{G_a}(a)$, left multiplication by $a$ on
$\mathbb{F}_2[G_a]$, and $R_{G_b}(b)$, right multiplication by $b$ on $\mathbb{F}_2[G_b]$.  So
$k=2k_ak_b$, with $k_a,k_b$ the dimensions of the two classical codes, and $d$ is the least of their
distances and those of their transposes.
\end{proposition}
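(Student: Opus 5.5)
The plan is to exhibit an explicit qubit relabeling that turns $H_X$ and $H_Z$ of $C(a,b)$ into the hypergraph-product check matrices, and then read off $k$ and $d$ from the standard hypergraph-product theory~\cite{Tillich_IEEEIT_2014_hypergraph_product}. Since every $g\in G$ is uniquely $g=\alpha\beta$ with $\alpha\in G_a$, $\beta\in G_b$, the map $\alpha\otimes\beta\mapsto \alpha\beta$ is a bijection of bases. It gives a linear isomorphism $\Phi:\mathbb{F}_2[G_a]\otimes\mathbb{F}_2[G_b]\to\mathbb{F}_2[G]$. We use $\Phi$ to relabel both the left qubits, the right qubits, and the checks of each type.

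The key computation is that, under $\Phi$, left multiplication by $a$ acts only on the first factor and right multiplication by $b$ only on the second. Indeed $a\cdot\alpha\beta=(a\alpha)\beta$ with $a\alpha\in\mathbb{F}_2[G_a]$, because $a$ is supported in $G_a$. Likewise $\alpha\beta\cdot b=\alpha(\beta b)$ with $\beta b\in\mathbb{F}_2[G_b]$. Hence
\[
\Phi^{-1}L(a)\Phi=L_{G_a}(a)\otimes I,\qquad \Phi^{-1}R(b)\Phi=I\otimes R_{G_b}(b).
\]
Substituting into $H_X=(L(a)\,|\,R(b))$ gives $(A\otimes I\,|\,I\otimes B)$ with $A=L_{G_a}(a)$ and $B=R_{G_b}(b)$. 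Taking transposes, $H_Z$ becomes $(I\otimes B^{\mathsf T}\,|\,A^{\mathsf T}\otimes I)$. This is exactly the hypergraph product of the classical codes with square check matrices $A$ and $B$, in Tillich--Zémor form, up to the usual ordering conventions on the tensor factors. One should check that the block order and the transposes match the standard form; for square matrices this is bookkeeping.

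The parameters then follow from the hypergraph-product formulas. With $A$ of size $r_a\times r_a$ and $B$ of size $r_b\times r_b$, the dimension is $k=k_a k_b+k_a^{\mathsf T}k_b^{\mathsf T}$, where $k^{\mathsf T}$ denotes the dimension of the kernel of the transpose. For a square matrix $\dim\ker A=\dim\ker A^{\mathsf T}$, so this gives $k=2k_ak_b$.

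For the distance, the Tillich--Zémor result gives $d_X$ and $d_Z$ as minima of $d(A)$ or $d(B)$ and $d(A^{\mathsf T})$ or $d(B^{\mathsf T})$, with the convention that the distance of a trivial code is $\infty$. Taking the overall minimum gives the stated $d$. The main obstacle is exactly this convention when some kernel is trivial. If $k_a=0$ or $k_b=0$, then $k=0$ and the distance clause is vacuous. Otherwise all four codes are nontrivial, since $A$ and $A^{\mathsf T}$ have the same nullity and likewise for $B$, so every distance in the minimum is finite and the minimum is well defined.
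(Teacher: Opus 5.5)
Your proposal is correct and takes essentially the same route as the paper. The paper gives no separate argument: it cites Lin--Pryadko for the identification, which rests on the unique factorization $g=\alpha\beta$ under which $L(a)$ and $R(b)$ act on separate tensor factors, and then reads $k$ and $d$ off the standard Tillich--Z\'emor hypergraph-product formulas. Your treatment of the trivial-kernel case is sound, since for square matrices all four kernels are nonzero once $k>0$. Strictly, in the Tillich--Z\'emor convention $(H_1\otimes I\,|\,I\otimes H_2^{\mathsf T})$ your matrices match with $H_1=A$ and $H_2=B^{\mathsf T}$, which is harmless because the stated parameters are symmetric under transposition.
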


Such codes have distance bounded by classical codes of length $|G_a|$ and $|G_b|$, so they are rarely good. Our search simply recognizes them and records their exact parameters without a distance search. This screen has a pretty minor effect, but we kept it because minor is not zero, and it is extremely cheap ($2.4\%$ of the weight-five classes that reach it; $0$--$0.7\%$ on the probe samples).

\subsection{Screens considered but not run}
\label{sec:screens-unused}

We also considered several screens that the censuses do not run. These screens reject codes by exhibiting a light logical or proving that $k = 0$, similar in spirit to the ones in the previous section. We measured them by running every screen on the same $150$ codes of each kind, but found that none of them ever rejected a code that the witness-first ladder of App.~\ref{sec:distance} did not. All numbers below are quoted from this run. In order not to distract from the main discussion, here we only give a high-level overview, with further details and proofs, in particular of new statements, collected in a supplementary note in the
database repository~\dburl.

Two are strong but slow.  Logicals supported on a single block, the bound of Wang and Pryadko and of Lin and Pryadko~\cite{Wang_arXiv_2022_gb_distance_bounds,Lin_PRA_2024_2bga}, reject $53$--$98\%$ of codes on their own, but cost about three orders of magnitude more per code than the ladder.

Others are cheap but add nothing the ladder does not already do.  `Traps' read a light logical directly off the blocks: when $a=b$, when one block is a two-sided translation of the other, when the blocks share a
factor, or through the cosets of $\langle g\rangle$ when a block has a factor $e+g$.  These are published
for abelian groups~\cite{CruzBenito_arXiv_2026_llm_bb_search,Postema_arXiv_2025_bb_existence}, and their
two-sided forms hold for any group.  On their own they reject $4$--$71\%$ of codes, at a few times the ladder's cost.  The traps ran in the first non-abelian censuses. However, we dropped them as the witness-first ladder already rejected every code they rejected by the trap screens. The logicals supported on a single block were measured on probes only and were not adopted.

The rest act on $k$ or on the classes, and are weak or duplicate a cheaper step.  Multiplying both blocks by a common central unit leaves the code unchanged (for abelian groups, Statement~3(v) of Wang and Pryadko~\cite{Wang_arXiv_2022_gb_distance_bounds}), but it merges an almost negligible percentage of classes beyond (N1)--(N4). The unit test can be run in the smaller algebra $\mathbb{F}_2[G/O_2(G)]$, where $O_2(G)$ is the largest normal $2$-subgroup, and over a $2$-group a central involution $z$ gives $k\le2\,k(G/\langle z\rangle)$; both duplicate the per-block rank test and the $2$-group skip (Prop.~\ref{prop:unit-blocks}).  Finally, the nullity $z(c)=|G|-\rank L(c)$ of each block bounds $k\le2\min(z(a),z(b))$, which would prune blocks before pairing at a higher target $k$; every census requires only $k\ge1$.  

\subsection{Exact distance}
\label{sec:distance}

In this section, we discuss our distance engine, i.e. how we compute distances. Here, we take distance as $d = {\rm min}(d_X, d_Z)$. For abelian codes we always have $d_X = d_Z$~\cite[Lemma~3.7]{Lu_arXiv_2026_weight8_bb}. This can be asymmetric for non-abelian codes, but we did not take particular note of the asymmetry in this work. We note that our distance engine only uses algorithms already known in the literature. Besides accounting for the algorithms we use, this section will also try to offer a small explanation for why they go so far on two-block codes.

There are two parts to the distance computations. The first part of our search (see Sec.~\ref{ssec:search_pipeline}) is a coarse-grained short-listing, which only needs quick upper bounds to reject candidates. Upper bounds only require any witness of a particular weight. To obtain these, we used QDistEvol \cite{Webster_arXiv_2026_distance_finding,Cuellar_SwarmEvolComput_2021_genetic_distance}. Upper bounds were only allowed to reject, not accept a code. Each code was rejected immediately once it fell below a bar. In BPT mode, this was a $d$ that would cause its $kd^2/n$ to fall below the census bar, set at or below the anchor (App.~\ref{app:censuses}). In frontier mode, this was a $d$ that would push it off the frontier. All the codes we chose to publish come with such a witness. We call the combination the \emph{witness-first ladder}: for each candidate, QDistEvol first searches for at most $0.1$--$0.2$\,s for a logical lighter than the bar, and a logical it finds rejects the code as a witness. For our search, we found that when such a logical existed it usually appeared in about a millisecond, so most classes never reach the exact ladder. Only when no such logical is found does the exact ladder described below run, up to the bar. 

For our codebase, we ported QDistEvol to numba and benchmarked against the original package in the hope of speedup. However, we do not expect that this made a significant difference in the overall search efficiency. We capped the randomized upper bound per code by time (empirically chosen based on small trials), and before the port, we already found the original package to be extremely effective in obtaining randomized upper bounds. 
Next, for the shortlisted codes, we computed an exact distance. To do so, we had to obtain lower bounds via complete enumeration. We now discuss the exact distance engine in a little more detail, as this is where we find most of the speedup. We note that we expect this to be somewhat particular to two-block codes, but hope some of the lessons will generalize. The exact distance machine used three ingredients:
\begin{enumerate}
  \item The connected cluster search of Ref.~\cite{Dumer_IEEEIT_2017_distance_verification}.
  \item The pruning bounds from Ref.~\cite{Lee_arXiv_2026_ultra_high_rate}.
  \item Symmetry roots obtained from pynauty~\cite{pynauty,McKay_JSC_2014_nauty}.
\end{enumerate}
Each step was prototyped in python before being ported to numba. We briefly describe each step, deferring to the original references for details.

The connected cluster search is based on the observation that most candidate logicals can be rejected without even checking, for a simple reason. If the support of some operator $S$ satisfies every check, then every check that some $A \subseteq S$ leaves unsatisfied must also contain another qubit in $S - A$. To build a logical, one can imagine starting somewhere in this set. Then one only needs to grow the logical (or cluster) by checking which checks are unsatisfied, and then growing it along the support of these checks. We call the starting qubit the \textit{root}, as this growth induces a tree-like structure.

Pruning refers to two different methods used to prune branches off the induced tree. First, we note that finding the exact distance proceeds by going up a ladder, i.e. enumeration for $d = 1, 2, 3,...$ in order. The first pruning bound comes from the fact that in order to grow a cluster, one might require multiple qubits at once. At any given stage more than one check may be unsatisfied at once, and if these checks are disjoint, they cannot be satisfied by the same qubit. If the growth along this branch requires moving up the ladder, one can immediately prune the branch, i.e. it only needs to be checked later in the ladder. The other pruning bound is based on the observation that since each qubit sits in at most $D$ checks for an LDPC code, adding a qubit can change the number of violated checks by at most $D$. Hence at any given time, one can check how many checks are currently violated, and at least how many qubits must be added to the cluster in order to get to a null syndrome. If adding this many qubits moves us past the current weight of the ladder, we can prune the branch from the current step of the ladder.

Finally symmetry roots refers to the fact that if the Tanner graph has automorphisms, then the distance problem looks the same whether you start from some qubit, or its image under an automorphism. Enumerating which qubits are inequivalent under automorphisms reduces the problem accordingly. 

Empirically, we found two things made a big difference:
\begin{enumerate}
  \item For two-block codes, the trees obtained from the connected cluster search were extremely bushy near the leaves, which means that the pruning bounds made a massive difference. Hence, even though pruning bounds do not give a worst case improvement, they had a sizeable impact here. 
  \item Almost by construction, abelian two-block codes tend to have a large number of Tanner graph automorphisms. Non-abelian two-block codes have a smaller number of automorphisms, but they still made a significant difference.
\end{enumerate}

This adds up -- as an example for the two-Gross $[[288, 12, 18]]$, dist-m4ri took $10{,}459$\,s against about $2$\,s via our engine. Furthermore, we found computing distances to be roughly flat with $n$, only strongly scaling with $d$ itself. As a comparison, we also timed the computation on a non-two-block code (the mitten $[[150,30,10]]$ code took ${\sim}9$\,s via dist-m4ri, and ${\sim}0.02$\,s via our engine, against ${\sim}6$\,s and ${\sim}0.004$\,s for the two-block $[[102,20,10]]$), for which the speedup is still present but much less sizable. See Table~\ref{tab:roots-102-mitten} for a more systematic comparison.

Finally, we note that our early attempts used matching bipartition \cite{Webster_arXiv_2026_distance_finding,Dumer_ISIT_2016_distance_verification}, but found this to be too slow for our purposes. Similarly, we also tried SAT formulations, which were found to slow down too quickly with $n$, and set it aside. We did not benchmark for speed against the more standard integer programming methods. For checking the distances of our chosen codes, we further ran independent checks via QDistRnd \cite{Pryadko_JOSS_2022_qdistrnd} and dist-m4ri as published.

\begin{table}[tbp]
\centering
\footnotesize
\setlength{\tabcolsep}{4pt}
\begin{tabular}{@{}lccrrrrrrr@{}}
\toprule
 & & & & \multicolumn{3}{c}{nodes} & \multicolumn{3}{c}{seconds} \\
\cmidrule(lr){5-7}\cmidrule(l){8-10}
code & $|\Aut|$ & roots & sector & roots & no roots & no pruning & roots & no roots & dist-m4ri \\
\midrule
$[[102,20,10]]$ & $51$ & $2$ & $Z$ & $47{,}852$ & $538{,}209$ & $2{,}759{,}524$ & $0.004$ & $0.045$ & $5.99$ \\
 & & & $X$ & $69{,}325$ & $715{,}212$ & $4{,}387{,}262$ & $0.005$ & $0.057$ & $10.6$ \\
$[[150,30,10]]$ & $5$ & $30$ & $Z$ & $302{,}142$ & $1{,}108{,}554$ & $16{,}655{,}621$ & $0.024$ & $0.089$ & $9.29$ \\
 & & & $X$ & $333{,}652$ & $1{,}219{,}772$ & $13{,}922{,}707$ & $0.026$ & $0.093$ & $9.70$ \\
\bottomrule
\end{tabular}
\caption{Nodes visited by the exact search, summed over all rounds, for
the two-block $[[102,20,10]]$ over $C_{51}$ and the mitten $[[150,30,10]]$~\cite{Bhardwaj_arXiv_2026_high_rate_processors}, two codes of nearly the same rate ($0.196$ and $0.2$) and the same distance, $d=10$.  $|\Aut|$: automorphisms of
the Tanner graph (via pynauty); roots: one per orbit of
qubits; no pruning: roots kept, the pruning bounds of Ref.~\cite{Lee_arXiv_2026_ultra_high_rate} switched off.  Roots save $10$--$11\times$ on the two-block code and $3.7\times$ on the
lifted product, and the pruning bounds $58$--$63\times$ and $42$--$55\times$.  Node counts are exact and independent of machine or implementation. However, times are for one process on the laptop,
warm-up excluded, and compared to dist-m4ri with one thread (connected-cluster mode).}
\label{tab:roots-102-mitten}
\end{table}

\subsection{Verification}
\label{sec:verification}

There are two levels of verification: first, verification and benchmarking of our methods, and second verification of the provided codes. 

In terms of our methods, they were benchmarked against the literature as well as standard open-source libraries. In particular, we computed distances on the Lin--Pryadko archive in full, finding no contradictions with their upper bounds.\footnote{More precisely, every record whose published distance is a randomized upper bound of at most $15$ was settled; $121$ are lower than published, because the archive lists the distance of one sector.  Counts in Appendix~\ref{app:literature-set}.} We checked agreement against dist-m4ri wherever both reach. The engine was prototyped in pure-python before porting to numba, and both versions were extensively checked against each other.

An important part of the verification pipeline is the development of a canonical key. To obtain this, we first color the Tanner graph (qubits, $X$-checks, $Z$-checks), and then obtain a canonical label via  pynauty~\cite{pynauty,McKay_JSC_2014_nauty}. To save on memory, the key itself is a hash of the canonical label. Note this label is tied to a particular version of pynauty (2.8.8.1, wrapping nauty 2.8.8) but can be quickly recomputed on a new version if needed. 

In addition to deduplication during the enumeration step, this also helped us ensure that we could check our codes against known codes in the literature to confirm which codes were actually new. We note that this only checks that the Tanner graphs are not isomorphic -- the two codes could still have the same codespace. We do not consider this a huge issue for two reasons: first, we regard the presentation and checks of an LDPC code as an equally important part of specifying the code, and second, most of our standout codes are inequivalent to literature codes by parameters, which cannot hide such an equivalence.

Our database of codes is made public together with logical witnesses, so that their parameters can be checked by a reader completely independently of our distance engine and search pipeline. 

Finally, we have claimed that several of our censuses are exhaustive in some specified range. Every such census carries a proof of completeness: the orbit sizes, counted as the classes are built, must sum to $\binom{|G|}{w_a}$, $\binom{|G|}{w_b}$ and the raw pair count, and a separate script rechecks this from the records. We show this for the weight-six abelian census, amongst others, via a full-record mode in which every class also keeps a representative, and if rejected, its lightest logical. This is also available in the public database, and can be checked without our code. 

\section{Where the good codes live: details and targeted sampling}
\label{app:targeting}

This appendix provides supplementary information to Sec.~\ref{sec:where}. It primarily serves to elaborate on mathematical details related to targeting $k$. For further details on the other qualitative observations, we refer to the supplementary note in the database repository~\dburl.

\subsection{Components of the group algebra}
\label{app:components}

To understand our first statement, we must take a brief detour to understand the components of a group algebra. Conceptually this underlies all the statements we will make about $k$. 

These components are basically a block-diagonalization of the full algebra. Over $\mathbb{C}$, this is already familiar to the physicist: one can block-diagonalize algebras by projectors that commute with everything \cite{Serre_1977_linear_representations}. For an abelian group, this is just the Fourier transform, with the components being the characters, and for a non-abelian group the components are the irreps, each with its multiplicity. Here we instead work over $\mathbb{F}_2$. There are three things which must then be accounted for: First, the eigenvalues of group elements are roots of unity, which then live in finite extension fields $\mathbb{F}_{2^f}$, just as a real rotation can only be diagonalized after extending to $\mathbb{C}$. Second, the usual way of building projectors by averaging over a subgroup requires dividing by its order, which is impossible when that order is even since $2 = 0$ over $\mathbb{F}_2$. Third, when $|G|$ is even, even extending the field is not enough: some group elements act as Jordan blocks (in $\mathbb{F}_2[C_2]$, $(x+1)^2=0$), which no projector splits.  This never happens for finite groups over $\mathbb{C}$, where every representation can be made unitary.

A couple of preliminary facts. First, given a central idempotent $\eta$, an algebra $R$ splits into $R = R\eta \oplus R(e +\eta)$, where $e + \eta$ is also central idempotent, with both spaces two-sided ideals. In the matrix picture, right multiplication by $\eta$ is simply a projector commuting with both left and right representations, so this block diagonalizes the check matrices of a two-block code.  The \textit{components} are the finest possible splitting of an algebra, given by splitting along all the primitive central idempotents of an algebra~\cite[Prop.~3.6.1]{Webb_2016_finite_group_representations}. Note these are usually called blocks, but we reserve blocks for the $a, b$ of a two-block code. Note that finding the primitive central idempotents over $\mathbb{F}_2$ is polynomial in $|G|$, and only has to be done once per group: squaring is $\mathbb{F}_2$-linear on the commutative center $Z$ of $R$, since $(x+y)^2=x^2+y^2$ there, so the central idempotents are the kernel of the linear map $x\mapsto x^2+x$ on $Z$, and the primitive ones are obtained from a basis of that kernel by splitting each idempotent $\eta$ found so far into $\eta b$ and $\eta(e+b)$ for each basis element $b$.

Next, $\mathbb{F}_{2^f}$ is the field with $2^f$ elements, which is unique up to isomorphism~\cite[Thm.~9.14]{Knapp_2016_basic_algebra}. As mentioned this is needed for the eigenvalues of elements of $\mathbb{F}_2[G]$. Of course, idempotents give us components, which are full blocks rather than single eigenspaces. The analogous object here is the algebra of all $d \times d$ matrices over $\mathbb{F}_{2^f}$, written $M_d(\mathbb{F}_{2^f})$. 

A \textit{simple component} is a component that is isomorphic to $M_d(\mathbb{F}_{2^f})$. For $G$ with $|G|$ odd, all components are simple ($\mathbb{F}_2[G]$ is fully block diagonalizable). This is given by Maschke~\cite[Sec.~18.1]{DummitFoote_2004_abstract_algebra}, together with Wedderburn--Artin and Wedderburn's little theorem~\cite[Thms.~2.2, 2.48]{Knapp_2016_advanced_algebra}. As over $\mathbb{C}$, a simple component $M_d(\mathbb{F}_{2^f})$ is one irrep, $\mathbb{F}_{2^f}^d$, appearing $d$ times. All of this may not hold for $|G|$ even, where Jordan blocks may remain. Below, we will only make a statement about simple components, which are more controllable. 

One final odd end remains. We now know a simple component looks like $M_d(\mathbb{F}_{2^f})$. What determines $f$, and how do we compute it? The answer comes from centers. The center of $M_d(\mathbb{F}_{2^f})$ consists of matrices $\lambda I$, $\lambda\in\mathbb{F}_{2^f}$. This is one dimensional over $\mathbb{F}_{2^f}$, but our original algebra lives over $\mathbb{F}_2$, so the isomorphism must pull back to an $f$-dimensional space, which is the center of the component. Since the component is obtained by projection, we can also obtain its center by projection of the center of the ambient algebra\footnote{For $z\in Z$, $z\eta$ commutes with everything.  Conversely, if $c\in R\eta$ commutes with all of $R\eta$, then $c=c\eta=\eta c$, and for every $r\in R$, $cr=c(\eta r)=c(r\eta)=(r\eta)c=r(\eta c)=rc$, so $c\in Z$ and $c=c\eta\in Z\eta$.}, i.e. the center of a component $R \eta$ is in turn the projection of the center $Z$ of $R$, $Z(R\eta)=Z\eta$. Note that given $R$, $Z$ itself can also be obtained in time polynomial in $|G|$ via linear algebra. The factor $d$ can be obtained by a similar counting argument -- $M_d(\mathbb{F}_{2^f})$ has $d^2$ entries, each $f$-dimensional over $\mathbb{F}_2$. Hence, 
\begin{equation*}
  f=\dim_{\mathbb{F}_2}(Z\eta),\qquad d^2=\dim_{\mathbb{F}_2}(R\eta)/f,
\end{equation*}
which allows us to compute $f, d$ once per group without constructing the isomorphism $R\eta\cong M_d(\mathbb{F}_{2^f})$, which only needs to exist. 

For abelian $G$ ($d=1$) every component is a field, and for a cyclic group of odd order $m$ the
components are familiar: $x^m+1$ has no repeated factors (its derivative $x^{m-1}$ shares no factor with it), so by the Chinese remainder theorem~\cite[Sec.~9.5, Prop.~16]{DummitFoote_2004_abstract_algebra}
$\mathbb{F}_2[x]/(x^m+1)\cong\prod_i\mathbb{F}_2[x]/(p_i)$ over its irreducible factors $p_i$, and $f_i=\deg p_i$. One upshot is that for these groups finding the components is literally factoring $x^m+1$.  For example
$x^7+1=(x+1)(x^3+x+1)(x^3+x^2+1)$ gives $\mathbb{F}_2[C_7]\cong\mathbb{F}_2\times\mathbb{F}_8\times\mathbb{F}_8$.
For non-abelian $G$ part of the extension can go into the matrix size instead: in
$\mathbb{F}_2[C_7\rtimes C_3]\cong\mathbb{F}_2\times\mathbb{F}_4\times M_3(\mathbb{F}_2)^2$ an element of order $7$
has eigenvalues in $\mathbb{F}_8$, yet the two three-dimensional components are over $\mathbb{F}_2$, just as a
rotation matrix lies in $M_2(\mathbb{R})$ although its eigenvalues are complex.

The next proposition concerns dimensions of and from simple components. For a simple component $\mathbb{F}_2[G]\eta\cong M_d(\mathbb{F}_{2^f})$ and $c\in\mathbb{F}_2[G]$, write
$\nu(c\eta)=d-\rank(c\eta)$ for the nullity of $c\eta$: the image of $c$ in the component is a $d\times d$ matrix
over $\mathbb{F}_{2^f}$, and $\nu(c\eta)$ is the dimension of its kernel.  This needs no explicit
isomorphism either: left multiplication by $c\eta$ acts on each of the $d$ columns, so
$\nu(c\eta)=\dim_{\mathbb{F}_2}\{y\in R\eta:\ cy=0\}/(fd)$, computed inside $R\eta$.  We say $c$ \emph{vanishes} on a
component $\eta$ (simple or not) when $c\eta=0$; for a simple component this is $\nu(c\eta)=d$, the strongest form
of being singular there.

\begin{proposition}[Dimension from simple components]
\label{prop:k-components}
Let $\mathbb{F}_2[G]\eta\cong M_d(\mathbb{F}_{2^f})$ be a simple component.  It contributes exactly
$2f\,\nu(a\eta)\,\nu(b\eta)$ to $k$, and $k$ is the sum of the contributions of all components.  In
particular, if $|G|$ is odd, $k=2\sum f\,\nu(a\eta)\,\nu(b\eta)$ over all components.
\end{proposition}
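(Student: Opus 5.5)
Our starting point is Prop.~\ref{prop:k}, $k=t_1+t_2$ with $t_1=\dim(\ann_r(a)\cap\ann_l(b))$ and $t_2=\dim(\ann_l(a)\cap\ann_r(b))$, and we decompose both intersections along the primitive central idempotents. Write $R=\mathbb{F}_2[G]=\bigoplus_\eta R\eta$. Since each $\eta$ is central, $y\mapsto y\eta$ commutes with left and right multiplication by $a$ and $b$. Hence $ay=0$ and $yb=0$ hold if and only if they hold componentwise, i.e.\ $(a\eta)(y\eta)=0$ and $(y\eta)(b\eta)=0$ for every $\eta$. So $\ann_r(a)\cap\ann_l(b)=\bigoplus_\eta\{y\in R\eta: ay=0,\ yb=0\}$, and the same holds for the other intersection. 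This already shows that $k$ is the sum of per-component contributions $t_1(\eta)+t_2(\eta)$, for every component, simple or not.

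Now fix a simple component $R\eta\cong M_d(\mathbb{F}_{2^f})$. We choose the (existing) isomorphism, write $A,B\in M_d(\mathbb{F}_{2^f})$ for the images of $a\eta,b\eta$, and set $Y$ for the image of $y$. The first space is $\{Y: AY=0,\ YB=0\}$. The condition $AY=0$ says every column of $Y$ lies in $\ker A$. The condition $YB=0$ says every row of $Y$ lies in the left kernel of $B$. Choosing bases, this space is $\ker A\otimes \ker_{\mathrm{left}}B$, the matrices $Y=\sum u_iv_j^{\mathsf T}$ with $u_i\in\ker A$ and $v_j^{\mathsf T}B=0$. Its $\mathbb{F}_{2^f}$-dimension is $\nu(a\eta)\,\nu(b\eta)$, since the left nullity of $B$ equals $d-\rank B$. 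To verify the tensor claim, change basis on both sides so that $\ker A$ is spanned by the first $\nu(a\eta)$ standard vectors and the left kernel of $B$ by the first $\nu(b\eta)$; the space is then exactly the matrices supported in the top-left $\nu(a\eta)\times\nu(b\eta)$ block. Passing to $\mathbb{F}_2$ multiplies the dimension by $f$, giving $t_1(\eta)=f\nu(a\eta)\nu(b\eta)$. The second space, $\{Y: YA=0,\ BY=0\}$, has dimension $f\,\nu(b\eta)\nu(a\eta)$ by the same argument with roles swapped. This uses that the right and left nullities of a square matrix agree, which is the rank--nullity fact already used for $z(c)$ in App.~\ref{app:prelim}. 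Adding the two gives $2f\nu(a\eta)\nu(b\eta)$.

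Finally, when $|G|$ is odd, every component is simple by the Maschke--Wedderburn facts recalled above, so summing gives the displayed formula. The main point requiring care is the claim that the abstract isomorphism $R\eta\cong M_d(\mathbb{F}_{2^f})$ is a ring isomorphism, so that it transports left and right annihilators correctly. We also need that the nullity $\nu$ defined intrinsically matches the matrix nullity; this is the column-counting identity stated before the proposition, which we will verify by the same block-matrix argument.
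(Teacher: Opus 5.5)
Your proposal is correct and matches the paper's proof step for step. Both split the two annihilator intersections of Prop.~\ref{prop:k} along the primitive central idempotents. In a simple component, both then identify $\{Y: AY=0,\ YB=0\}$ with $\ker A\otimes\{w: wB=0\}$; the paper writes this as $Y=UDW$ with $U,W$ bases of the two kernels, where you use a change of basis. Either way this gives $f\,\nu(a\eta)\nu(b\eta)$ for each of $t_1,t_2$, so the component contributes $2f\,\nu(a\eta)\nu(b\eta)$.
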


\begin{proof}
  \emph{Splitting over components.}  Let $\eta_1,\dots,\eta_m$ be the primitive central idempotents, so
$\eta_i\eta_j=0$ for $i\ne j$ and $\sum_i\eta_i=e$~\cite[Prop.~3.6.1]{Webb_2016_finite_group_representations}, and write $R=\mathbb{F}_2[G]$ and $R_i=R\eta_i$.  Then for any $y \in \mathbb{F}_2[G]$, we have
\begin{equation*}
  R=\bigoplus_{i=1}^m R_i,\qquad y=\sum_i y_i,\quad y_i=y\eta_i=\eta_i y ,
\end{equation*}
and in particular $a_i=a\eta_i$, $b_i=b\eta_i$. Since $\eta_i$ is central and idempotent,
$ay_i=a\eta_iy_i=a_iy_i\in R_i$ and $y_ib=y_ib_i\in R_i$, so
\begin{equation*}
  ay=\sum_i a_iy_i,\qquad yb=\sum_i y_ib_i ,
\end{equation*}
with the $i$-th term in $R_i$.  The sum is direct, so $ay=0$ if and only if $a_iy_i=0$ for every $i$, and $yb=0$ if and only if $y_ib_i=0$ for every $i$.  

The key point is that this allows us to decompose the annihilator expressions appearing in Prop.~\ref{prop:k}.
\begin{equation*}
  \begin{aligned}
    \ann_r(a)\cap\ann_l(b)&=\bigoplus_{i=1}^m\{y_i\in R_i:\ a_iy_i=0,\ y_ib_i=0\}, \\
    \ann_l(a)\cap\ann_r(b)&=\bigoplus_{i=1}^m\{y_i\in R_i:\ y_ia_i=0,\ b_iy_i=0\},
  \end{aligned}
\end{equation*}
so $t_1=\sum_i t_1^{(i)}$ and $t_2=\sum_i t_2^{(i)}$, with $t_1^{(i)},t_2^{(i)}$ the dimensions of the
$i$-th summands, and by Prop.~\ref{prop:k}
\[
  k=\sum_{i=1}^m k_i,\qquad k_i=t_1^{(i)}+t_2^{(i)} .
\]
This step uses only that the $\eta_i$ are central, orthogonal idempotents summing to $e$, so it holds for non-simple components too.

\emph{One simple component.}  We now examine the contribution of a single simple component. Let us denote the aforementioned isomorphism as  $\varphi:R_i\to M_d(\mathbb{F}_{2^f})$. Let $A = \varphi(a_i), B= \varphi(b_i), Y = \varphi(y_i)$. By isomorphism, we rewrite the former expression of $t_1^{(i)}$ as the $\mathbb{F}_2$-dimension of  
\begin{equation*}
  V_1=\{Y\in M_d(\mathbb{F}_{2^f}):\ AY=0,\ YB=0\}
\end{equation*}
Since $\varphi$ is an $\mathbb{F}_2$-linear bijection, $t_1^{(i)}=\dim_{\mathbb{F}_2}V_1$.  Note that $V_1$ itself is now a space over $\mathbb{F}_{2^f}$, so $\dim_{\mathbb{F}_2}V_1=f\,\dim_{\mathbb{F}_{2^f}}V_1$.

Let the columns of $U$ ($d\times\alpha$) be a basis of $\ker A=\{u:Au=0\}$, so $\alpha=\nu(A)$, and let the rows
of $W$ ($\beta\times d$) be a basis of the left kernel $\{w:wB=0\}$, so $\beta=\nu(B)$; for a square matrix the
left and right kernels have the same dimension $d-\rank$.  Every column of $Y$ lies in $\ker A$, so $Y=UC$ for a
unique $\alpha\times d$ matrix $C$.  Then $YB=UCB$, and since $U$ has independent columns, $YB=0$ if and only if
$CB=0$, i.e.\ every row of $C$ lies in the left kernel of $B$, so $C=DW$ for a unique $\alpha\times\beta$
matrix $D$.  Hence
\begin{equation*}
  V_1=\{\,UDW:\ D\in\mathbb{F}_{2^f}^{\alpha\times\beta}\,\},
\end{equation*}
every such matrix lies in $V_1$, since $AU=0$ and $WB=0$, and $D\mapsto UDW$ is injective, so $\dim_{\mathbb{F}_{2^f}}V_1=\alpha\beta$ and $t_1^{(i)}=f\,\nu(A)\,\nu(B)$, with $\nu(A)=\nu(a\eta_i)$ and $\nu(B)=\nu(b\eta_i)$ by the definition of $\nu$.  An analogous argument holds for $t_2^{(i)}$. Summing gives the first claim. When $|G|$ is odd, every component is simple, which gives the second claim.
\end{proof}

Note that a version of this statement for semisimple bivariate bicycle codes has previously appeared in the literature \cite[Thm.~81]{Sabo_arXiv_2026_semisimple_bb}, \cite[Cors.~2.12, 4.4]{Eberhardt_arXiv_2024_bb_logical_operators}.

For even $|G|$ the non-simple components also contribute to $k$, with no closed form; the census computes $k$ exactly from annihilators (Prop.~\ref{prop:k}), and Prop.~\ref{prop:ideal-target} still gives a lower bound when both blocks vanish on a non-simple component.

\subsection{From algebra to statistics}

Prop.~\ref{prop:k-components} says exactly which pairs have $k>0$, in terms of the images of $a$ and $b$ in the
components. However, a sampler chooses a block by its support, not by its images.  A uniformly random element of $\mathbb{F}_2[G]$
would have exactly uniform, independent images, since $\mathbb{F}_2[G]\cong\prod_i R_i$; our blocks are uniform only among elements of weight $w$, but empirically we find that their images in each nontrivial component behave much like uniform random matrices, for which calculations are much easier. A uniformly random $d\times d$ matrix over $\mathbb{F}_{2^f}$ is invertible when its columns are independent: the $i$-th column must avoid the span of the previous $i-1$, which has $2^{f(i-1)}$ of the $2^{fd}$
vectors.  So it is singular with probability
\[
  p(d,f)=1-\prod_{i=1}^{d}\bigl(1-2^{-fi}\bigr),
\]
which is $2^{-f}$ for a field component (a component with $d=1$) and e.g.\ $0.67$
for $M_3(\mathbb{F}_2)$.  Treating the images as uniform and independent across components, we can heuristically forecast $k > 0$ from the group alone,
$P(k>0)\approx1-\prod\bigl(1-p(d,f)^2\bigr)$ over the nontrivial components; the trivial component is left out because it needs no forecast.\footnote{If
one block has odd weight (always the case at odd check weight), that block is invertible there and it contributes
nothing; if both have even weight, both vanish there and every pair has $k\ge2$.  (For even $|G|$ there is no
trivial component: $\eta_G=\sum_g g$ squares to $0$, and the parity of a block is read off a non-simple component.)} The forecast matches the measured rate at correlation $0.94$ over $306$ non-abelian groups of order $120$--$150$ (Fig.~\ref{fig:theory-k}).  It ignores
non-simple components, which is why it underestimates $P(k>0)$ as $|O_2(G)|$ grows. While this is promising, it ultimately only provides a heuristic -- this is true for each of the strategies outlined below. While in theory they each sound like plausible routes to targeted sampling, in practice we timed them against a uniform sampling strategy before committing.  

\begin{figure}[tbp]
\centering
\includegraphics[width=0.5\linewidth]{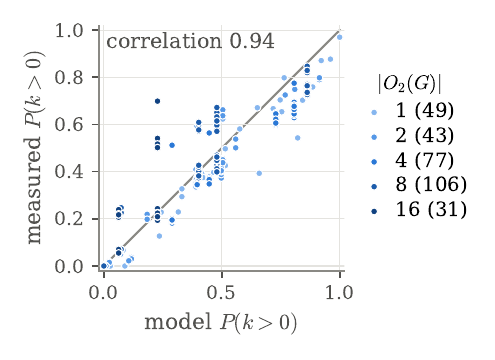}
\caption{What the group algebra predicts about $k$.  For each of $306$ non-abelian groups of order
$120$--$150$, the measured fraction of uniformly random block pairs with $k>0$ ($1000$ pairs per group, block
weights $w_a=4$, $w_b=3$), against the forecast from the group's components alone, which treats the images of the blocks in the components as independent
uniform random matrices.  The correlation, $0.94$, is surprisingly high for so crude a model, but the forecast remains a heuristic: it is not a substitute for computing $k$.  Colors give $|O_2(G)|$, the order of the largest normal $2$-subgroup.
The points above the diagonal, at large $|O_2(G)|$, are groups where the non-simple components, which the forecast
leaves out, contribute much of $k$: a normal $2$-subgroup acts trivially on every simple module in characteristic two, so at most $|G|/|O_2(G)|$ dimensions of $\mathbb F_2[G]$ lie in simple components.}
\label{fig:theory-k}
\label{fig:theory}
\end{figure}

\subsection{Targeted sampling via components}
\label{app:targeted-sampling}

\emph{Cost.}  Targeted sampling can only work if the filter is cheaper than the witness-first ladder. In this case it is: the components, their $d$ and $f$, and their central idempotents depend on the group only and are computed once per group, by linear algebra over $\mathbb{F}_2$. In addition, everything that depends on a block is computed only once per block (as compared to once per pair/code), e.g.\ its annihilators.

\paragraph{Three ways to guarantee $k$.}

All three statements below read off the contribution $2f\,\nu(a\eta)\,\nu(b\eta)$ of
Prop.~\ref{prop:k-components}, and differ only in how much nullity they force:
\begin{center}
\small
\begin{tabular}{@{}llccc@{}}
\toprule
statement & condition & nullity forced & guaranteed contribution & linear \\
\midrule
targeting a component & $a\eta$, $b\eta$ singular & $\nu\ge1$ & $2f$ & no \\
targeting a rank & $au=0$, $u'b=0$ & $\nu(a\eta)\ge r$, $\nu(b\eta)\ge r'$ & $2f\,rr'$ & yes \\
targeting an ideal & $a\eta=b\eta=0$ & $\nu=d$ & $2fd^2$ & yes \\
\bottomrule
\end{tabular}
\end{center}
In principle any of them can drive a sampler, for any group.  In a field component the three conditions
coincide, since a singular element of a field is zero; for abelian groups, where every component is a field, they collapse into the same 
sampler, the \textit{shared-factor sampler}, which is the one we used.  

For non-abelian groups they differ,
and empirically they worked out quite differently: in a weight-seven control at equal time, the rank condition found
$44$ new front codes against $11$ for uniform draws and $18$ for the ideal condition, whose hits were mostly codes
that uniform draws also find.  The later non-abelian runs therefore used rank targeting alongside uniform draws. 

One conjecture for this difference between abelian and non-abelian groups is that uniform sampling for each kind of group has different failure points. In particular, rejected abelian codes in uniform sampling tended to have $k = 0$, whereas most non-abelian codes tended to have non-zero $k$ already.

\begin{proposition}[Targeting a component]
\label{prop:targeting}
Let $\mathbb{F}_2[G]\eta\cong M_d(\mathbb{F}_{2^f})$ be a simple component, and choose $a,b$ with $a\eta$ and
$b\eta$ both singular.  Then $k\ge2f$.
\end{proposition}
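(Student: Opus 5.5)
The plan is to read this off Prop.~\ref{prop:k-components}. That proposition writes $k$ as a sum of contributions from all components, and each contribution $k_i=t_1^{(i)}+t_2^{(i)}$ is a dimension, hence nonnegative. This holds for non-simple components too, since the splitting step of that proof uses only centrality and orthogonality of the $\eta_i$. So $k$ is bounded below by the contribution of any single component. For the given simple component $\eta$, that contribution is exactly $2f\,\nu(a\eta)\,\nu(b\eta)$.

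The only remaining step is to check that singularity gives $\nu\ge1$. Under the isomorphism $\mathbb{F}_2[G]\eta\cong M_d(\mathbb{F}_{2^f})$, "$a\eta$ singular" means its image $A$ is a non-invertible $d\times d$ matrix, so $\rank A<d$ and $\nu(a\eta)=d-\rank A\ge1$; likewise $\nu(b\eta)\ge1$. I would add a line to justify that the paper's notion of "singular" in the algebra, meaning a non-unit or zero divisor in $R\eta$, agrees with matrix singularity. This follows because $\varphi$ is an algebra isomorphism, so units correspond to units. Hence the contribution is at least $2f\cdot1\cdot1=2f$, and $k\ge2f$.

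I do not expect a real obstacle. The one point to be careful about is that the contributions of the other (possibly non-simple) components cannot be negative, which is immediate because they are dimensions of subspaces.
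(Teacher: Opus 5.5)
Your proposal is correct and follows the paper's proof, which is the one-line observation that Prop.~\ref{prop:k-components} gives this component the contribution $2f\,\nu(a\eta)\,\nu(b\eta)$ with $\nu(a\eta),\nu(b\eta)\ge1$. Your two added remarks make explicit what the paper leaves implicit: the other components contribute nonnegative dimensions, and singularity in $R\eta$ corresponds to a non-invertible matrix under the isomorphism.
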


\begin{proof}
This is Prop.~\ref{prop:k-components} with $\nu(a\eta),\nu(b\eta)\ge1$.
\end{proof}

As mentioned earlier, for a field component vanishing and singularity coincide; in a matrix component they differ.  Vanishing is a linear
condition on each block, so it can be imposed exactly, and it gives more than one component's worth of $k$:

\begin{proposition}[Targeting an ideal]
\label{prop:ideal-target}
Let $\eta$ be a central idempotent of $\mathbb{F}_2[G]$ and suppose $a\eta=b\eta=0$.  Then
\[
  k\;\ge\;2\dim_{\mathbb{F}_2}\mathbb{F}_2[G]\eta .
\]
For a simple component $\mathbb{F}_2[G]\eta\cong M_d(\mathbb{F}_{2^f})$ this is $2fd^2$; for abelian $G$ it is
the shared-factor bound $k\ge2f$.
\end{proposition}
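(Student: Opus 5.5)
The plan is to show that the whole ideal $R\eta$, with $R=\mathbb{F}_2[G]$, sits inside both annihilator intersections of Prop.~\ref{prop:k}, and then read off $k=t_1+t_2$. Take any $y\in R\eta$, so $y=y\eta=\eta y$ because $\eta$ is a central idempotent. Then
\[
  ay=a\eta y=(a\eta)y=0,\qquad yb=y\eta b=y(b\eta)=0,
\]
using centrality to move $\eta$ next to $b$. Hence $R\eta\subseteq\ann_r(a)\cap\ann_l(b)$, and so $t_1\ge\dim_{\mathbb{F}_2}R\eta$. The same computation gives $ya=y\eta a=y(a\eta)=0$ and $by=b\eta y=(b\eta)y=0$. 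So $R\eta\subseteq\ann_l(a)\cap\ann_r(b)$, and $t_2\ge\dim_{\mathbb{F}_2}R\eta$. Prop.~\ref{prop:k} then gives $k\ge2\dim_{\mathbb{F}_2}R\eta$.

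For the two specializations I would argue as follows. If $R\eta\cong M_d(\mathbb{F}_{2^f})$, its $\mathbb{F}_2$-dimension is $d^2f$, which yields $2fd^2$. This also agrees with Prop.~\ref{prop:k-components}, since $\nu(a\eta)=\nu(b\eta)=d$ there. For abelian $G$ and a primitive $\eta$ belonging to a simple component, the component is a field $\mathbb{F}_{2^f}$, so $d=1$ and the bound is $2f$. This is the shared-factor bound: in the cyclic odd-order picture, the condition $a\eta=0$ says exactly that the irreducible factor $p_i$ divides $a$.

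I do not expect a real obstacle. The only points needing care are to use $\eta$'s centrality on both sides, so that both $t_1$ and $t_2$ are covered, and to note that $\eta$ need not be primitive or simple for the main inequality.
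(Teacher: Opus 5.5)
Your proposal is correct and follows essentially the same route as the paper. Both arguments use $y=y\eta=\eta y$ and the centrality of $\eta$ to place all of $\mathbb{F}_2[G]\eta$ inside both annihilator intersections of Prop.~\ref{prop:k}, and then add $t_1+t_2$. Your dimension count $fd^2$ for a simple component and the field case $d=1$ for abelian $G$ match the specializations in the statement.
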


\begin{proof}
For $y\in\mathbb{F}_2[G]\eta$ we have $y=y\eta=\eta y$, so $ay=(a\eta)y=0$ and $yb=y(\eta b)=y(b\eta)=0$.
Hence $\mathbb{F}_2[G]\eta\subseteq\ann_r(a)\cap\ann_l(b)$, and likewise
$\mathbb{F}_2[G]\eta\subseteq\ann_l(a)\cap\ann_r(b)$, so $t_1,t_2\ge\dim\mathbb{F}_2[G]\eta$ in
Prop.~\ref{prop:k}.  This only uses that $\eta$ is central and idempotent, so it also holds for groups of even order.
\end{proof}

Between singular and zero lie the intermediate ranks, which can also be imposed by linear conditions:

\begin{corollary}[Targeting a rank]
\label{cor:rank-target}
Let $\mathbb{F}_2[G]\eta\cong M_d(\mathbb{F}_{2^f})$ be a simple component, and let
$u,u'\in\mathbb{F}_2[G]\eta$ have ranks $r,r'$ as matrices.  If $au=0$ and $u'b=0$, the component
contributes at least $2f\,r\,r'$ to $k$.
\end{corollary}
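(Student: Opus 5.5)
The plan is to reduce the corollary to Prop.~\ref{prop:k-components}: it suffices to show $\nu(a\eta)\ge r$ and $\nu(b\eta)\ge r'$. The component then contributes $2f\,\nu(a\eta)\,\nu(b\eta)\ge 2f\,rr'$, as claimed.

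\emph{Step 1: transport to matrices.} Fix an isomorphism $\varphi:\mathbb{F}_2[G]\eta\to M_d(\mathbb{F}_{2^f})$, which exists by simplicity. Set $A=\varphi(a\eta)$, $B=\varphi(b\eta)$, $U=\varphi(u)$ and $U'=\varphi(u')$. By the hypothesis on $u,u'$, $\rank U=r$ and $\rank U'=r'$. Since $u\in\mathbb{F}_2[G]\eta$, we have $u=\eta u$, so $au=a\eta u=(a\eta)u$. Hence $au=0$ gives $AU=0$, because $\varphi$ is multiplicative on the component. Likewise $u'=u'\eta$ gives $u'b=u'(b\eta)$, so $U'B=0$.

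\emph{Step 2: rank–nullity.} Every column of $U$ lies in $\ker A$, so $\ker A$ contains the column space of $U$, which has dimension $r$ over $\mathbb{F}_{2^f}$. Therefore $\nu(A)=d-\rank A\ge r$. Dually, every row of $U'$ lies in the left kernel of $B$, so that left kernel has dimension at least $r'$. For a square matrix the left kernel also has dimension $d-\rank B$, so $\nu(B)\ge r'$.

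\emph{Step 3: conclude.} By the definition in the text, $\nu(a\eta)=\nu(A)$ and $\nu(b\eta)=\nu(B)$. The rank notion used there is independent of the chosen isomorphism, because any two isomorphisms differ by an automorphism of $M_d(\mathbb{F}_{2^f})$. Such an automorphism is a conjugation composed with a field automorphism (Skolem–Noether), and both preserve rank. Substituting the bounds from Step 2 into the contribution formula of Prop.~\ref{prop:k-components} gives at least $2f\,rr'$.

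The main point needing care is Step 1: the hypotheses $au=0$ and $u'b=0$ are stated in the full algebra, while the bound concerns the component. This is handled by centrality of $\eta$, which gives $au=a\eta u$ and $u'b=u'b\eta$. The rest is routine linear algebra.
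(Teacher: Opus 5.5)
Your proof is correct and follows the paper's own argument. The column space of $u$ lies in the kernel of $a\eta$ and the row space of $u'$ lies in the left kernel of $b\eta$, so $\nu(a\eta)\ge r$ and $\nu(b\eta)\ge r'$, and Prop.~\ref{prop:k-components} then gives the bound. Your added steps, moving $au=0$ and $u'b=0$ into the component via centrality of $\eta$ and checking that rank does not depend on the chosen isomorphism, fill in details the paper's one-line proof leaves implicit.
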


\begin{proof}
$au=0$ puts the column space of $u$ in the kernel of $a\eta$, so $\nu(a\eta)\ge r$; $u'b=0$ puts the row
space of $u'$ in the left kernel of $b\eta$, so $\nu(b\eta)\ge r'$.  Prop.~\ref{prop:k-components}
gives the contribution $2f\,\nu(a\eta)\nu(b\eta)$.
\end{proof}

Purely as an empirical observation, a pattern where one block vanishing and the other singular in
the same component appears to be typical of the best codes at weights seven and nine, rather than both blocks vanishing.

\paragraph{The samplers.}

We can turn each statement into a targeted sampling strategy. First, we choose a target. This is a set of linear conditions on the blocks, together with the lower bound on $k$ that every pair satisfying them has.\footnote{In principle because we fixed block weights, this is a small combinatorial search rather than linear algebra.} This is chosen so that this bound lies in a desired range. Then we draw blocks satisfying the conditions.

\begin{itemize}[leftmargin=14pt,itemsep=1pt,topsep=2pt]
\item \emph{Abelian groups.}  Every component is a field.  The target is a set $S$ of components, and both blocks
are drawn in the ideal of elements vanishing on all of them, so $k\ge2\sum_{i\in S}f_i$
(Prop.~\ref{prop:ideal-target}).  For cyclic groups this coincides with the divisor-first rule of earlier cyclic
searches~\cite{Panteleev_Quantum_2021_bposd,Lu_arXiv_2026_divisor_driven}. We refer to this as the \textit{shared-factor sampling} strategy.
\item \emph{Non-abelian groups.}  The target is a pattern of ranks $(r,r')$ in one or a few simple components.
Elements $u,u'$ of those ranks are drawn in the components, then $a$ with $au=0$ and $b$ with $u'b=0$
(Cor.~\ref{cor:rank-target}).  A rank equal to $d$ is imposed with $u=\eta$ and makes the block vanish there;
ranks below $d$ are practical only over small fields, since singular elements of large fields are rare to draw. We refer to this as the \textit{rank-targeting} strategy.
\end{itemize}
In both cases targets are drawn uniformly among the admissible ones, and uniform draws were kept alongside, since
targeting selects which codes are seen.

\section{The literature set and the censuses}
\label{app:literature}

In this paper, we primarily compare our codes against a fixed set of published codes. Furthermore, our results come from censuses whose scope is stated, and then left to run, often overnight. In this section, we describe the literature set and census specifications.

\subsection{The literature set}
\label{app:literature-set}

The literature set is a set of the `best' published codes, as best as we could find. Each record gives a code's
parameters $(n,k,d)$, check weight, qubit degree, construction, and the place
in the source where the code is given (table or page).  The comparisons in
this paper use v.~8 (see Ref.~\dburl{} for versioning details), which has $2{,}611$ records from $47$ papers
(Table~\ref{tab:literature-sources}).

\emph{Selection rules.}
\begin{enumerate}
\item \emph{Scope.}  Explicit qubit stabilizer codes, CSS or not, with
  $n\le1000$ and maximum check weight at most $10$, with larger weights (up to $12$) where we found them, published by the cut-off
  of 29 September 2026. 
\item \emph{Front test.}  A sweep of the papers in scope read $4{,}063$ code
  rows from $41$ papers.  A code entered the set only if no code already
  in it dominated the new code, in at least one cumulative class of check
  weight at most $c$, for $c=6,8,9,10,12$, with published distances taken at
  face value.  Dominated codes were left out; a code dominated only by
  another new code whose distance was an upper bound or an estimate was
  kept.
\item \emph{One archive in full.}  The data archive of Lin and
  Pryadko~\cite{Lin_PRA_2024_2bga} is a census rather than a selection, and
  all $1{,}895$ of its records are in the set, dominated or not.
\item \emph{Later additions.}  V.~7 adds $47$ records published before
  the cut-off that the sweep missed: hyperbolic surface and color codes, and
  one pair-partition code of Okada and Kasai.  V.~8 adds the $29$ codes of Yang,
  Duckering and Dua~\cite{Yang_arXiv_2026_gala} with $n\le1000$, all at check weight twelve, with
  distances exact as published.  The distance floors were set in
  v.~6 and not moved (Sec.~\ref{app:anchors}).
  
\item \emph{Exclusions.}  $111$ records are kept in the set but excluded from
  the fronts: duplicates, codes outside the scope of the comparisons, and a
  few whose published parameters did not check out.
\end{enumerate}
The fronts in this paper use the part of the set with check weight five to
nine and $n\le320$.  The set was frozen on the cut-off date and is
read-only: corrections are made in new dated versions, never in a frozen
file, and the checksum of every version is recorded.

\emph{How the set was assembled.}
\begin{enumerate}
\item The sweep above was carried out by several independent readers, all of them LLM agents (Claude), each
  recording every code with its location in the source.
\item An adversarial audit was carried out, again by LLM agents, of every record on a front against the source
  PDF.
\item The anchor codes (Sec.~\ref{app:anchors}), as well as a selection of standout codes from the literature, were checked by hand against their sources. The sources were also hand-checked by the author against papers they were familiar with. Most of the anchor codes were already in the literature set long before it was frozen.
\item Data archives downloaded and every code rebuilt from its listed
  blocks, rather than transcribed.  The largest is the archive of Lin and
  Pryadko~\cite{Lin_PRA_2024_2bga}, $1{,}895$ records ($73\%$ of the set), each
  rebuilt in GAP.
\item New arXiv postings were monitored up to the cut-off.
\end{enumerate}
Public leaderboards and code databases that are not published papers were
consulted for awareness only and are not in the set.

\emph{Distances.}  We take $d=\min(d_X,d_Z)$.  Where we could, we recomputed
published distances exactly (Appendix~\ref{sec:distance}).  Every archive
record whose published distance is a randomized upper bound of at most $15$
was settled: $1{,}539$ are exact at the published value, and $121$ are lower.
The lowered ones are non-abelian codes for which the archive lists the
distance of one sector, not the minimum over both.  Six codes of Aydin, Tamo
and Barg~\cite{Aydin_arXiv_2026_coset_codes} were made exact at their published values,
with witnesses; four codes remain open with proven lower bounds: $[[252,16]]$ ($d\ge19$) and
$[[306,22]]$ ($d\ge23$) of Aydin et al., and $[[288,20]]$, $[[288,14]]$ ($d\ge20$) of Symons
et al.~\cite{Symons_arXiv_2025_bb_covering_graphs}.  Every lowered distance comes with a witness, a logical operator of
the stated weight.  Of the records on the fronts, $1{,}689$ distances are exact
by our computation and $393$ exact as published; $414$ are published upper
bounds or estimates that we did not settle, and $4$ are lower bounds.

\emph{Two conventions.}  On a \emph{strict} front, a code whose distance is
only an upper bound may sit on the front but never removes another code; our
codes enter with exact distances only.  On a \emph{published-$d$} front,
every published distance is taken at face value.  The figures draw the strict
front and the published-$d$ front together.

\emph{Check weight.}  Records are filed by their maximum check weight.  Where a source
reports a combined weight instead (Qian and Li's ``overall weight'' counts check weight and
qubit degree together~\cite{Qian_arXiv_2026_multiagent_qldpc}), two-block codes were rebuilt and
agree, since there the two coincide; five records of other shapes are filed at the weight as
reported, which is an upper bound on their check weight.  None of them sets a front point or an
anchor.

\begin{table}[tbp]
\centering
\footnotesize
\begin{tabular}{@{}llr@{}}
\toprule
source & codes & records \\
\midrule
\multicolumn{3}{@{}l}{\emph{Two-block and bicycle codes}} \\
Lin--Pryadko, paper and data archive \cite{Lin_PRA_2024_2bga} & two-block, abelian and non-abelian & $1{,}909$ \\
Bravyi et al. \cite{Bravyi_Nature_2024_bb_codes} & bivariate bicycle & $7$ \\
Liang et al. \cite{Liang_PRXQ_2025_twisted_tori} & twisted tori & $79$ \\
Liang et al. \cite{Liang_arXiv_2025_self_dual_bb} & self-dual bivariate bicycle & $6$ \\
Aydin--Tamo--Barg \cite{Aydin_arXiv_2026_coset_codes} & coset and cover two-block & $66$ \\
Symons et al. \cite{Symons_arXiv_2025_bb_covering_graphs} & bivariate-bicycle covers & $33$ \\
Cruz-Benito et al. \cite{CruzBenito_arXiv_2026_llm_bb_search} & bivariate bicycle (search) & $29$ \\
Qian--Li \cite{Qian_arXiv_2026_multiagent_qldpc} & bivariate bicycle (search) & $22$ \\
Tripier et al. \cite{Tripier_arXiv_2026_walking_cat} & generalized and bivariate bicycle & $23$ \\
Wang et al. \cite{Wang_arXiv_2024_coprime_bb} & coprime bivariate bicycle & $3$ \\
Lu et al. \cite{Lu_arXiv_2026_divisor_driven} & generalized bicycle & $2$ \\
Lu et al. \cite{Lu_arXiv_2026_weight8_bb} & bivariate bicycle & $1$ \\
Panteleev--Kalachev \cite{Panteleev_Quantum_2021_bposd} & generalized bicycle & $4$ \\
Mostad et al. \cite{Mostad_arXiv_2025_gb_finite_length} & generalized bicycle & $2$ \\
Rabeti--Mahdavifar \cite{Rabeti_arXiv_2026_univariate_bicycle,Rabeti_arXiv_2025_list_decoding_bicycle} & univariate bicycle & $5$ \\
Voss et al. \cite{Voss_arXiv_2024_multivariate_bicycle} & multivariate bicycle & $1$ \\
Galimova et al. \cite{Galimova_arXiv_2026_trivariate_bicycle} & trivariate bicycle & $1$ \\
Xu et al. \cite{Xu_arXiv_2026_space_group_codes} & space-group (non-abelian) & $6$ \\
Ye et al. \cite{Ye_arXiv_2025_trapped_ion_chains} & trapped-ion chain codes & $1$ \\
Mian et al. \cite{Mian_arXiv_2026_multivariate_multicycle} & multivariate multicycle & $44$ \\
Kovalev \cite{Kovalev_arXiv_2026_floquet_multicycle} & Floquet multicycle & $14$ \\
Khesin--Lu \cite{Khesin_arXiv_2026_mirror_codes} & mirror (non-CSS) & $40$ \\
\midrule
\multicolumn{3}{@{}l}{\emph{Other translation-invariant codes}} \\
Zhang et al. \cite{Zhang_arXiv_2026_routing_codes} & routing codes & $22$ \\
Leroux--Iverson \cite{Leroux_arXiv_2025_romanesco} & Romanesco (non-CSS) & $34$ \\
Steffan et al. \cite{Steffan_arXiv_2025_tile_codes} & tile codes & $3$ \\
Gu et al. \cite{Gu_arXiv_2026_planar_grid} & directional tile codes & $4$ \\
\midrule
\multicolumn{3}{@{}l}{\emph{Lifted-product, pair-partition and high-rate codes}} \\
Lee et al. \cite{Lee_arXiv_2026_ultra_high_rate} & pair-partition and halved & $74$ \\
Okada--Kasai \cite{Okada_arXiv_2026_pair_partition,Okada_arXiv_2026_canonical_logical_bases} & pair-partition & $46$ \\
Okada--Kasai \cite{Okada_arXiv_2026_high_girth_cpm_lifts} & circulant lifts & $8$ \\
Okada--Kasai \cite{Okada_arXiv_2026_two_branch} & finite-field bases & $6$ \\
Hong et al. \cite{Hong_arXiv_2026_rate_fifth} & five-block lifted product & $17$ \\
Bhardwaj et al. \cite{Bhardwaj_arXiv_2026_high_rate_processors} & mitten & $15$ \\
Yang et al. \cite{Yang_arXiv_2026_gala} & GALA (self-dual and rate one-half) & $29$ \\
Lu et al. \cite{LuZ_arXiv_2026_ultra_low_overhead} & quasi-cyclic & $3$ \\
Hirasaki--Lee \cite{Hirasaki_arXiv_2026_lifting_lp} & lifted products (two-block cases) & $1$ \\
Malcolm et al. \cite{Malcolm_arXiv_2025_shyps} & subsystem hypergraph product & $2$ \\
Scruby et al. \cite{Scruby_arXiv_2024_radial_codes} & radial codes & $1$ \\
Xu et al. \cite{Xu_arXiv_2024_homological_product} & homological product & $1$ \\
\midrule
\multicolumn{3}{@{}l}{\emph{Hyperbolic surface and color codes}} \\
Breuckmann et al. \cite{Breuckmann_IEEETIT_2016_hyperbolic_surface_codes,Breuckmann_arXiv_2017_semi_hyperbolic,Breuckmann_thesis_2018_homological_codes} & hyperbolic and semi-hyperbolic surface & $23$ \\
Conrad et al. \cite{Conrad_PhilTransA_2018_stellated_dodecahedron} & star-polyhedron surface & $6$ \\
Vittal et al. \cite{Vittal_MICRO_2024_flag_proxy_networks} & hyperbolic surface and color & $6$ \\
Galimova et al. \cite{Galimova_arXiv_2026_erasure_hyperbolic} & hyperbolic surface & $4$ \\
da Silva--Soares \cite{daSilva_arXiv_2018_hyperbolic_color} & hyperbolic color & $7$ \\
\midrule
total, $47$ papers & & $2{,}611$ \\
\bottomrule
\end{tabular}

\caption{The literature set (v.~8, cut-off 29 September
2026) by source, grouped by the codes they give.  Records: codes of the
paper that passed the selection rules, so a paper's count is not the number
of codes it published; codes found in several papers are counted once.}
\label{tab:literature-sources}
\end{table}

\subsection{Anchors and distance floors}
\label{app:anchors}

Each search was set against an \emph{anchor}: for each check weight and band
of $n$, the literature code of exactly that check weight, with an exact
distance, and with the largest $kd^2/n$ (Table~\ref{tab:anchors}).  The
bands are $n\le130$, $130<n\le170$, $170<n\le220$ and $220<n\le320$.  The
anchor's distance minus two sets the band's \emph{distance floor}: a code of
ours counts as a result in the band only if its distance reaches the floor.
Floors never decrease with $n$, and a band without an anchor inherits the
floor of the band below.  The floor keeps codes of high rate and trivial
distance from counting as progress on a front.

The anchors and floors were fixed on version~6 of the literature set and
were not moved when versions~7 and~8 added records, so that no result changed its
standing after the fact.

\subsection{Census specifications}
\label{app:censuses}

Our censuses fixed a check weight, a split of the
weight between the two blocks, a set of groups from the SmallGroups
library~\cite{GAP4,Besche_IJAC_2002_millennium}, and a target: either a bar on
$kd^2/n$ (\emph{BPT mode}) or a place on the current $(n,k,d)$ front above the
distance floor (\emph{frontier mode}).  No census went beyond $n=320$.  Every census records, for each group
and split, its wall time, its number of worker processes, and every code
that survived its screens with a witness for its distance.

\emph{Exhausted and sampled.}  A census is \emph{exhausted} when it drew every
pair of blocks up to the equivalences of Appendix~\ref{sec:equivalences}, and
its classes come with a proof of completeness (Appendix~\ref{sec:verification}).  Only
exhausted censuses support statements of the form ``no code beats'', and
those statements are made only at the census's bar or floor.  Every other
census is \emph{sampled}: uniform (random pairs of supports containing the
identity), targeted (pairs drawn inside chosen shared-factor ideals, or with prescribed ranks in chosen components, App.~\ref{app:targeted-sampling}), structured (covers of chosen base codes, drawn on purpose), or a top-up (more time on the groups a previous
census found most productive).  For sampled censuses we give coverage as
draws over raw pairs where it can be computed, and claim existence only.

\emph{Compute.}  All censuses ran on the laptops described in the
introduction.  Every census, exhausted or sampled, is listed with its groups, splits, coverage, target and best codes in the supplementary note in the database repository~\dburl.

\subsection{The code database}
\label{app:database}

Besides our codes, the literature set and the coverage records (Sec.~\ref{sec:database}), the database~\dburl{} holds:
\begin{enumerate}
\item for each code of ours, its canonical key and its verification record: which engines and
  independent checks confirm its distance, with dates (Appendix~\ref{sec:verification}).  Verification
  continues after posting, and this record is updated in the database, not in the paper;
\item the fronts and the anchors of this paper, rebuilt from the above by a script;
\item a supplementary note with the statements and proofs of the screens we considered but do not run
  (Appendix~\ref{sec:screens-unused}).
\end{enumerate}

\emph{Selection rules for our codes.}  The database is not a list of every
code our censuses drew.  A code was kept only if it passed the target of its
census and then received an exact distance:
\begin{enumerate}
\item \emph{BPT mode}: a code is kept when
  its $kd^2/n$ exceeds the census's bar.  Bars were set at or somewhat below
  the literature anchor of the band (Sec.~\ref{app:anchors}), for example
  $9$ at weight six, where the anchor is $12$, and $16.2$ at weight eight for
  $n\le100$, where it is $19.2$; so the database also holds codes below the
  anchors, not only codes that beat them.
\item \emph{Frontier mode}: a code is kept when no point of
  the current front of its check weight dominates it, keeping at most $20$
  inequivalent classes with the same $(n,k)$.  The front grew during a run,
  so a code kept early may be dominated by one found later; the fronts of
  this paper are recomputed from the whole database.
\item \emph{Exact distances only}: every stored code has $d_X$ and $d_Z$ exact,
  so its witness has weight exactly $d$.
\item \emph{Rejected codes are not stored as codes.}  Most classes are rejected
  for $k=0$, a disconnected Tanner graph or a block that is a unit, and the
  rest for their distance, by a screen that found a logical below the target.
  For exhausted censuses the coverage records keep one row per class with its
  status, $k$ and bounds on $d$; the distances of rejected classes there are
  upper bounds, often loose, and are labeled as such.  For sampled censuses
  only the counts of draws are kept.
\end{enumerate}

\FloatBarrier
\bibliographystyle{alpha}
\bibliography{references}

\clearpage
\section{Code tables}
\label{app:tables}

In this appendix, we provide the tables of anchor codes (Table~\ref{tab:anchors}), a selection of our codes (Table~\ref{tab:showcase}), and the ten best weight-six codes with $n\le200$ (Table~\ref{tab:w6-top10}). The full specification of every code listed here, including its group with generators and its blocks $a$ and $b$, is available in the database~\dburl.

\begin{table}[H]
\centering
\small
\begin{tabular}{@{}rlrrlcl@{}}
\toprule
& code & $kd^2/n$ & classes & groups & window & in the literature \\
\midrule
1 & $[[192,12,14]]$ & 12.2 & 1 & $C_{3}\rtimes (Q_{8}\rtimes C_{4})$ & -- & \cite{Lin_PRA_2024_2bga} \\
2 & $[[144,12,12]]$ & 12.0 & 1 & $C_{12}\times C_{6}$ & 3 & \cite{Bravyi_Nature_2024_bb_codes} \\
3 & $[[192,16,12]]$ & 12.0 & 1 & $C_{3}\rtimes ((C_{4}\times C_{4})\rtimes C_{2})$ & -- & \cite{Lin_PRA_2024_2bga} \\
4 & $[[200,16,12]]$ & 11.5 & 1 & $C_{10}\times D_{10}$, +1 more & -- & \cite{Lin_PRA_2024_2bga} \\
5 & $[[180,8,16]]$ & 11.4 & 3 & $C_{30}\times C_{3}$, +2 more & 4 & \cite{Liang_PRXQ_2025_twisted_tori,Aydin_arXiv_2026_coset_codes,Lin_PRA_2024_2bga} \\
6 & $[[182,6,18]]$ & 10.7 & 1 & $C_{91}$ & 4 & \cite{Liang_PRXQ_2025_twisted_tori} \\
7 & $[[192,8,16]]$ & 10.7 & 38 & $C_{48}\times C_{2}$, +31 more & 3 & \cite{Liang_PRXQ_2025_twisted_tori} (1 class) \\
8 & $[[186,10,14]]$ & 10.5 & 7 & $C_{93}$ & 3 & \cite{Liang_PRXQ_2025_twisted_tori,Symons_arXiv_2025_bb_covering_graphs,Aydin_arXiv_2026_coset_codes} \\
9 & $[[198,8,16]]$ & 10.3 & 2 & $C_{33}\times C_{3}$ & 4 & \cite{Liang_PRXQ_2025_twisted_tori} (1 class) \\
10 & $[[168,12,12]]$ & 10.3 & 1 & $C_{7}\times A_{4}$ & -- & new \\
\bottomrule
\end{tabular}

\caption{The ten best check-weight-six two-block codes with $n\le200$ by
$kd^2/n$, distinct parameters, all of split $3{+}3$.  The censuses cover every
group with $n=2|G|\le200$ (all abelian groups, all non-abelian groups of order
at most $100$; over $2$-groups an odd block is a unit, so $k=0$) at bar
$kd^2/n>9$, below every entry here.  The $4{+}2$ censuses found nothing above
the bar, and a $5{+}1$ split has a unit block, so $k=0$.  Classes: inequivalent codes with these parameters.  Window as in
Table~\ref{tab:showcase}, for the abelian classes.  In the literature: every paper that gives a code with these parameters; (1 class): only one of the classes is published.}
\label{tab:w6-top10}
\end{table}

\begin{table}[tbp]
\centering
\small
\begin{tabular}{@{}clrrlcr@{}}
\toprule
weight & $n$ band & anchor & $kd^2/n$ & family & 2D & floor \\
\midrule
5 & $\le130$ & $[[60,4,8]]$ & 4.3 & Lin--Pryadko archive \cite{Lin_PRA_2024_2bga} & yes & 6 \\
 & 130--170 & $[[168,4,14]]$ & 4.7 & Lin--Pryadko archive \cite{Lin_PRA_2024_2bga} & yes & 12 \\
 & 170--220 & $[[192,4,16]]$ & 5.3 & Lin--Pryadko archive \cite{Lin_PRA_2024_2bga} & no & 14 \\
 & 220--320 & -- & & none in the literature & & 14 \\
\midrule
6 & $\le130$ & $[[120,8,12]]$ & 9.6 & twisted torus \cite{Liang_PRXQ_2025_twisted_tori} & yes & 10 \\
 & 130--170 & $[[144,12,12]]$ & 12.0 & BB \cite{Bravyi_Nature_2024_bb_codes} & yes & 10 \\
 & 170--220 & $[[192,12,14]]$ & 12.2 & Lin--Pryadko archive \cite{Lin_PRA_2024_2bga} & no & 12 \\
 & 220--320 & $[[254,14,16]]$ & 14.1 & twisted torus \cite{Liang_PRXQ_2025_twisted_tori} & yes & 14 \\
\midrule
7 & $\le130$ & $[[126,10,13]]$ & 13.4 & two-block cover \cite{Aydin_arXiv_2026_coset_codes} & yes & 11 \\
 & 130--170 & $[[144,12,12]]$ & 12.0 & Lin--Pryadko archive \cite{Lin_PRA_2024_2bga} & yes & 11 \\
 & 170--220 & $[[196,12,17]]$ & 17.7 & Lin--Pryadko archive \cite{Lin_PRA_2024_2bga} & yes & 15 \\
 & 220--320 & $[[288,16,18]]$ & 18.0 & Qian-Li multi-agent BB \cite{Qian_arXiv_2026_multiagent_qldpc} & yes & 16 \\
\midrule
8 & $\le130$ & $[[126,20,11]]$ & 19.2 & two-block (base of a cover) \cite{Aydin_arXiv_2026_coset_codes} & yes & 9 \\
 & 130--170 & $[[168,20,14]]$ & 23.3 & two-block cover \cite{Aydin_arXiv_2026_coset_codes} & yes & 12 \\
 & 170--220 & $[[204,22,17]]$ & 31.2 & coset codes \cite{Aydin_arXiv_2026_coset_codes} & no & 15 \\
 & 220--320 & $[[232,62,12]]$ & 38.5 & CPM pair-partition \cite{Okada_arXiv_2026_pair_partition} & no & 15 \\
\midrule
9 & $\le130$ & $[[100,20,8]]$ & 12.8 & ZSZ-LP five-block \cite{Hong_arXiv_2026_rate_fifth} & no & 6 \\
 & 130--170 & $[[150,30,10]]$ & 20.0 & mitten \cite{Bhardwaj_arXiv_2026_high_rate_processors} & no & 8 \\
 & 170--220 & $[[200,40,12]]$ & 28.8 & mitten \cite{Bhardwaj_arXiv_2026_high_rate_processors} & no & 10 \\
 & 220--320 & $[[300,60,14]]$ & 39.2 & mitten \cite{Bhardwaj_arXiv_2026_high_rate_processors} & no & 12 \\
\bottomrule
\end{tabular}

\caption{The literature anchors: for each check weight and $n$ band,
the literature CSS code of that exact weight with a certain distance and the
largest $kd^2/n$, against which our searches were set.  Floor: the distance a
code needs to count as a result in that band, the anchor's distance minus two,
never lower than in a smaller band (a band without an anchor inherits it).
2D: some code with these parameters is a two-block code over an abelian group
with at most two generators.  Fourteen of the nineteen anchors are two-block
codes; the exceptions are a pair-partition code at weight eight and the
lifted-product codes that held every weight-nine band.}
\label{tab:anchors}
\end{table}

\begin{table*}[tbp]
\centering
\small
\begin{tabular}{@{}llrllccr@{}}
\toprule
$n$ band & code & $kd^2/n$ & group & split & 2D & window & anchor code \\
\midrule
\multicolumn{8}{@{}l}{\emph{check weight 5}} \\
$\le130$ & $[[114,4,11]]$ & 4.2 & $C_{57}$ & 3+2 & yes & 3 & $[[60,4,8]]$ (4.3) \\
130--170 & $[[150,4,13]]$ & 4.5 & $C_{75}$ & 3+2 & yes & 3 & $[[168,4,14]]$ (4.7) \\
170--220 & $[[216,4,16]]$ & 4.7 & $C_{108}$ & 2+3 & yes & 4 & $[[192,4,16]]$ (5.3) \\
220--320 & $[[288,4,20]]$ & 5.6 & $C_{24}\times S_{3}$ & 3+2 & no & -- & -- \\
\midrule
\multicolumn{8}{@{}l}{\emph{check weight 7}} \\
$\le130$ & $[[126,12,12]]$ & 13.7 & $C_{63}$ & 4+3 & yes & 4 & $[[126,10,13]]$ (13.4) \\
130--170 & $[[168,12,15]]$ & 16.1 & $C_{84}$ & 4+3 & yes & 4 & $[[144,12,12]]$ (12.0) \\
170--220 & $[[180,12,16]]$ & 17.1 & $C_{30}\times C_{3}$ & 4+3 & yes & 5 & $[[196,12,17]]$ (17.7) \\
220--320 & $[[252,16,20]]$ & 25.4 & $C_{7}\times D_{18}$ & 4+3 & no & -- & $[[288,16,18]]$ (18.0) \\
 & $[[294,18,20]]$ & 24.5 & $(C_{7}\times C_{7})\rtimes C_{3}$ & 4+3 & no & -- &  \\
\midrule
\multicolumn{8}{@{}l}{\emph{check weight 8}} \\
$\le130$ & $[[120,18,12]]$ & 21.6 & $C_{30}\times C_{2}$ & 4+4 & yes & 5 & $[[126,20,11]]$ (19.2) \\
130--170 & $[[170,18,16]]$ & 27.1 & $C_{85}$ & 4+4 & yes & 5 & $[[168,20,14]]$ (23.3) \\
170--220 & $[[180,20,15]]$ & 25.0 & $C_{15}\times S_{3}$ & 4+4 & no & -- & $[[204,22,17]]$ (31.2) \\
220--320 & $[[292,22,23]]$ & 39.9 & $C_{146}$ & 4+4 & yes & 6 & $[[232,62,12]]$ (38.5) \\
\midrule
\multicolumn{8}{@{}l}{\emph{check weight 9}} \\
$\le130$ & $[[102,20,10]]$ & 19.6 & $C_{51}$ & 5+4 & yes & 5 & $[[100,20,8]]$ (12.8) \\
 & $[[126,16,14]]$ & 24.9 & $C_{63}$ & 5+4 & yes & 5 &  \\
130--170 & $[[146,18,15]]$ & 27.7 & $C_{73}$ & 5+4 & yes & 5 & $[[150,30,10]]$ (20.0) \\
 & $[[150,24,13]]$ & 27.0 & $C_{15}\times C_{5}$ & 5+4 & yes & 5 &  \\
 & $[[160,24,14]]$ & 29.4 & $C_{20}\times C_{4}$ & 5+4 & yes & 5 &  \\
170--220 & $[[186,20,18]]$ & 34.8 & $C_{93}$ & 5+4 & yes & 5 & $[[200,40,12]]$ (28.8) \\
 & $[[210,28,16]]$ & 34.1 & $C_{105}$ & 5+4 & yes & 6 &  \\
220--320 & $[[294,24,21]]$ & 36.0 & $(C_{7}\times C_{7})\rtimes C_{3}$ & 6+3 & no & -- & $[[300,60,14]]$ (39.2) \\
\bottomrule
\end{tabular}

\caption{A selection of our codes.  For each check weight and $n$ band: our
code on the CSS front (or the strict-2D front) with the largest $kd^2/n$ among
those whose distance clears the band's floor, and the codes named in the text.
Group: a group with that code (for 2D codes, the abelian group of the smallest
window); split: the weights of the two blocks.  2D: some code of ours with
these parameters is a two-block code over an abelian group with at most two
generators.  Window: the smallest $w$ such that every check acts within a
$w\times w$ block of sites of a twisted torus, minimized over all our codes
with these parameters (--: no 2D layout; n/c: not computed).  Every distance
of ours is exact, with a witness.  Anchor code: the band's anchor (Table~\ref{tab:anchors}), the literature
code of the same check weight with a proven distance and the largest $kd^2/n$, with $kd^2/n$ in
brackets; --: no anchor in the band.}
\label{tab:showcase}
\end{table*}

\clearpage
\section{Supplementary figures}
\label{app:figures}

This appendix collects figures that support Sec.~\ref{sec:results} but are not needed to follow it.  All of them compare CSS codes of exact check weight with $n\le320$, ours at exact distances, with the literature set of App.~\ref{app:literature}.

\noindent Contents of this appendix:
\begin{itemize}[leftmargin=14pt,itemsep=1pt,topsep=2pt]
\item \ref{app:pareto}, Pareto fronts by check weight: weight nine (Fig.~\ref{fig:w9-frontier}), five and six (Fig.~\ref{fig:w5}), seven and eight (Fig.~\ref{fig:w7-w8}).
\item \ref{app:dvn56}, Distance against length at check weights five, six and eight (Fig.~\ref{fig:dvn-app}).
\item \ref{app:fairness}, Other figures of merit: $kd/n$ (Fig.~\ref{fig:headline-kdn}) and the overhead $n/k$ against $d$ (Fig.~\ref{fig:overhead}).
\item \ref{app:kd2n}, The figure of merit $kd^2/n$ by check weight: weights five to eight (Fig.~\ref{fig:kd2n-w56}) and nine (Fig.~\ref{fig:kd2n-w9}).
\end{itemize}

\clearpage
\subsection{Pareto fronts by check weight}
\label{app:pareto}

The $(n,k,d)$ Pareto fronts of each check weight, in four length bands and drawn as distance against rate.  Each panel shows the codes, ours and published, at the corners of the front of its band, and how the front moves when published upper bounds on the distance are taken at face value (dashed); codes that no other code dominates but that sit below a corner are left out.

\begin{figure}[H]
\centering
\includegraphics[width=\linewidth]{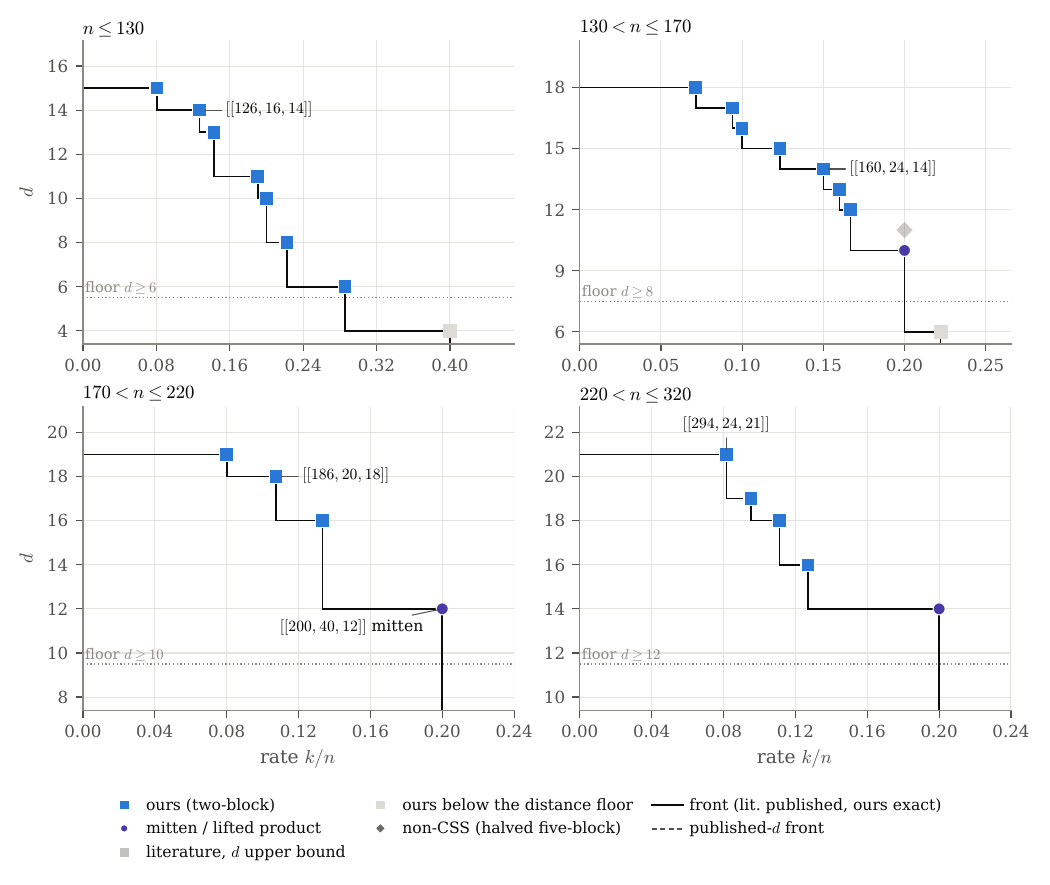}
\caption{Weight-nine CSS codes: $(n,k,d)$ Pareto fronts in four length bands,
distance against rate.  Only the codes at the corners of each front are drawn.  Our codes are blue squares, and only codes above the bar of their census are included (App.~\ref{app:database}).  Literature codes are colored by family: two-block codes are squares, other CSS codes circles, and non-CSS codes gray diamonds, lighter where the distance is only an upper bound.
Solid staircase: the front with literature codes compared at their
published distances and ours at their exact ones (only an exact
distance can remove one of our codes); dashed: every published distance taken
at face value.}
\label{fig:w9-frontier}
\end{figure}

\begin{figure}[H]
\centering
\includegraphics[width=\linewidth]{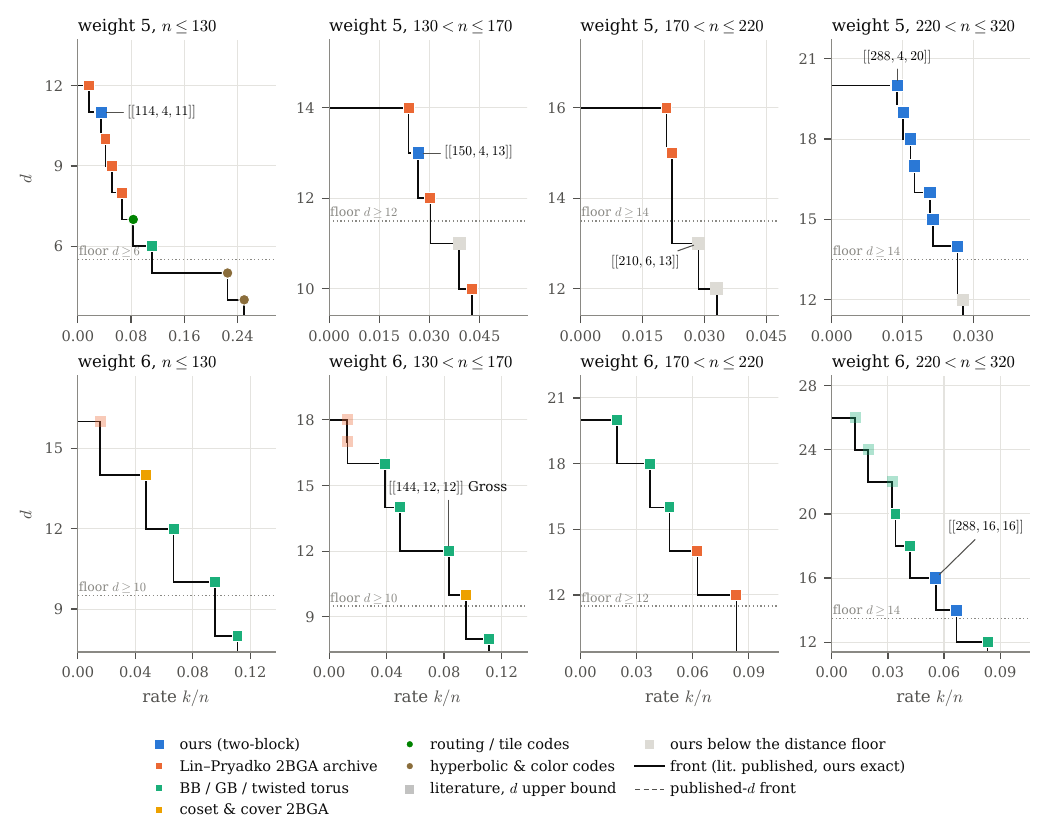}
\caption{Weight-five (top) and weight-six (bottom) CSS codes: Pareto fronts by length band, as in
Figure~\ref{fig:w9-frontier}.}
\label{fig:w5}\label{fig:w6-pareto}
\end{figure}

\begin{figure}[H]
\centering
\includegraphics[width=\linewidth]{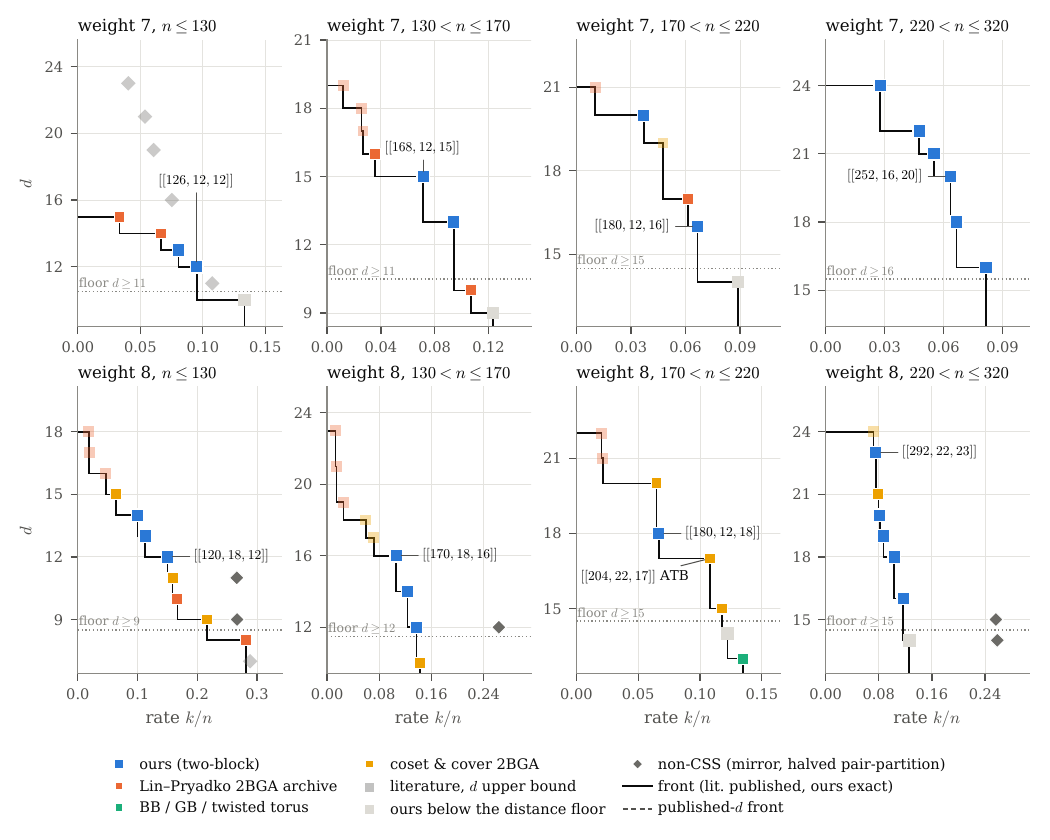}
\caption{Weight-seven (top) and weight-eight (bottom) CSS codes: Pareto fronts by length band, as in
Figure~\ref{fig:w9-frontier}.}
\label{fig:w7-w8}\label{fig:w8-pareto}
\end{figure}

\clearpage
\subsection{Distance against length at check weights five to eight}
\label{app:dvn56}

The view of Fig.~\ref{fig:dvn} for the remaining check weights: at each $n$, the best distance available with at least $K$ logical qubits, with our codes included.

\begin{figure}[H]
\centering
\includegraphics[width=0.55\linewidth]{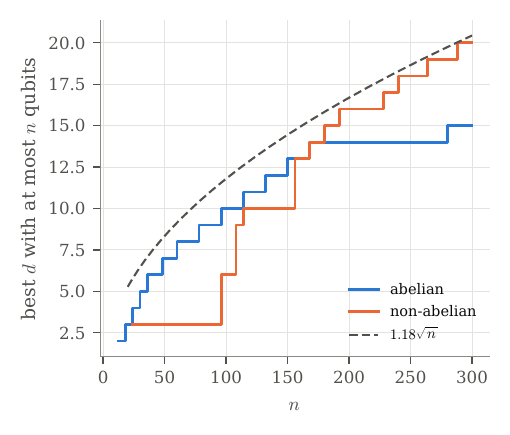}
\caption{Check weight five: the best distance with at most $n$ physical qubits and at least four logical qubits, over abelian and over non-abelian groups (codes of this work; the weight-five census is exhaustive for $n\le300$), against $d=1.18\sqrt{n}$, which no code exceeds (Sec.~\ref{sec:where}).}
\label{fig:w5-sqrt}
\end{figure}

\begin{figure}[H]
\centering
\includegraphics[width=0.6\linewidth]{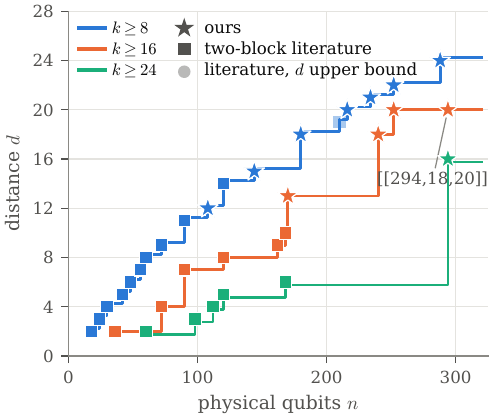}
\caption{Check weight seven: the best distance with at most $n$ physical qubits and at least $K$ logical qubits (CSS, $n\le320$); conventions as in Fig.~\ref{fig:dvn}.}
\label{fig:dvn-w7}
\end{figure}

\begin{figure}[H]
\centering
\includegraphics[width=\linewidth]{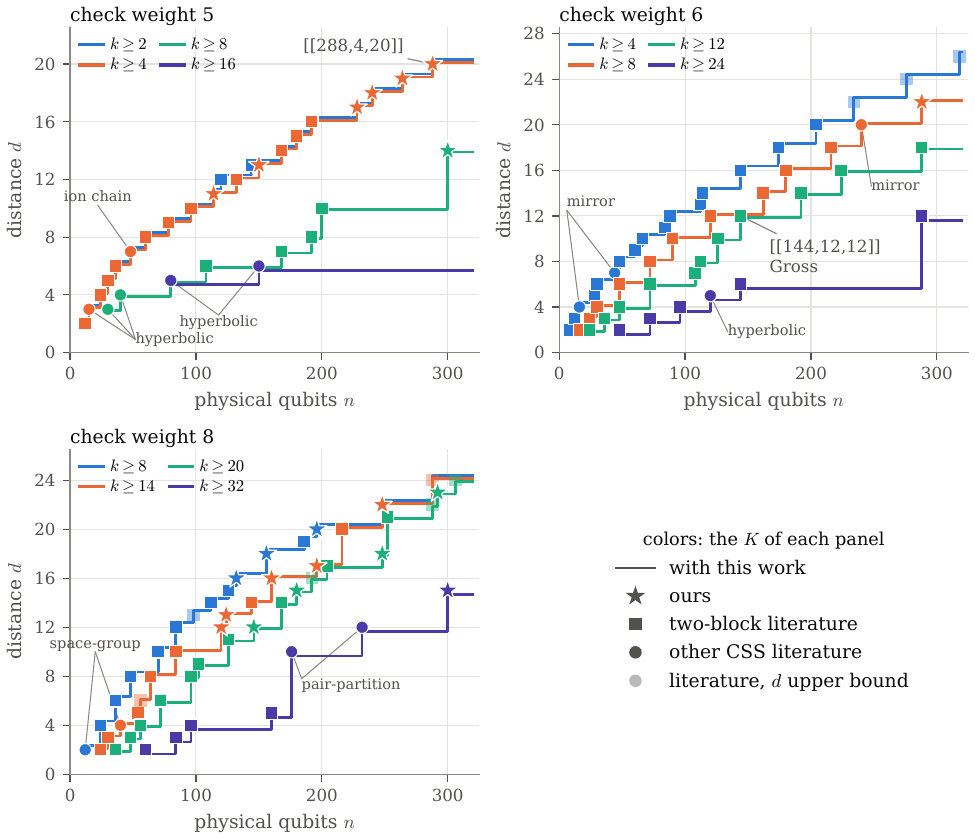}
\caption{Check weights five, six and eight: the best distance with at most $n$
physical qubits and at least $K$ logical qubits (CSS, $n\le320$), with this work;
conventions as in Fig.~\ref{fig:dvn}.  The values
of $K$ differ by panel and are given in each.}
\label{fig:dvn-app}\label{fig:dvn-w5}\label{fig:dvn-w6}\label{fig:dvn-w8}
\end{figure}

\subsection{Exhaustive landscapes}
\label{app:landscapes}

\begin{figure}[H]
\centering
\includegraphics[width=\linewidth]{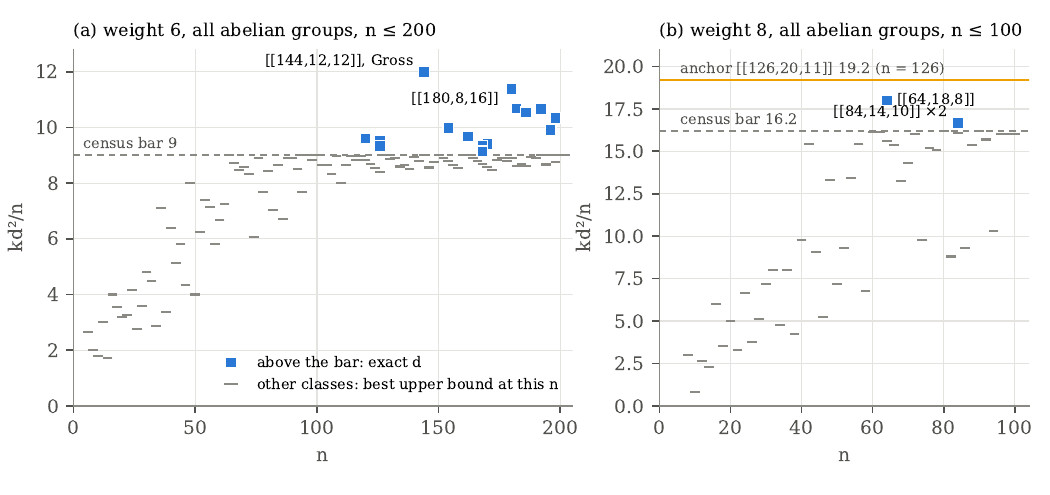}
\caption{Exhaustive landscapes over all abelian groups.  (a) Weight six,
$n\le200$; (b) weight eight, $n\le100$.  Squares are codes above the census bar
with exact distance; dashes give, at each $n$, the best upper bound over all
other classes, each witnessed by an explicit logical; the orange line in (b) is the anchor of the band $n\le130$, $[[126,20,11]]$ at $19.2$ (Table~\ref{tab:anchors}).  The weight-six census continues to $n\le300$ and over non-abelian groups (Sec.~\ref{sec:exhausted}); the panels show its abelian part at $n\le200$.}
\label{fig:landscapes}
\end{figure}

\clearpage

\clearpage
\subsection{Other figures of merit}
\label{app:fairness}

Our searches optimized $kd^2/n$, which weights distance heavily.  These two figures repeat the comparison of Fig.~\ref{fig:headline} under measures that weight the rate more: $kd/n$, and the overhead $n/k$ needed to reach a given distance.

\begin{figure}[H]
\centering
\includegraphics[width=0.95\linewidth]{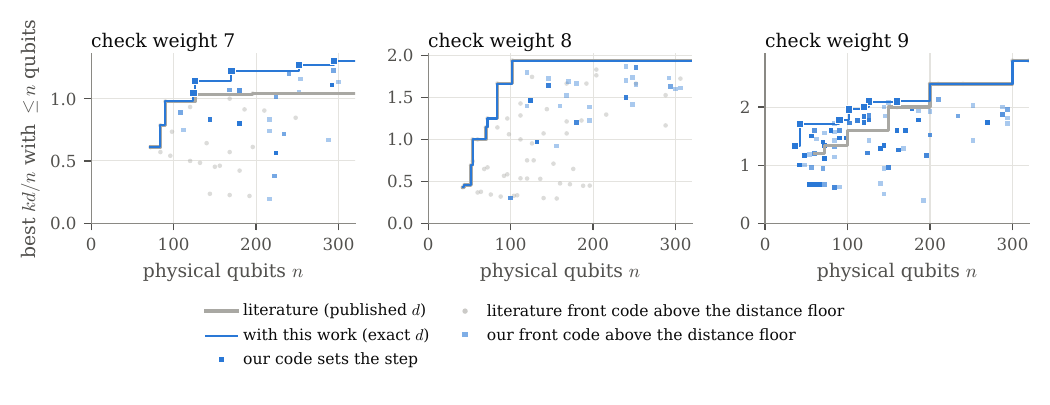}
\caption{As Fig.~\ref{fig:headline}, with $kd/n$ in place of $kd^2/n$.}
\label{fig:headline-kdn}
\end{figure}

\begin{figure}[H]
\centering
\includegraphics[width=0.95\linewidth]{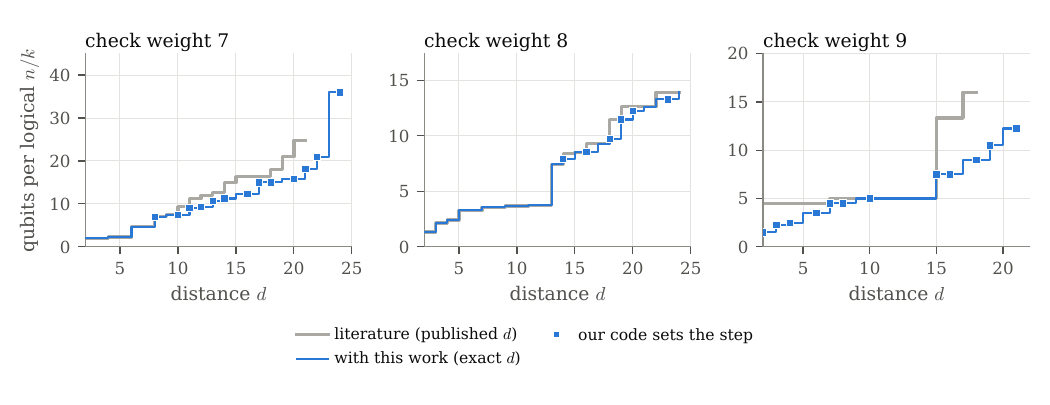}
\caption{Qubits per logical qubit $n/k$ against distance: for each $d$, the
least $n/k$ over front codes with distance at least $d$ ($n\le320$), for the
literature alone (gray) and with this work (blue).  This view discards $n$ itself.}
\label{fig:overhead}
\end{figure}

\clearpage
\subsection{The figure of merit $kd^2/n$ by check weight}
\label{app:kd2n}

$kd^2/n$ against $n$ at each check weight: the best codes at each $n$, and the best code of each family in bands of $40$ qubits.  These are companion views of the fronts in Appendix~\ref{app:pareto}.

\begin{figure}[H]
\centering
\includegraphics[width=\linewidth]{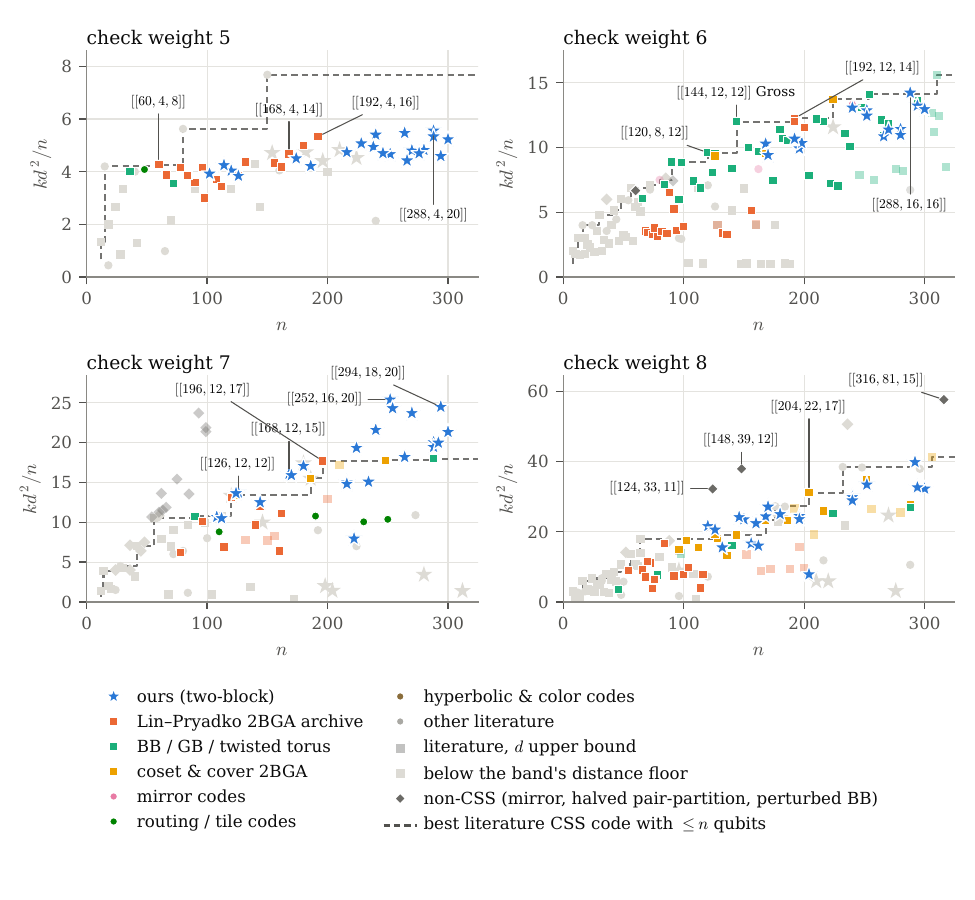}
\caption{$kd^2/n$ against $n$ at check weights five to eight,
codes of exact check weight with $n\le320$.  At each $n$ the figure shows the
code with the highest $kd^2/n$ and up to two others within $5\%$ of it, and
in each band of $40$ qubits the best code of each class (ours, and each
literature family).  Our codes are stars, and only codes above the bar of their census are included (App.~\ref{app:database}); literature codes with an
upper-bound distance are translucent; non-CSS codes are gray diamonds; the best codes are labeled with their $[[n,k,d]]$; dashed: the best
literature CSS code with at most $n$ qubits.}
\label{fig:kd2n-w56}\label{fig:kd2n-w78}
\end{figure}

\begin{figure}[H]
\centering
\includegraphics[width=0.85\linewidth]{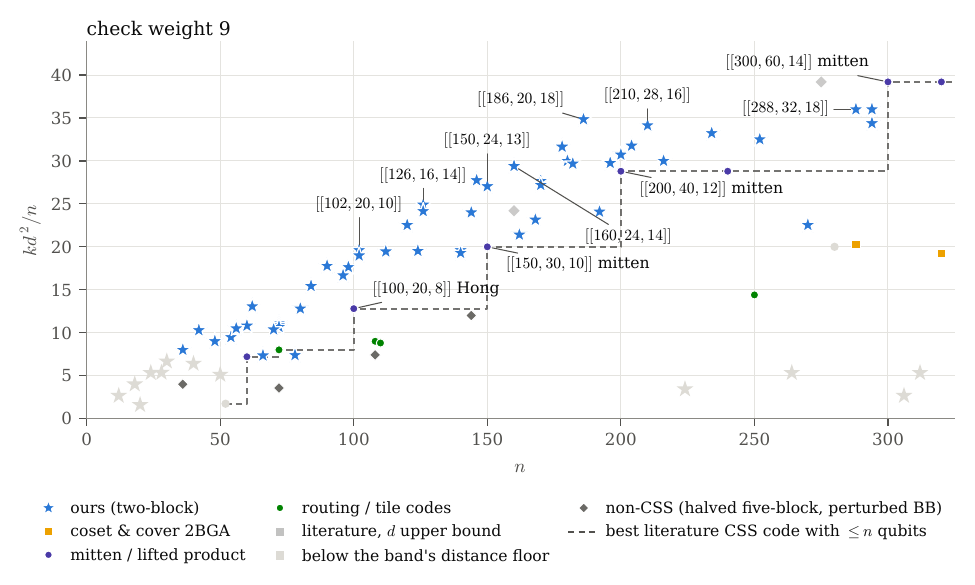}
\caption{$kd^2/n$ against $n$ at check weight nine, as in
Figure~\ref{fig:kd2n-w56}.}
\label{fig:kd2n-w9}
\end{figure}

\end{document}